\newcommand{\rr}{\mathbb{R}}
\newcommand{\nn}{\mathbb{N}}
\newcommand{\prob}[1]{\mathbb{P}\left[ #1 \right]}
\newcommand{\expect}[1]{\mathbb{E}\left[ #1 \right]}
\DeclareMathOperator*{\argmin}{arg\,min}
\DeclareMathOperator*{\argmax}{arg\,max}
\newcommand{\boxlem}[1]{
\begin{framed}
\setlength{\topsep}{1pt}
\begin{lemma}
\normalfont #1
\end{lemma}
\end{framed}
}
\newtheorem{theorem}{Theorem}
\newtheorem{lemma}[theorem]{Lemma}
\newtheorem{corollary}[theorem]{Corollary}
\newtheorem{example}[theorem]{Example}
\newtheorem{prop}[theorem]{Proposition}
\newtheorem{definition}[theorem]{Definition}
\newtheorem{shortexercise}[theorem]{Short Exercise}
\newtheorem{exercise}[theorem]{Exercise}
\newcommand{\calC}{\mathcal{C}}
\newcommand{\calI}{\mathcal{I}}
\newcommand{\calP}{\mathcal{P}}
\newcommand{\calQ}{\mathcal{Q}}
\newcommand{\calV}{\mathcal{V}}
\newcommand{\bbc}{\mathbf{c}}
\newcommand{\bbq}{\mathbf{q}}
\newcommand{\bbs}{\mathbf{s}}
\newcommand{\bbv}{\mathbf{v}}
\newcommand{\bbx}{\mathbf{x}}
\newcommand{\bby}{\mathbf{y}}
\newcommand{\bbz}{\mathbf{z}}
\newcommand{\bphi}{\boldsymbol{\phi}}
\newcommand{\creators}{C}
\newcommand{\hide}[1]{}
\newcommand{\inner}[2]{\left\langle ~#1 ~,~ #2~\right\rangle}
\title{Dynamics in Two-Sided Attention Markets: \\Objective, Optimization, and Control}
\author{Haiqing Zhu\footnote{Email addresses: \{haiqing.zhu, yunkuen.cheung, lexing.xie\}@anu.edu.au }}
\author{Yun Kuen Cheung}
\author{Lexing Xie}
\affil{School of Computing, Australian National University}
\date{August 2025}
\begin{document}
\maketitle

\begin{abstract}
With most content distributed online and mediated by platforms, there is a pressing need to understand %
the ecosystem of content creation and consumption.
A considerable body of recent work shed light on the one-sided market on creator-platform or user-platform interactions, showing key properties of static (Nash) equilibria and online learning.
In this work, we examine the {\it two-sided} market including the platform and both users and creators. 
We design a potential function for the coupled interactions among users, platform and creators.
We show that such coupling of creators' best-response dynamics with users' multilogit choices is equivalent to mirror descent on this potential function. 
Furthermore, a range of platform ranking strategies correspond to a family of potential functions, and the dynamics of two-sided interactions still correspond to mirror descent. We also provide new local convergence result for mirror descent in non-convex functions, which could be of independent interest.
Our results provide a theoretical foundation for explaining the diverse outcomes observed in attention markets.

\end{abstract}

\section{Introduction}
We are concerned with {\it attention markets} of online content -- where 
the key resource being allocated is users' attention (and time). 
This market is {\it two-sided}. On one-side, a platform distributes content to users, controlling content {\it visibility} via recommender systems; on the other side, the platform syndicates content from creators and distributes reward back to them. Feedback loops tie the two sides together. 
Our main concern and contribution lies in an optimization view of such two-sided interactions: 
is there a global objective, do myopic updates minimise any objective, and 
do different recommendation strategies change the overall optimisation problem? 

The answers to all three question are affirmative, but we shall first %
relate our questions to what has been asked about
markets.
In {\it attention markets}, we focus on the allocation of users' time and creators' effort, neither of which is quantified in monetary terms but is the key driver to most monetary incentives. This is distinct from the large body of work on advertising-driven attention~\cite{varian2009online}, or charging schemes on either side~\cite{Vogelsang2010DynamicsOT}.
In economics, markets in which platforms act as intermediaries between two groups of agents are known as \emph{two-sided markets}\footnote{These markets are different from the ``two-sided markets'' in bipartite matching context.}.
Our concern in {\it two-sided markets} is a special case with one platform mediating a large number of players on either side, and hence distinct from known literature on platform competition~\cite{RochetTirole-twosided,Armstrong-twosided}.
Furthermore, price is not the key mediator in our market, but {\it bounded attention} is. We assume that the total attention available is bounded rather than infinite~\cite{GhoshHummel2014}, which is arguably closer to the original notion of attention markets~\cite{simon1971designing}. In comparison, there is {\it unbounded supply} of each digital good, assuming that each can be copied and consumed by many people at negligible additional cost.

\Cref{fig:feedback-loop} contains a high-level overview of the two-sided attention market (detail in \Cref{sec:model}). We take an optimization perspective
to examine the intertwined dynamics 
of user-platform and creator-platform interactions.
In our model, users respond to the {\it visibility}, {\it quality} and {\it popularity} of items. Aggregating user choices yields {\it popularity} signal across all items.
Creators are rewarded proportional to the {\it popularity} of items they create, and respond with a choice in item {\it quality} sensitive to cost of production. Platforms control item {\it visibility}, while relaying {\it quality} and {\it popularity} signals to both users and creators.
Prior research largely focused on one side of the market, such as user-platform side~\cite{zhu2023stability} or creator-platform side~\cite{benporat2018game,yao2023monotone,jagadeesan2023supplysideeqm}.
Our user model follows a trial-offer market dynamic~\cite{krumme2012quantifying} inspired by the Music Lab experiment~\cite{salganik2006experimental}. 
Content quality in this work assumes a scalar value; we leave vector-valued quality, such as those produced by personalisation and recommender systems~\cite{mcauley2022personalized}, as future work. 
Our creator reward vs cost model is inspired by \citet{GhoshHummel2014} except assuming bounded total reward. It affords the most general form of production cost, rather than assuming a particular functional form such as~\cite{jagadeesan2023supplysideeqm,yao2023-topK}. 
Game-theoretic analysis~\cite{benporat2018game,yao2023-topK} have provided useful insights including the qualities of equilibria and their convergence dynamics under no-regret learning. 
But our quest for an overall optimization problem provides additional benefits on predicting the long-term dynamics,
which in turn could inform the design of effective policy interventions that promote a healthy ecosystem, reduce the spread of undesirable content, and maintain fair income distributions among creators.

\begin{figure}[th!]
\includegraphics[width=1.0\linewidth]{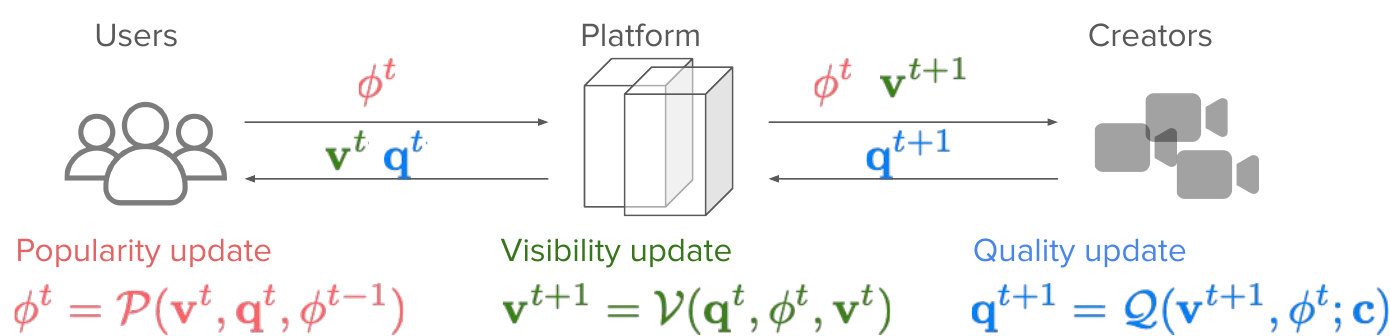}
\caption{An overview of two-sided attention market. Users chose content based on content {\it visibility} and {\it quality} to produce {\it popularity} signals. The platform updates {\it visibility} and pass along {\it popularity} to creators, which in turn drives creators to update content {\it quality}. Notations correspond to those in \Cref{sec:model}.}
\label{fig:feedback-loop}\vspace*{-0.1in}
\end{figure}

Our first result is uncovering a potential function underlying the two-sided attention market (\Cref{eq: interpretation-of-potential-0}), which is a combination of expected log-utility, cost of production, entropy, and the alignment between attention and visibility. This potential function is similar in spirit to those found in a potential game in that it expresses a global incentive for all parties to take action. 
We identify a transformation to express the function in a lower-dimensional space of user actions, rather than the joint action space of all players (\Cref{subsec: results-equiv-to-MD}).
The second result shows that
user and creator update dynamics correspond to
mirror descent, a fundamental optimization algorithm, on the potential function (\Cref{thm: dynamic-equiv-to-MD}).
Finally, we show that when the platform vary its
ranking strategies combining quality and popularity, there is a family of potential functions on which mirror descent (with momentum) still describes the corresponding response dynamic by users and creators (\Cref{subsec: result-recommendation-policy}).
To understand market dynamics governed by the non-convex potential functions, we complement the existing mirror descent toolbox by providing local convergence result (\Cref{thm: mirror-descent-convergence-non-convex-detailed}), which may be of independent interest.
\begin{figure*}[bt]
  \centering
 \includegraphics[width=1.\textwidth]{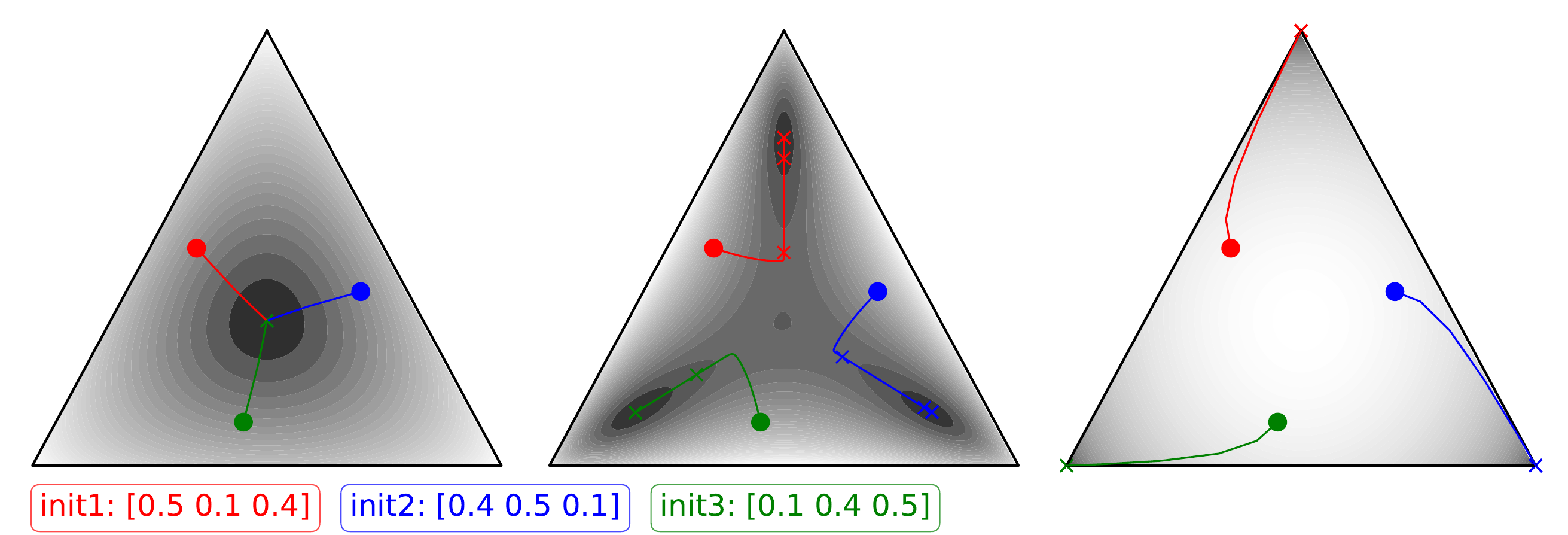}%
    \caption{Illustration of three instances of two-sided market dynamics. 
    Each triangle represents the projection of the probability simplex of a 3-item market in $\rr^3$. The contours within each simplex represent the values of the potential function of that market, where darker colours indicate smaller function values. The trajectories demonstrate the two-sided market dynamics with three different initialisations, where ``$\bullet$'' mark the initial points.
    }
    \label{fig:teaser}
\end{figure*}

The potential function is non-convex in general, meaning the market dynamics can reach different local minimisers, which correspond to equilibria with different properties. \Cref{fig:teaser} demonstrates the market dynamics under three instances of the markets. The contour plots represent the value of the corresponding potential functions defined in \Cref{eq: potential-function-0}. From left to right, the social influence factors $r = 0.1, 0.38$ and $0.5$. Similar landscapes of potential functions could also be equivalently generated by changing the costs of production $\bbc$ or implementing appropriate recommendation policies (cf. \Cref{subsec: rs-policy}). As shown in \Cref{fig:teaser}, though initialised the same, under different market instances, the market dynamics may converge to a unique minimiser, settle at different local minimisers, or move toward the boundary.

\subsubsection*{\textbf{Related Work.}} The attention markets of online media have garnered significant interest in recent years. Given the competitive nature among content creators, numerous game-theoretic analyses were performed on the creator-platform side of the markets, by considering strategy spaces such as content type, quality level, and clickbait use. Many studies have focused on Nash equilibria, coarse correlated equilibria and Shapley values of the games, and analyzed their efficiency (e.g., price of anarchy), fairness, and overall content quality.~\cite{immorlica2024clickbait,GhoshHummel2014,hron2023creatorincentive,jagadeesan2023supplysideeqm,yao2023-topK,yao2024userwelfareopt} Some of these games have been shown to be potential games, where best-response, better-response or no-regret learning dynamics converge to equilibria.~\cite{benbasat2017game,benporat2018game,benporat2020betterresponse,benporat2019convergence,yao2023monotone}

\citet{maldonado2018popularity} used a multinomial logit model to capture the interactions between recommender systems and users, and showed that the resultant dynamics converge to unique equilibrium. \citet{zhu2023stability} further demonstrated that these dynamics are equivalent to mirror descent on a convex function, which can be interpreted as an interpolation between market efficiency and diversity. This suggests that the market is implicitly performing meaningful optimization, echoing the famous \emph{invisible hand} insight of Adam \citet{smith1776wealth}. Indeed, a growing body of research at the intersection of computer science and economics has uncovered connections between natural market dynamics (e.g., proportional response, t\^atonnement) and optimization processes, strengthening the invisible hand insight.~\cite{cheung2018dynamics,birnbaum2011distributed,BranzeiDevanurRabani2021,CCD2020,gao2020first,kolumbus2023asynchronous,BranzeiMehtaNisan2018NeurIPS,CheungCole2018async} Our results provide a strong generalization of \cite{zhu2023stability}, by integrating dynamics on both sides of the markets, and identifying potential functions that capture the dynamics.

While some research has studied the dynamics on either side of attention markets, the study of full two-sided market dynamics is still in infancy.~\citet{dean2024recommender} proposed a framework for dynamical systems in two-sided attention markets, where our model fits.
They provided a comprehensive overview of key modelling aspects, and highlighted an important research goal in this area is to understand
the role of recommender systems in shifting viewer preferences~\cite{liu2023field} and shaping the content landscape~\cite{meyerson2012youtube}.
To the best of our knowledge, our work is the first to present a formal theoretical analysis of two-sided market dynamics.

This work substantially generalises the stability framework of attention markets developed by \cite{zhu2023stability}. Their analysis can be recovered as a special case of our model: it corresponds to creators repeatedly producing content of fixed quality without incurring production costs, under a platform that deploys a constant recommender system policy. Motivated by recent supply-side perspectives \cite{jagadeesan2023supplysideeqm,yao2023-topK,yao2024userwelfareopt,hron2023creatorincentive}, we endow creators with explicit utility functions and allow for the implementation of a range of recommendation policies. These additions are timely but analytically consequential: the resulting potential function is generally nonconvex, breaking the standard convex-optimisation correspondence used to study market dynamics—e.g., \cite{zhu2023stability} for attention markets and \cite{cheung2018dynamics,CCD2020,birnbaum2011distributed} for Fisher markets. Nonconvexity, in turn, yields local convergence behaviour and makes stability sensitive to step sizes, initialisations, and local landscape features determined by creators’ costs and the recommendation policy, a dependence we characterise precisely via our customised potential in \Cref{thm: customised-potential-function}.

\section{Model: Two-sided Attention Market}\label{sec:model}

A two-sided attention market comprises of three groups of agents: a set of content creators denoted by $\creators$, users,
and the platform's recommender system (RS) mediating in between. The market dynamics proceeds by \emph{epochs} $t=0,1,2,\ldots$.
In each epoch $t$, 
\begin{itemize}
\item Each creator $j\in \creators$ posts a media item of quality level $q_j^t\in [0,1]$ to the platform.
As noted in~\cite{GhoshHummel2014}, the quality level can be interpreted as the probability that the media item is purchased by a user after trial.
We will use the index $j$ to refer to either the creator or their media item depending on the context.
\item The RS determines a visibility parameter $v_j^t \in [0,1]$ for each creator $j$. The visibility parameters are used to compute the probabilistic recommendations of media items to users.
\item Users visit the platform and receive recommendations from the RS. Each user chooses a media item for trial.
If the user likes the item, they will purchase it. In the \textsc{MusicLab} setting, the purchase is solely determined by the quality level of the item.~\cite{maldonado2018popularity}
The purchases change popularities of items, and also revenues of creators. Let $\phi_j^t$ denote the fraction of attention allocated to creator $j$ at the end of this epoch, henceforth called {\em popularity}.
\item At the end of the epoch, the creators and the RS observe the popularities and revenues, and then renew their quality levels and visibility parameters respectively.
\end{itemize}
In the rest of this section, we discuss the motivations and mathematical formulations of the updates by creators, RS and users. \Cref{app:update-summary} provides a concise summary of the mathematical specifications of the dynamical systems, which you may find it useful when reading our technical discussions in later sections.

\subsection{General Form of Dynamical Systems in Two-Sided Attention Markets}

Each creator $j$ incurs a cost $c_j(q_j^t)$ when creating a media item of quality level $q_j^t$.
The cost function $c_j: [0,1] \rightarrow \rr^+$ is increasing, with $c_j(0) = 0$. The creator's objective is to maximize profit, i.e., revenue minus cost.

At each epoch, the popularities, visibility parameters and quality levels  %
are updated by the users, RS and creators respectively as follows: %
\begin{itemize}
    \item \textbf{Popularity update}: In each epoch, the users visit the platform and interact with the items which redistribute the popularity signals. Let $\calP$ denote the operator describing the user-item interactions:
    \begin{equation}
        \bphi^{t} = \calP(\bbv^{t}, \bbq^{t}, \bphi^{t-1}). \label{eq: popularity-update-rule-general}
    \end{equation}

    \item \textbf{Visibility update}: The RS adjusts the visibilities of the items based on the items' popularities and qualities. Let $\calV$ denote the operator describing platform's recommendation strategy:
    \begin{equation}
         \bbv^{t+1} = \calV(\bbq^{t}, \bphi^{t}, \bbv^{t}). \label{eq: visibility-update-rule-general}
    \end{equation}

    \item \textbf{Quality update}: The creators update the qualities of their items by considering their potential profit given the visibilities, popularities and the costs of production. Let $\calQ$ denote the operator describing the update:
    \begin{equation}
        \bbq^{t+1} = \calQ(\bbv^{t+1},\bphi^{t}; \bbc), \label{eq: quality-update-rule-general}
    \end{equation}
    where cost function $\bbc := \{c_j:~ j\in \creators\}$ does not change over time.  
\end{itemize}
Please refer to \Cref{fig:feedback-loop} for a visualization of the feedback loop in the dynamical systems described above.
In the subsequent subsections, we introduce the specific realizations of these updates examined in this paper. In \Cref{subsec: trial-offer-market}, we present the trial-offer market dynamics --- a well-established model of user-platform interactions that defines $\calP$~\cite{maldonado2018popularity,zhu2023stability}. In \Cref{subsec: producer-best-respond}, we outline the creators’ best-response behaviours, which yields a realization for 
$\calQ$. In \Cref{subsec: rs-policy}, we discuss potential strategies for $\calV$ that the platform may employ.

\subsection{User-Platform Interaction: Trial-Offer Market with Social Influence} \label{subsec: trial-offer-market}
Let $t\in \nn$ denote the index of a epoch, and $\tau \in \nn$ denote the time-step within a epoch $t$.
We consider two types of user behaviours, which correspond to different responding speeds of the creators. 

\paragraph{Stochastic Trial-Offer Market Dynamics} At each time $\tau \in \nn$, a user comes to the platform and tries an item, and then they decide to purchase the item or not. Let $d_j^{\tau, t}$ denote the number of purchases of item $j$ up
to time $\tau$ within epoch $t$. To ensure that all items have a positive probability to be tried initially, we assume that at the start of each epoch, $d_j^{0, t} \geq 1$ for every item $j$. %
The popularity of item $j$ is the fraction of all purchases which happens to item $j$:
\[
\phi_{j}^{\tau, t} := \frac{d_j^{\tau, t}}{\sum_i d_i^{\tau, t}}.
\]
We note that the possible popularity vectors form a simplex, denoted by $\Delta$:
\begin{equation}
\Delta = \left\{ \boldsymbol{\phi} \in \rr:~ \sum_{j=1}^{|\creators|} \phi_j = 1, \phi_j \geq 0~\text{for any item }j \right\}. \label{eq: def-simplex}
\end{equation}
The probability that an item is tried by the user at time-step $\tau$ is modelled as a multinomial logit:%
\begin{equation}
       \prob{\text{Item $j$ is chosen by trial}}:= \frac{v_j^t (\phi^{\tau, t}_j)^r}{\sum_i v_i^t (\phi^{\tau, t}_i)^r},
\end{equation}
where $r \geq  0$ denotes the significance of social influence. A larger value of $r$ represents stronger conformity of the users. 
After choosing the item for trial, the user will decide whether to purchase this item according to its quality. Within each epoch,
the probability of purchasing item $j$ given it is chosen for trial is given by the quality level $q_j^t \in [0,1]$.
By direct computation, one can derive the probability that the next successful purchase happens on item $j$.

\begin{lemma}[\cite{maldonado2018popularity} Lemma 3.1]
    Within epoch $t$, the probability that the next purchase is the product $j$ given the popularity vector $\bphi$ is given by
    \[
    p_j(\bphi) = \frac{v_j^t q_j^t (\phi_{j})^r}{\sum_i v_i^t q_i^t (\phi_{i})^r}.
    \]

\end{lemma}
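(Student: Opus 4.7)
The plan is to decompose the event ``the next purchase is item $j$'' into the underlying sequence of user visits that precede it. Within a single epoch $t$, the visibilities $v_j^t$ and qualities $q_j^t$ are fixed, and the popularity vector $\bphi$ is by definition a function of the purchase counts $d_j^{\tau,t}$ alone, so an unsuccessful trial does not modify any such counter. Hence between consecutive purchases the trial-probability distribution remains stationary at the current $\bphi$, which is the crucial observation that licences the rest of the argument.

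First I would compute, for a single user visit at the current $\bphi$, the joint probability that item $j$ is tried and then purchased. By the multinomial logit trial rule together with the Bernoulli purchase decision of success probability $q_j^t$, this one-visit purchase probability equals $\tfrac{v_j^t (\phi_j)^r}{\sum_i v_i^t (\phi_i)^r}\cdot q_j^t$. Summing over $j$ gives the per-visit probability that some purchase occurs, namely $\tfrac{\sum_i v_i^t q_i^t (\phi_i)^r}{\sum_i v_i^t (\phi_i)^r}$, and one minus this quantity is the per-visit no-purchase probability.

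Next I would express ``the next purchase is $j$'' as the disjoint union over $k\ge 0$ of the events ``the first $k$ subsequent visits produce no purchase and the $(k+1)$-th visit produces a purchase of $j$''. Stationarity of $\bphi$ on no-purchase visits makes these events independent with time-invariant probabilities, so $p_j(\bphi)$ is the geometric sum $\sum_{k\ge 0}(\text{no-purchase prob})^k\cdot(\text{purchase-}j\text{ prob})$, which collapses to the ratio of the purchase-$j$ probability to the total purchase probability. The common denominator $\sum_i v_i^t(\phi_i)^r$ then cancels and yields $\tfrac{v_j^t q_j^t(\phi_j)^r}{\sum_i v_i^t q_i^t(\phi_i)^r}$, as claimed.

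No significant obstacle arises; the only subtlety worth stating explicitly is that $\bphi$ does not drift on unsuccessful trials, since otherwise the geometric sum above would not have stationary terms. Equivalently and more quickly, one can obtain the same identity in one line by conditioning a single visit on the event that a purchase occurs, which is legitimate precisely because that conditioning does not update $\bphi$; the conditional distribution over which item is purchased is then immediately the displayed ratio.
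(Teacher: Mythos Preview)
Your proposal is correct and is precisely the ``direct computation'' the paper alludes to before stating the lemma; the paper itself gives no further proof beyond citing \cite{maldonado2018popularity}, and your geometric-series argument (equivalently, conditioning on a purchase occurring) is exactly the standard derivation.
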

With the above lemma, one could observe that the stable point of the stochastic process is the status that probability of the next success purchase is exactly the current market share, which gives the following definition.
\begin{definition}
    For any trial-offer market, we say a market share $\bphi$ is a trial-offer market equilibrium (TOME) if 
    \(
     p_j(\phi) = \phi_j, ~~~\text{for all }j\in \creators.
    \)
     We say $\bphi$ is an interior TOME if it is a TOME with $\phi_j > 0$ for every item $j$.

\end{definition}

One can directly calculate the interior TOME for any epoch $t$, which is
\begin{equation}
 \phi_j^t := \frac{(q_j^t v_j^t)^{\frac{1}{1-r}}}{\sum_i (q_i^t v_i^t)^{\frac{1}{1-r}}}. \label{eq: interior-TOME}
\end{equation}
It was shown that the stochastic trial-offer market will converge to TOME almost surely. The result is summarised in the following theorem.
\begin{theorem}[\cite{maldonado2018popularity} Theorem 5.1, Theorem 5.3, \cite{van2016aligning} Theorem 3]
    The stochastic Trial-Offer market dynamics converge to TOME almost surely. 
    \begin{itemize}
        \item If $r\in (0,1)$, the limit point is the interior TOME given by \Cref{eq: interior-TOME}.
        \item If $r = 1$, the limit point is the $j^*$-th standard basis vector of $\rr^{|\creators|}$ with $q^{t}_{j^*} \geq q_i^{t}$ for all $i \in \creators$.
    \end{itemize}
    \label{thm: convergence-stochastic-TO-market}
\end{theorem}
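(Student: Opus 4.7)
The plan is to cast the stochastic trial--offer process as a Robbins--Monro stochastic approximation with vanishing step size, use the ODE method to reduce almost sure convergence to the asymptotic analysis of a deterministic mean-field flow on $\Delta$, and then treat $r\in(0,1)$ and $r=1$ separately.

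First I would recast the within-epoch dynamics in incremental form. Writing $N^\tau := \sum_i d_i^{\tau,t}$ and letting $X^\tau$ be the indicator of the item purchased at step $\tau$, the popularity update becomes
\[
\bphi^{\tau+1} - \bphi^\tau \;=\; \tfrac{1}{N^\tau+1}\bigl(X^\tau-\bphi^\tau\bigr) \;=\; \gamma^\tau\bigl[(\bbp(\bphi^\tau)-\bphi^\tau)+M^\tau\bigr],
\]
where $\gamma^\tau=1/(N^\tau+1)$ satisfies $\sum_\tau\gamma^\tau=\infty$ and $\sum_\tau(\gamma^\tau)^2<\infty$, and $M^\tau$ is a bounded martingale difference with respect to the natural filtration. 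Since $\bbp-\mathrm{id}$ is continuous on $\Delta$ and the iterates remain in the compact set $\Delta$, Bena\"im's standard framework for stochastic approximation ensures that the limit set of $\{\bphi^\tau\}$ is almost surely an internally chain-transitive set of the mean-field flow
\[
\dot{\bphi}(s) \;=\; \bbp(\bphi(s))-\bphi(s),
\]
whose equilibria are, by definition, precisely the TOMEs.

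For $r\in(0,1)$ I would exhibit a strict Lyapunov function to identify the limit set with the unique interior TOME $\phi_j^\star\propto (q_j^t v_j^t)^{1/(1-r)}$ of \Cref{eq: interior-TOME}. A natural candidate is
\[
\Psi(\bphi) \;:=\; (1-r)\sum_j\phi_j\log\phi_j \;-\; \sum_j\phi_j\log(q_j^t v_j^t),
\]
which is strictly convex on $\Delta$ and whose unique simplex minimiser is $\phi^\star$. Substituting $\log(v_j^t q_j^t) = \log p_j(\bphi) + \log Z(\bphi) - r\log\phi_j$ into $\nabla_j\Psi$ and using $\sum_j (p_j-\phi_j)=0$ yields
\[
\tfrac{d}{ds}\Psi(\bphi(s)) \;=\; -\sum_j\bigl(p_j(\bphi)-\phi_j\bigr)\log\tfrac{p_j(\bphi)}{\phi_j} \;=\; -\bigl(D_{\mathrm{KL}}(\bbp(\bphi)\,\|\,\bphi)+D_{\mathrm{KL}}(\bphi\,\|\,\bbp(\bphi))\bigr) \;\le\; 0,
\]
with equality only at $\bphi=\phi^\star$. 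Combining the ODE method with this strict descent forces the limit set to be $\{\phi^\star\}$. Transferring the conclusion to the original process additionally requires ruling out escape to the boundary, which follows because $p_j(\bphi)$ is bounded below by a positive constant whenever $\phi_j$ is, together with the nondegenerate initialisation $d_j^{0,t}\ge 1$.

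The case $r=1$ is where I expect the main obstacle. Here $\Psi$ degenerates and the ODE reduces to the replicator form $\dot\phi_j = \phi_j\bigl(v_j^t q_j^t/\sum_i v_i^t q_i^t\phi_i - 1\bigr)$, whose equilibria include every vertex of $\Delta$. Only the vertex $e_{j^\ast}$ with $j^\ast=\argmax_j v_j^t q_j^t$ is linearly stable; the remaining vertices are saddles or sources. To deduce almost sure convergence specifically to $e_{j^\ast}$ I would invoke a Bena\"im--Hirsch-style nonconvergence-to-linearly-unstable-equilibria theorem, after verifying that the martingale noise $M^\tau$ excites every direction transverse to each unstable fixed point (which is immediate because $\bbp(\bphi)$ has full support whenever $\bphi$ does). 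The combination of boundary attraction with the lack of a global Lyapunov function at $r=1$ is the step that needs the most careful bookkeeping.
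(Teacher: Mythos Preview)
The paper does not supply a proof of this theorem: it is quoted verbatim from \cite{maldonado2018popularity} (Theorems 5.1, 5.3) and \cite{van2016aligning} (Theorem 3), so there is no in-paper argument to compare against. Your stochastic-approximation / ODE-method route is exactly the framework those references use, and your Lyapunov computation for $r\in(0,1)$ is correct: with $\Psi$ as you defined it one indeed gets
\[
\tfrac{d}{ds}\Psi(\bphi)=-\sum_j\bigl(p_j(\bphi)-\phi_j\bigr)\log\tfrac{p_j(\bphi)}{\phi_j}
= -\bigl(D_{\mathrm{KL}}(\bbp\,\|\,\bphi)+D_{\mathrm{KL}}(\bphi\,\|\,\bbp)\bigr)\le 0.
\]

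Two points deserve tightening. First, your boundary-escape sentence for $r\in(0,1)$ is too loose. The substantive reason boundary TOMEs are excluded is that for $r<1$ and $\phi_j$ small one has $\dot\phi_j = p_j-\phi_j \sim c\,\phi_j^{\,r}-\phi_j>0$, so every boundary face is transversally repelling for the mean-field flow; combined with the noise having positive variance in every direction whenever $\bphi$ is interior, the Bena\"im--Hirsch nonconvergence results then rule out these equilibria as limit points. Your remark that ``$p_j$ is bounded below whenever $\phi_j$ is'' does not by itself carry this load.

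Second, for $r=1$ your replicator analysis is correct and in fact more general than the statement you are asked to prove: the linearly stable vertex is $e_{j^\ast}$ with $j^\ast=\argmax_j v_j^t q_j^t$, not merely $\argmax_j q_j^t$. The paper's formulation comes from \cite{van2016aligning}, where the visibility is aligned with quality, so the two argmaxes coincide; for arbitrary $\bbv$ your version is the right one. Otherwise the plan is sound and matches the cited proofs in spirit.
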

With the above theorem, it is reasonable to view %
that each epoch corresponds to the whole process in which the stochastic dynamics converge to the TOME. In other words, with the stochastic trial-offer dynamic, the creators and the platform will enter the next epoch once after the stochastic process converges to the equilibrium. This gives the definition of the Equilibrium Response (ER) dynamic.
\begin{definition}[Equilibrium Response (ER) Dynamic] \label{def:ER}
    Given the set of items $\creators$, the two-sided attention market dynamics described by \Cref{eq: popularity-update-rule-general,eq: quality-update-rule-general,eq: visibility-update-rule-general} with the popularity update operator
    \[
     \left[\calP_{\mathrm{ER}}(\bbv^{t}, \bbq^{t}, \bphi^{t-1})\right]_j := \frac{(q_j^t v_j^t)^{\frac{1}{1-r}}}{\sum_i (q_i^t v_i^t)^{\frac{1}{1-r}}},~~~~~\text{for every entry }j\in\creators,
    \]
    is called the Equilibrium Response (ER) dynamic.
\end{definition}

\paragraph{Deterministic Trial-Offer Market Dynamics} We introduce a deterministic analog of the stochastic dynamic introduced above. Here, we are able to adjust the definition of epochs. Instead of waiting until the stochastic dynamic converges, the creators and the RS might adjust their strategies more frequently.  

Suppose that there is a certain number of users entering the platform \emph{simultaneously} in each epoch. We assume the users are non-atomic and every user will choose exactly one item for trial. At epoch $t+1$, the probability that item $j$ is tried by any user is
\[
 \frac{v_j^{t+1} (\phi_j^t)^r}{\sum_i v_i^{t+1} (\phi_i^t)^r}.
\]
Again, the item will be purchased with probabiliy $q_j^t \in [0,1]$. Instead of accumulating the market share as the atomic-user case, here the popularities are updated as the fraction of purchases happened to an item over the epoch. By direct computation, one can get that the market share should be updated\footnote{
Note that the popularity signal is only updated once an epoch is finished. Also, note that the convergence result of the stochastic dynamic also does not put any assumptions on the initial conditions other than $\bphi^{0, t} > 0$. Therefore, it is acceptable that the popularity signal does not carry through epochs for both deterministic and stochastic case.} as
\[
 \phi^{t}_j = \frac{q_j^t v_j^t (\phi_j^{t-1})^r }{\sum_i q_i^t v_i^t (\phi_i^{t-1})^r }.
\]
This gives the definition of the proportional response dynamic.
\begin{definition}[Proportional Response (PR) Dynamic] \label{def:PR}
    Given the set of items $\creators$, the two-sided attention market dynamics described by \Cref{eq: popularity-update-rule-general,eq: quality-update-rule-general,eq: visibility-update-rule-general} with the popularity update operator
    \[
     \left[\calP_{\mathrm{PR}}(\bbv^{t}, \bbq^{t}, \bphi^{t-1})\right]_j := \frac{q_j^t v_j^t (\phi_j^{t-1})^r }{\sum_i q_i^t v_i^t (\phi_i^{t-1})^r },~~~~~\text{for every item }j\in\creators,
    \]
    is called the Proportional Response (PR) dynamic.
\end{definition}

Intuitively, the deterministic dynamic corresponds to the assumption that the users are non-atomic whereas the stochastic dynamics assumes the users are discrete. Indeed, these two types of dynamics are analogous to each other in the sense that they have common limit points. 

\begin{theorem}[\cite{zhu2023stability} Theorem 3.7]
    With $r \leq 1$, the stochastic trial-offer market dynamics and the deterministic trial-offer market dynamics converge to the same TOME.
\end{theorem}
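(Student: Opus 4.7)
The plan is to fix an epoch $t$ (so the visibilities $v_j^t$ and qualities $q_j^t$ are constants) and analyse the deterministic PR iterate $\bphi^\tau$ directly, showing its limit agrees with the stochastic limit already characterised in \Cref{thm: convergence-stochastic-TO-market}. Writing $w_j := q_j^t v_j^t$, the update is
\[
\phi_j^{\tau+1} \;=\; \frac{w_j\,(\phi_j^{\tau})^r}{\sum_i w_i\,(\phi_i^{\tau})^r},
\]
which preserves strict positivity, so log-ratios remain well defined throughout the iteration.

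The key move is to pick a reference coordinate $k$ and set $x_j^\tau := \log(\phi_j^\tau / \phi_k^\tau)$. The common normalising denominator cancels in the ratio, leaving the decoupled affine recurrence
\[
x_j^{\tau+1} \;=\; \log(w_j/w_k) \,+\, r\, x_j^\tau.
\]
This is the heart of the argument: all multiplicative coupling across items has been removed, and the problem reduces to $|\creators|-1$ scalar linear iterations that can be solved in closed form.

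For $r \in (0,1)$ the map is a contraction with modulus $r$, so $x_j^\tau \to \log(w_j/w_k)/(1-r)$; exponentiating and renormalising yields $\phi_j^\infty = w_j^{1/(1-r)} / \sum_i w_i^{1/(1-r)}$, which is exactly the interior TOME of \Cref{eq: interior-TOME} and hence matches the stochastic limit from \Cref{thm: convergence-stochastic-TO-market}. For $r=1$, taking $k = j^\star := \argmax_j w_j$ (assume uniqueness first) gives $x_j^\tau = x_j^0 + \tau \log(w_j/w_{j^\star}) \to -\infty$ for every $j \neq j^\star$, forcing $\bphi^\tau \to \mathbf{e}_{j^\star}$, again matching the stochastic limit.

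The main obstacle is the $r = 1$ case. The statement of \Cref{thm: convergence-stochastic-TO-market} identifies the stochastic limit by the largest $q_j^t$, while the deterministic analysis picks the largest $w_j = q_j^t v_j^t$; reconciling the two requires either appealing to an implicit convention on visibilities in that regime or a small strengthening of the cited stochastic statement, so that both sides agree on which vertex of $\Delta$ is selected. Secondary care is needed for ties (where the limit is supported on the arg-max set and its precise shape is read off from the linear recurrence) and for boundary initialisations in $\Delta$ (where one restricts to the face spanned by the support, on which the same contraction argument applies); these are both routine once the core log-ratio reduction above is in place.
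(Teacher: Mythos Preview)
The paper does not supply its own proof of this statement; it is quoted as Theorem~3.7 of \cite{zhu2023stability} without argument, so there is nothing in-paper to compare your proposal against directly.

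Your log-ratio reduction is nonetheless a correct and self-contained route for the deterministic half: setting $x_j^\tau = \log(\phi_j^\tau/\phi_k^\tau)$ kills the normalising denominator and leaves the decoupled affine recursion $x_j^{\tau+1} = \log(w_j/w_k) + r\,x_j^\tau$, from which the $r<1$ fixed point and the $r=1$ vertex limit follow immediately. Judging from the surrounding discussion, the argument in \cite{zhu2023stability} proceeds differently: that work casts the deterministic dynamic as mirror descent on a \emph{convex} potential and then invokes the standard convergence theory (cf.\ \Cref{thm: mirror-descent-convergence-convex} here), after which identifying the minimiser with the TOME in \eqref{eq: interior-TOME} finishes the job. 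Your approach is more elementary and needs no potential function; the mirror-descent route is what the present paper leverages for its two-sided extensions.

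The $r=1$ discrepancy you flag --- $\argmax_j q_j$ in the statement of \Cref{thm: convergence-stochastic-TO-market} versus $\argmax_j q_j v_j$ from your iterate analysis --- is real, but it is an artefact of how the cited stochastic result is summarised in this paper rather than a gap in your reasoning; your deterministic limit is the correct one for general visibilities.
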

    
The trial-offer market model can be traced to the large-scale experimental study by \citet{salganik2006experimental} towards understanding the inequality and predictability of online cultural market. Based on the experimental, it turns out that the user behaviour could be cast into a multinomial logit model \cite{krumme2012quantifying}. And the convergence behaviour of the dynamics are further analysed by \citet{ceyhan2011social, van2016aligning, maldonado2018popularity, zhu2023stability}.

\subsection{Platform-Creator Interaction} \label{subsec: producer-best-respond}
At the end of each epoch, each creator updates their item quality level by maximising the estimated income they can earn in the next epoch. We assume the estimated income is given by the expected purchases given by the status at the end of the epoch. Denote the estimated income of creator $j$ at the end of epoch $t$ by $\mathrm{inc}^t_j$. Given any $q_j$, the estimated income is
\begin{align*}
&~~\expect{\mathrm{inc}^t_j ~|~ \bbv^{t+1}, \bphi^t, q_j} \\
&=\prob{\text{Item }j\text{ is purchased}~|~\text{Item }j\text{ is chosen for trial}} \cdot \prob{\text{Item }j\text{ is chosen for trial}}\\
&= q_j \cdot \frac{v_j^{t+1} (\phi_j^t)^r}{\sum_i v_i^{t+1} (\phi_i^t)^r}.
\end{align*}
Given the cost function of creator $j$, their utility function is
\[
 u_j^t(q_j):=\expect{\mathrm{inc}^t_j ~|~ \bbv^{t+1}, \bphi^t, q_j} - c_j(q_j) = q_j \cdot \frac{v_j^{t+1} (\phi_j^t)^r}{\sum_i v_i^{t+1} (\phi_i^t)^r} - c_j(q_j).
\]
The creator will choose the quality level of their next item as the utility maximiser, i.e.,
\[
 q_j^{t+1} = \left[\calQ(\bbv^{t+1},\bphi^{t}; \bbc)\right]_j = \argmax_{q_j} u^t_j(q_j).
\]
If the cost function $c_j$ is strictly convex, then the above update is well-defined since the utility function $u_j^t$ is strictly concave. 
By simple calculus, we obtain a closed-form formula of the quality level update at the end of each epoch.
\begin{lemma}
    Suppose that $c_j$ is strictly convex and continuously differentiable, $c_j'(0) = 0$ and $c_j'(1)\ge 1$ for every $j\in \calC$. Given the quality $\bphi^t$ and visibility factors $\bbv^{t+1}$. Let $\zeta_j := (c_j')^{-1}$ denote the inverse function of the derivative of the cost function of creator $j$. Then the quality update can be rewritten as
    \[
    q^{t+1}_j = \left[\calQ_{\mathrm{BR}}(\bbv^{t+1},\bphi^{t}, \bbc)\right]_j := \zeta_j\left(\frac{v_j^{t+1} (\phi_j^t)^r}{\sum_i v_i^{t+1} (\phi_i^t)^r} \right).
    \] \label{lem: response-of-producers}
\end{lemma}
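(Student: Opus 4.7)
The plan is to reduce the lemma to a one-variable unconstrained concave maximisation and apply the first-order condition. Fix creator $j$ and epoch $t$, and abbreviate
\[
\alpha_j \;:=\; \frac{v_j^{t+1} (\phi_j^t)^r}{\sum_i v_i^{t+1} (\phi_i^t)^r}.
\]
Since the quantities $v_i^{t+1}$ and $\phi_i^t$ are non-negative, we have $\alpha_j\in[0,1]$. With this notation, the utility function simplifies to the affine-minus-convex expression $u_j^t(q_j)=\alpha_j q_j - c_j(q_j)$.

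I would then argue that the maximiser lies in $[0,1]$ and is given by the first-order condition. Strict convexity and continuous differentiability of $c_j$ make $c_j'$ continuous and strictly increasing, hence a bijection from $[0,1]$ onto $[c_j'(0),c_j'(1)] = [0,c_j'(1)] \supseteq [0,1]$; in particular $\alpha_j$ lies in the range of $c_j'\big|_{[0,1]}$, so $\zeta_j(\alpha_j)$ is well-defined and belongs to $[0,1]$. Differentiating the utility and setting the derivative to zero yields $c_j'(q_j) = \alpha_j$, whose unique solution is $q_j = \zeta_j(\alpha_j)$. Strict concavity of $u_j^t$ (since $-c_j$ is strictly concave and the linear term contributes nothing to curvature) ensures that this interior critical point is the unique global maximiser over $[0,1]$, so no boundary check is needed.

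Putting these pieces together gives
\[
q_j^{t+1} \;=\; \argmax_{q_j\in[0,1]} u_j^t(q_j) \;=\; \zeta_j(\alpha_j) \;=\; \zeta_j\!\left(\frac{v_j^{t+1} (\phi_j^t)^r}{\sum_i v_i^{t+1} (\phi_i^t)^r}\right),
\]
which is the asserted closed form. There is no real obstacle here; the only subtlety worth stating carefully is why the assumptions $c_j'(0)=0$ and $c_j'(1)\ge 1$ are exactly what is needed to guarantee that $\alpha_j\in[0,1]$ always has a preimage under $c_j'$ lying inside the feasible interval $[0,1]$, so that the interior first-order condition characterises the maximiser without corner cases.
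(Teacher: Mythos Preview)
Your proposal is correct and matches the paper's approach: the paper simply says ``by simple calculus, we obtain a closed-form formula,'' and your argument is exactly that calculus spelled out---first-order condition on the strictly concave one-variable utility, with the assumptions on $c_j'(0)$ and $c_j'(1)$ ensuring the critical point lands in $[0,1]$.
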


\subsection{Recommendation Policies of the Platform} \label{subsec: rs-policy}
As the quality levels and popularities of the items evolve, the platform may implement a RS to intervene the market dynamics.
We consider the following recommendation policies and their effects on the dynamic:
\begin{itemize}
    \item \textbf{Popularity ranking}: The visibilities are set to be proportional to the popularity of items of the last epoch:
    \begin{equation}
     \left[\calV_{\mathrm{pop}}(\bbq^{t}, \bphi^t, \bbv^t)\right]_j:= \frac{\mu_j(\phi^t_j)^\beta}{\sum_i \mu_i(\phi^t_i)^\beta}, ~~~~~\text{for every entry }j\in\creators, \label{eq: popularity-ranking-policy}
    \end{equation}
    where $\beta \in \rr$ is a parameter reflecting the strength of recommendation policy and $\boldsymbol{\mu} > 0$ is a constant.
    \item \textbf{Quality ranking}: The platform sets the visibility factors proportional to the qualities of the items at the current epoch:
    \[
    \left[\calV_{\mathrm{qual}}(\bbq^{t}, \bphi^t, \bbv^t)\right]_j:= \frac{\mu_j(q^{t}_j)^\alpha}{\sum_i \mu_i(q^{t}_i)^\alpha}, ~~~~~\text{for every entry }j\in\creators,
    \]

    \item \textbf{Mixed ranking}: The platform sets the visibility factors proportional to the product of qualities and popularities of the items at the current epoch:
    \[
    \left[\calV_{\mathrm{mix}}(\bbq^{t}, \bphi^t, \bbv^t)\right]_j:= \frac{\mu_j(q^{t}_j)^\alpha (\phi^t_j)^\beta}{\sum_i \mu_i (q^{t}_i)^\alpha (\phi^t_i)^\beta}, ~~~~~\text{for every entry }j\in\calI,
    \]

    \item \textbf{Constant}: The visibilities may remain constant throughout all epochs,
    \[
    \calV_{0}(\bbq^{t+1}, \bphi^t, \bbv^t) := \bbv > 0,
    \]
    where $\bbv$ is some constant vector with $\sum_j v_j = 1$.
\end{itemize}
The visibilities are normalised such that they sum to $1$. The parameters $\beta, \alpha$ can potentially be negative.
This can happen if the platform wants to demote items with high visibilities or popularities due to fairness concerns.
To actually implement quality ranking, the platform needs to learn the quality levels by soliciting sufficient trial data or by implementing a bandit algorithm.

\section{Results} \label{sec:results}
We characterize the convergence behaviour of the dynamics in two-sided attention markets by casting them as optimization processes on carefully designed potential functions.
Specifically, we will show that the dynamics are equivalent to \emph{mirror descent} algorithm.
The equivalence allows us to understand these dynamics using the existing knowledge in research of optimization.

Since the potential functions are not always convex, it may possess multiple local minima.
Depending on the initial conditions, step-sizes and other factors, each local minima can become a possible long-term outcome.
In general, it is very difficult to describe cleanly how those factors determine the outcome.
Nevertheless, \emph{local convergence} --- convergence to a local minima if the initial condition is sufficiently close to the minima --- can be demonstrated.

In \Cref{subsec: mirror-descent-results}, we discuss the background knowledge and new results about mirror descent that are essential for comprehending our results about attention market dynamics. In \Cref{subsec: results-equiv-to-MD,subsec: result-convergence-two-sided-dynamic}, we present the potential function when the recommender system implements constant policy, show that market dynamics are equivalent to mirror descent on the potential function, and discuss the convergence properties of the dynamics. We extend these results to other recommendation strategies in \Cref{subsec: result-recommendation-policy}.

\subsection{Mirror Descent Algorithm, Global \& Local Convergence}\label{subsec: mirror-descent-results} %
Consider a general constrained %
optimization problem of minimizing a smooth %
function $f(\bbx)$, subject to the constraint $\bbx\in X$ for some compact and convex set $X$.
\begin{definition}
    Let $X$ be a compact and convex set, and let $h$ be a differentiable convex function on $X$. The Bregman divergence w.r.t. $h$, denoted by $d_h$,
    is defined as 
    \[
    d_h(\bbx,\bby) = h(\bbx) - h(\bby) - \inner{\nabla h(\bby)}{\bbx- \bby},
    \]
    for any $\bbx \in X$ and $\bby \in \mathrm{rint}(X)$, where $\mathrm{rint}(X)$ denotes the relative interior of the set $X$.
\end{definition}

The widely used Kullback–Leibler (KL) divergence is a special case of Bregman divergence, generated by the function $h(\bbx) = \sum_j x_j\log x_j - x_j$.

Given a Bregman divergence $d_h$, the corresponding mirror descent update rule is
\begin{equation}
 \bbx^{t+1} = \argmin_{\bbx \in X} \left\{ \inner{\nabla f(\bbx^t)}{\bbx - \bbx^{t}} + \frac{1}{\eta} \cdot d_h(\bbx, \bbx^t)\right\}, \label{eq: mirror-descent-update-rule}
\end{equation}
where $\eta$ is considered as the step-size of the update rule. %
In particular, if $d_h$ is the KL divergence and $X = \Delta$, then the mirror descent update becomes: for each $j$, %
\begin{equation}
x_j^{t+1} = \frac{x_j^t \exp(-\eta \nabla_j f(\bbx^t))}{\sum_i x_i^t \exp(-\eta \nabla_i f(\bbx^t))}~.\label{eq: mirror-descent-KL}
\end{equation}

In optimization, it is well known that if the function $f$ is convex and satisfies some regularity conditions, then mirror descent update \eqref{eq: mirror-descent-update-rule} converges to the (essentially) unique minima of $f$. However, if $f$ is not convex, the story is far more involved. 

\subsubsection*{\textbf{Convex Objective Function and Global Convergence}}
We define the notion of Bregman-smoothness below, which is a necessary condition for the mirror descent dynamic to converge.
\begin{definition}
    The function $f$ is $L$-Bregman-smooth with respect to Bregman divergence $d_h$, if for any $\bby\in \mathrm{rint}X$ and $\bbx \in X$,
    \[
    f(\bbx) \leq  f(\bby) + \inner{\nabla f(\bby)}{\bbx - \bby } + L \cdot d_h(\bbx, \bby).
    \]
    Moreover, it is $L$-Bregman-convex if the function is both $L$-Bregman-smooth and convex.
\end{definition}

\begin{theorem}[{\cite[Theorem 3]{birnbaum2011distributed}}]
Suppose $f$ is an $L$-Bregman-convex function with respect to $d_h$, let $\bbx^T$ be the point reached by running mirror descent \Cref{eq: mirror-descent-update-rule}. Then with the learning rate $\eta \leq \frac{1}{L}$,
\[
f(\bbx^T) - f(\bbx^\star) \leq \frac{d_h(\bbx^\star, \bbx^0)}{\eta\cdot T}. \label{thm: mirror-descent-convergence-convex}
\]
\end{theorem}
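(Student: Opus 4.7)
The plan is to follow the standard analysis of mirror descent: combine a descent-lemma estimate from Bregman-smoothness with the first-order optimality of each mirror step, invoke the three-point identity for Bregman divergences, and telescope.

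First I would record the one-step descent inequality implied by $L$-Bregman-smoothness, namely $f(\bbx^{t+1}) \leq f(\bbx^t) + \inner{\nabla f(\bbx^t)}{\bbx^{t+1} - \bbx^t} + L\, d_h(\bbx^{t+1},\bbx^t)$. The first-order optimality of $\bbx^{t+1}$ for the subproblem in \Cref{eq: mirror-descent-update-rule}, tested against $\bbx = \bbx^\star \in X$, gives $\inner{\eta \nabla f(\bbx^t) + \nabla h(\bbx^{t+1}) - \nabla h(\bbx^t)}{\bbx^\star - \bbx^{t+1}} \geq 0$. The three-point identity $\inner{\nabla h(\bbx^{t+1}) - \nabla h(\bbx^t)}{\bbx^\star - \bbx^{t+1}} = d_h(\bbx^\star,\bbx^t) - d_h(\bbx^\star,\bbx^{t+1}) - d_h(\bbx^{t+1},\bbx^t)$ then converts this optimality condition into a clean bound on $\inner{\nabla f(\bbx^t)}{\bbx^{t+1} - \bbx^\star}$ in terms of successive divergences to the optimum.

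Next I would invoke convexity, $f(\bbx^\star) \geq f(\bbx^t) + \inner{\nabla f(\bbx^t)}{\bbx^\star - \bbx^t}$, and add it to the descent lemma after splitting $\bbx^{t+1} - \bbx^t = (\bbx^{t+1} - \bbx^\star) + (\bbx^\star - \bbx^t)$. Using the divergence bound from the previous step, together with the assumption $\eta \leq 1/L$ to absorb the $L\, d_h(\bbx^{t+1},\bbx^t)$ residual from Bregman-smoothness against the negative $-\tfrac{1}{\eta}\, d_h(\bbx^{t+1},\bbx^t)$ coming out of the three-point identity, one obtains the per-step contraction $f(\bbx^{t+1}) - f(\bbx^\star) \leq \tfrac{1}{\eta}\bigl[d_h(\bbx^\star,\bbx^t) - d_h(\bbx^\star,\bbx^{t+1})\bigr]$.

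Finally, telescoping this inequality over $t = 0,\dots,T-1$ yields $\sum_{t=1}^{T}(f(\bbx^t) - f(\bbx^\star)) \leq \tfrac{1}{\eta}\, d_h(\bbx^\star,\bbx^0)$. The argument concludes by observing that the descent lemma together with the mirror-descent subproblem's optimality at the candidate $\bbx = \bbx^t$ forces $f(\bbx^{t+1}) \leq f(\bbx^t)$ whenever $\eta \leq 1/L$, so $f(\bbx^T) - f(\bbx^\star)$ is bounded above by the average on the left, giving the stated $d_h(\bbx^\star,\bbx^0)/(\eta T)$ rate. The main technical subtlety to watch for is sign-matching in the $L$-versus-$1/\eta$ absorption: the split must be arranged to bound $f(\bbx^{t+1})$ rather than $f(\bbx^t)$ on the left-hand side, since this is what allows the $d_h(\bbx^{t+1},\bbx^t)$ term from Bregman-smoothness to cancel against the three-point residual rather than reinforce it.
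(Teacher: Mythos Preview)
Your proposal is correct and follows essentially the same route as the paper: descent lemma from $L$-Bregman-smoothness, then optimality of the mirror step combined with the three-point identity (which the paper packages as the cited lemma of Chen~\cite{chen1993convergence}), then convexity to reach the per-step inequality $f(\bbx^{t+1}) - f(\bbx^\star) \le \tfrac{1}{\eta}\bigl[d_h(\bbx^\star,\bbx^t) - d_h(\bbx^\star,\bbx^{t+1})\bigr]$, followed by telescoping and the monotonicity $f(\bbx^{t+1}) \le f(\bbx^t)$. The only cosmetic difference is that the paper first replaces $L$ by $1/\eta$ in the smoothness bound before applying the optimality lemma, whereas you carry $L$ through and cancel $(L - 1/\eta)\,d_h(\bbx^{t+1},\bbx^t)\le 0$ at the end; both are equivalent.
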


\subsubsection*{\textbf{Non-convex Objective Function and Local Convergence}} Now, consider the case where $\Phi_0$ is not convex everywhere.
The intuition is that if the dynamics are initialized near a local minimiser, they may still converge to it at the same convergence rate $\mathcal{O}\left(\frac{1}{T}\right)$. %
While this seems intuitive, formally establishing it requires non-trivial mathematical reasoning.
To keep things simple, we specifically consider the case where the Bregman divergence is the KL-divergence, and the optimization problem is constrained to the simplex $\Delta$, which is the setting we need for analyzing the two-sided market dynamics. 

First, we define the regularity condition of the local minimizer as follows. We first perform a change of variable to reduce the dimensionality. Suppose $f: \Delta  \rightarrow \rr$, where $\Delta \subset \rr^n$.
For any $\widehat{x} \in \rr^{n-1}$, we set $\bbx = \left[\widehat{\bbx}, 1- \sum_{j=1}^{n-1} \widehat{x}_j\right]$. Then define $\widehat{f}: \rr^{n-1} \rightarrow \rr$ such that $\widehat{f}(\widehat{\bbx}) = f(\bbx)$. %
\begin{definition}
    A strict local minimiser $\bbx$ of $f: \Delta \subset \rr^n \rightarrow \rr$ is regular if the corresponding $\widehat{f}$ and $\widehat{x}$ defined above satisfies $\nabla^2 \widehat{f}(\widehat{x}) \succ 0$.
\end{definition}
We note that the strict local minimiser of $f$ within $\mathrm{rint}\Delta$ is also a strict local minimiser of $\widehat{f}$. Therefore, it must hold that $\nabla^2 \widehat{f} \succeq 0$. The above condition further regularise the local minimiser such that the objective is not completely \emph{flat} around the local minimiser. 

\boxlem{
For some open set $\Omega \subset \rr^n$, suppose $\bbx^\star\in \mathrm{rint}\Delta$ is a strict regular local minimiser for some smooth function $f$ over $\Omega \cap \Delta$. Suppose $f$ has finitely many local minimisers in $\mathrm{rint}\Delta$, there exists a open set $\Omega' \subset \Omega$ such that,
\begin{align}
f(\bbx^\star) -  f(\bby) - \inner{\nabla f(\bby)}{\bbx^\star -\bby} \geq  0~, \label{eq-lem: local-convexity-line-1}\\
\inner{\nabla f(\bby)}{\bby - \bbx^\star} -\gamma \Vert \bby - \bbx^\star \Vert_2^2 \geq 0~, \label{eq-lem: local-convexity-line-2}\\
\left\Vert \nabla f(\bby) - \nabla f(\bbx^\star)  \right\Vert_2 -\gamma\left\Vert \bby - \bbx^\star \right\Vert_2\geq  0~.\label{eq-lem: local-convexity-line-3}
\end{align}
for any $\bby \in \Omega' \cap \Delta$ and some $\gamma > 0$. \label{lem: local-convexity-around-strict-minimiser}
}
Since $f$ is twice differentiable, then $\nabla f$ is $\kappa$-Lipschitz continuous for some $0<\kappa<\infty$ within some sufficiently small neighbourhood $\Omega$. We can formulate the ``attractiveness'' of the minimiser as follows.
\begin{definition}
    The neighbourhood $\Omega(\bbx^\star)$ of a regular stirct local minimiser is $(\gamma, \kappa)$-smooth  if \Cref{eq-lem: local-convexity-line-1,eq-lem: local-convexity-line-2,eq-lem: local-convexity-line-3} is satisfied with parameter $\gamma$ and the $\nabla f$ is $\kappa$-Lipschitz continuous over $\Omega(\bbx^\star)$. \label{def: smooth-neighbourhood}
\end{definition}

Finally, we are ready to state the main result about local convergence for non-convex functions.
\begin{theorem}
Let $\bbx^\star$ be an interior regular strict local minimiser of $f: \Delta \subset \rr^n \rightarrow \rr$. Assume $f$ is twice differentiable, $L$-Bregman smooth and has finitely many local minimisers. Suppose a minimiser $\bbx^\star$ is equipped with a $(\gamma,\kappa)$-smooth neighbourhood $\Omega$ and $\bbx^0\in \Omega$,
and the mirror descent update \eqref{eq: mirror-descent-KL} with learning rate $\eta \leq \min\left\{\frac{2\gamma}{\kappa^2}, \frac{1}{L}\right\}$ is used. Then 
\[
 f(\bbx^T) - f(\bbx^\star) \leq \frac{d_h(\bbx^\star, \bbx^0)}{\eta\cdot T}. \label{thm: mirror-descent-convergence-non-convex-detailed}
\]
\end{theorem}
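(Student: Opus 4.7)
The plan is to transport the convex mirror-descent analysis of Theorem~\ref{thm: mirror-descent-convergence-convex} into the local non-convex regime, replacing global convexity by the inequalities of Lemma~\ref{lem: local-convexity-around-strict-minimiser} and supplying a new invariance argument that keeps iterates inside the smooth neighbourhood $\Omega$. The technical novelty sits in the invariance step; everything else proceeds along standard lines.

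First I would derive the per-iterate inequality. By $L$-Bregman smoothness, $f(\bbx^{t+1}) \le f(\bbx^t) + \inner{\nabla f(\bbx^t)}{\bbx^{t+1}-\bbx^t} + L\, d_h(\bbx^{t+1},\bbx^t)$, while the first-order optimality of \eqref{eq: mirror-descent-update-rule} combined with the three-point identity for Bregman divergences yields
\[
\eta\,\inner{\nabla f(\bbx^t)}{\bbx^{t+1}-\bbx^\star} \;\le\; d_h(\bbx^\star,\bbx^t) - d_h(\bbx^\star,\bbx^{t+1}) - d_h(\bbx^{t+1},\bbx^t).
\]
Splitting $\bbx^{t+1}-\bbx^\star=(\bbx^{t+1}-\bbx^t)+(\bbx^t-\bbx^\star)$, plugging the optimality bound into the smoothness bound, cancelling the $d_h(\bbx^{t+1},\bbx^t)$ term via $\eta\le 1/L$, and applying \eqref{eq-lem: local-convexity-line-1} with $\bby=\bbx^t$ to substitute $\inner{\nabla f(\bbx^t)}{\bbx^t-\bbx^\star}\ge f(\bbx^t)-f(\bbx^\star)$, I obtain the key per-step bound
\[
f(\bbx^{t+1})-f(\bbx^\star) \;\le\; \tfrac{1}{\eta}\bigl[d_h(\bbx^\star,\bbx^t)-d_h(\bbx^\star,\bbx^{t+1})\bigr],
\]
valid whenever $\bbx^t\in\Omega$. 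Choosing $\bbx=\bbx^t$ in the same optimality inequality also delivers monotone descent $f(\bbx^{t+1})\le f(\bbx^t)$.

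The crux is to show by induction that $\bbx^t\in\Omega$ for every $t$. Since $f$ has finitely many local minimisers, I may shrink $\Omega$ (and restrict $\bbx^0$ accordingly) so that $\bbx^\star$ is the only local minimiser inside; then $f(\bbx^{t+1})\ge f(\bbx^\star)$ holds as soon as $\bbx^{t+1}\in\Omega$ is certified, and combining this with the per-step inequality gives the non-expansion $d_h(\bbx^\star,\bbx^{t+1})\le d_h(\bbx^\star,\bbx^t)$, trapping iterates in the Bregman ball $B=\{\bby:d_h(\bbx^\star,\bby)\le d_h(\bbx^\star,\bbx^0)\}\subset\Omega$. Closing the circularity --- certifying $\bbx^{t+1}\in\Omega$ before invoking $f(\bbx^{t+1})\ge f(\bbx^\star)$ --- is the main technical obstacle, and it is where the second step-size cap $\eta\le 2\gamma/\kappa^2$ enters: via the strong-monotonicity bound \eqref{eq-lem: local-convexity-line-2} and $\kappa$-Lipschitz continuity of $\nabla f$, the Euclidean prototype $\Vert\bbx^{t+1}-\bbx^\star\Vert_2^2 \le (1-2\eta\gamma+\eta^2\kappa^2)\Vert\bbx^t-\bbx^\star\Vert_2^2 \le \Vert\bbx^t-\bbx^\star\Vert_2^2$ shows a single gradient step cannot overshoot $\Omega$. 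Transporting this to the KL-mirror update --- where the multiplicative exponential step does not decompose as a Euclidean gradient step --- will require a Taylor expansion of the softmax to compare $\Vert\bbx^{t+1}-\bbx^t\Vert$ to $\eta\Vert\nabla f(\bbx^t)\Vert$, together with \eqref{eq-lem: local-convexity-line-3} as a supporting error bound; I anticipate this is why one effectively needs $\bbx^0$ in a sufficiently small Bregman ball inside $\Omega$ as the concrete quantification of ``close enough''.

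Finally, telescoping the per-iterate bound over $t=0,\ldots,T-1$ and using monotone descent gives
\[
T\bigl(f(\bbx^T)-f(\bbx^\star)\bigr) \;\le\; \sum_{t=0}^{T-1}\bigl(f(\bbx^{t+1})-f(\bbx^\star)\bigr) \;\le\; \frac{d_h(\bbx^\star,\bbx^0)}{\eta},
\]
yielding the claimed rate.
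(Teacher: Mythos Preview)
Your overall architecture matches the paper's: Bregman smoothness plus mirror-descent optimality give monotone descent and the per-step inequality, \eqref{eq-lem: local-convexity-line-1} stands in for global convexity, an invariance argument traps iterates in $\Omega$, and telescoping yields the rate.

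The only substantive divergence is in how invariance is established. You correctly flag the circularity --- the per-step inequality gives KL non-expansion only after $f(\bbx^{t+1})\ge f(\bbx^\star)$ is known, which itself needs $\bbx^{t+1}\in\Omega$ --- and propose to break it by transporting a Euclidean contraction to the KL geometry via a softmax Taylor expansion. The paper instead proves $\mathrm{KL}(\bbx^\star,\bbx^{t+1})\le\mathrm{KL}(\bbx^\star,\bbx^t)$ as a standalone lemma (Lemma~\ref{lem: no-escape-from-the-local-region}), sidestepping the circularity altogether: it writes the KL difference explicitly from the exponential-weights formula and the KKT identity $\nabla_j f(\bbx^\star)=\lambda$, bounds it using \eqref{eq-lem: local-convexity-line-2} and $\kappa$-Lipschitzness of $\nabla f$ (not \eqref{eq-lem: local-convexity-line-3}), and reduces everything to showing that the auxiliary function $\Gamma(\bby)=-\eta\langle\bbx^t,\bby\rangle-(\eta\gamma/\kappa^2)\|\bby\|_2^2+\log\sum_i x_i^t e^{\eta y_i}$, with $y_j=\nabla_j f(\bbx^\star)-\nabla_j f(\bbx^t)$, is non-positive near $\bby=\mathbf{0}$. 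A direct Hessian computation gives $\nabla^2\Gamma(\mathbf{0})\prec 0$ exactly when $\eta<2\gamma/\kappa^2$. This is the same second-order idea you are reaching for, but applied to the KL difference itself rather than to the update map; it delivers invariance in one stroke and avoids the error-control issues you anticipate in the Euclidean-to-KL transport.
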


\subsection{Two-Sided Attention Market Dynamic under Constant RS Policy}\label{subsec: results-equiv-to-MD}
In this subsection, we consider the cases where the RS's policy is constant, i.e., $\calV_{0}(\bbq^{t+1}, \bphi^t, \bbv^t) = \bbv$ for some $\bbv > 0$.
Our potential function takes a vector $\bbs^t$ as input, where
\[
s_j^t := \prob{\text{Item $j$ is chosen by trial at the end of epoch $t$}} = \frac{v_j (\phi^{ t}_j)^r}{\sum_i v_i (\phi^{ t}_i)^r}~.
\]
We note that when $\bbs^t \in \Delta$ is given, it is easy to derive the corresponding $\bphi^t\in \Delta$ uniquely.
It turns out that no matter the users' update is ER or PR,
$\bbs^{t+1}$ admits an iterative formula in term of $\bbs^t$, which turns out to be a mirror descent update on a designed potential function.

\boxlem{Assume $r<1$. If the users' update is the equilibrium response dynamic in \Cref{def:ER}, we have
\begin{equation}
s_j^{t+1} = \frac{v_j^{\frac{1}{1-r}}\left(\zeta_j(s_j^t)\right)^{\frac{r}{1-r}}}{\sum_{i}v_i^{\frac{1}{1-r}}\left(\zeta_i(s_i^t)\right)^{\frac{r}{1-r}}}. \label{eq: ER-dynamic-close-form}
\end{equation} 
Alternatively, if the users' update is the proportional response dynamic in \Cref{def:PR}, we have
\begin{equation}
s_j^{t+1} = \frac{v_j (s_j^t\zeta_j(s_j^t))^r}{\sum_{i\in \calC}v_i (s_i^t\zeta_i(s_i^t))^r}. \label{eq: PR-dynamic-close-form}
\end{equation}
\label{lem: dynamics-formulation}
}

\Cref{lem: dynamics-formulation} indicates that both ER and PR dynamics correspond to redistribution and renormalization processes in proportional to the product of $v_j$, $s_j^t$ and $q_j^{t+1} = \zeta_j(s_j^t)$ up to some powers that depend on $r$. Optimization researchers might have a feeling of d\'ej\`a vu.

\begin{theorem}
Define the potential function $\Phi_0$ as
\begin{equation}
\Phi_0(\bbs) := -\left(\sum_{j} s_j \log v_j + r \cdot\int_{0}^{s_j} \log \zeta_j(z) ~\mathrm{d}z + (r-1)s_j\log s_j  \right). \label{eq: potential-function-0}
\end{equation}
    \begin{itemize}
    \item The ER dynamic \eqref{eq: ER-dynamic-close-form} is equivalent to mirror descent on $\Phi_0$ with KL divergence on $\Delta$, with the corresponding learning rate $\eta =\frac{1}{1-r}$,
    \item The PR dynamics stated in \eqref{eq: PR-dynamic-close-form} is equivalent to mirror descent on $\Phi_0$ with KL divergence on $\Delta$, with the corresponding learning rate $\eta =1$.

\end{itemize}
\label{thm: dynamic-equiv-to-MD}
\end{theorem}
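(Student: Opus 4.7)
The plan is to directly apply the closed-form mirror descent update with KL divergence on the simplex (equation \eqref{eq: mirror-descent-KL}) to $\Phi_0$, and verify that with the prescribed step sizes, the resulting iteration matches \eqref{eq: ER-dynamic-close-form} and \eqref{eq: PR-dynamic-close-form} from \Cref{lem: dynamics-formulation}. Since the latter lemma already reduces both the ER and PR dynamics to explicit multiplicative-weight updates on $\bbs$, all that remains is an exponent-matching calculation.

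First, I would compute the gradient of $\Phi_0$. Using the fundamental theorem of calculus on the $\int_0^{s_j} \log \zeta_j(z)\,\mathrm{d}z$ term and differentiating the other two summands term by term, I obtain
\[
\partial_j \Phi_0(\bbs) = -\log v_j - r \log \zeta_j(s_j) - (r-1)\bigl(\log s_j + 1\bigr).
\]
Plugging this into \eqref{eq: mirror-descent-KL} with step size $\eta$, the numerator of the update becomes
\[
s_j^t \exp\bigl(-\eta \, \partial_j \Phi_0(\bbs^t)\bigr) \;=\; e^{(r-1)\eta} \cdot v_j^{\eta} \cdot \zeta_j(s_j^t)^{r\eta} \cdot (s_j^t)^{\,1+(r-1)\eta}.
\]
The prefactor $e^{(r-1)\eta}$ does not depend on $j$ and hence cancels in the normalization.

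The final step is to check the two learning rates. For $\eta = \tfrac{1}{1-r}$ (ER), the exponent on $s_j^t$ is $1 + (r-1)\cdot\tfrac{1}{1-r} = 0$, so that factor disappears and the remaining terms are exactly $v_j^{1/(1-r)} \zeta_j(s_j^t)^{r/(1-r)}$, reproducing \eqref{eq: ER-dynamic-close-form}. For $\eta = 1$ (PR), the exponent on $s_j^t$ is $r$, and the update becomes proportional to $v_j\, (s_j^t)^r \zeta_j(s_j^t)^r = v_j (s_j^t \zeta_j(s_j^t))^r$, matching \eqref{eq: PR-dynamic-close-form}.

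There is no genuine obstacle here — the proof is an algebraic verification whose only subtlety is that the three summands of $\Phi_0$ are individually tailored so that (i) the $\log v_j$ and $\log \zeta_j$ terms deliver the correct multiplicative weights after exponentiation, and (ii) the entropy-like term $-(r-1) s_j \log s_j$ contributes the exponent $(r-1)\eta$ on $s_j^t$ that is needed to kill or preserve the popularity factor under the respective step sizes. In writing it up, the one item worth flagging is that one must invoke \Cref{lem: dynamics-formulation} to translate from the raw ER/PR updates on $\bphi^t$ and $\bbq^t$ to the $\bbs$-coordinate iteration, so that the comparison with mirror descent on $\Phi_0(\bbs)$ is meaningful.
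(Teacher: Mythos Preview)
Your proposal is correct and follows essentially the same approach as the paper: compute $\partial_j \Phi_0$, substitute into the KL mirror-descent formula \eqref{eq: mirror-descent-KL}, and match exponents for the two step sizes. The paper's proof silently drops the $j$-independent constant $(r-1)$ from the gradient, whereas you carry the $e^{(r-1)\eta}$ prefactor and explicitly note it cancels in the normalization; this is the only cosmetic difference.
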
 

\Cref{thm: dynamic-equiv-to-MD} indicates that ER and PR dynamics are both equivalent to the mirror descent on the same potential function $\Phi_0$, but with different learning rates. With the assumption that $0<r<1$, we have $\frac{1}{1-r}> 1$, hence the ER dynamic has a larger corresponding learning rate. This echoes with our model assumptions that ER dynamic corresponds to response to accumulation of more users' behaviours, which means that each epoch should correspond to a \emph{larger step} in terms of mirror descent on the same potential function.

\subsubsection*{\textbf{An interpretation of the potential function}} Here, we argue that the potential function can be interpreted meaningfully. After a transformation of the potential function (details are given in \Cref{app-sec: transformation-of-potential-function}), it has the following decomposition:
\begin{equation}
\Phi_0\left(\bbs^t\right) = \underbrace{\mathrm{KL}(\bbs^t, \bbv)}_{\substack{\text{Attention alignment} \\ \text{to visibility}}}
 - r\cdot\underbrace{\sum_j s_j^t\log\left(q_j^{t+1}\right)}_{\text{Expected $\log$-quality}} ~+~  r\cdot\underbrace{H(\bbs^t)}_{\text{Entropy}} ~+~ r\cdot\underbrace{\sum_j\int_{0}^{q_j^{t+1}} \frac{c_j'(u)}{u} ~\mathrm{d}u}_{\text{Cost of production}}  , \label{eq: interpretation-of-potential-0}
\end{equation}
where the function $H(\bbs^t) = -\sum_j s_j^t\log s_j^t$ is the \emph{Shannon entropy} of the trial probability distribution. \Cref{eq: interpretation-of-potential-0} indicates that the dynamics can be viewed as a minimization process with trade-off of four aspects.
The first term is the KL divergence between the trial probability distribution $\bbs^t$ and the visibility distribution $\bbv$, which is preset by the platform. The market dynamics tend to minimise their gap such that the attention allocation of the users is approaching $\bbv$ desired by the platform. 
The second term can be viewed as the negation of the expected ``$log$-quality'', so the market dynamics tend to maximize this form of overall market quality/efficiency.

The third term is the Shannon entropy, which is maximized when its input is the uniform distribution, and it is minimized when the input is a degenerate distribution (i.e., with probability 1 on one of the items). Thus, it induces a tendency for the market dynamics to move away from uniform distribution and to \emph{polarize}.

Finally, the fourth term measures the total ``weighted'' cost of production in the market. When compared it with the plain cost of production $c_j(q_j) = \int_0^{q_j} c'(u) ~\mathrm{d}u$, the third term amplifies the cost when the quality level is low. It encourages the participation of content creators who are capable of producing low-quality items with relatively low costs. On the other hand, the market dynamics tend to block the creators who need significant effort to produce low-quality items, though they may have good potential to create high-quality contents with lower costs. This corresponds to the notion of ``barrier to entry'' \cite{mcafee2004whatIsBarrierToEntry}, where the market implicitly establishes a structural barrier to entry \cite{lutz2010perceptions} such that the creators need to first demonstrate the ability to enter the market by producing low-quality items in order to enter the market.

Notably, by increasing the strength of social signal $r$, all of the last three terms are more significant. It implies that the social signal is not only polarising the resulting attention distribution, but also improving the production efficiency through simultaneously reducing the input costs and improving the item qualities.

\subsection{Convergence Behaviours of the Dynamics} \label{subsec: result-convergence-two-sided-dynamic}
Clearly, any stationary points of the ER/PR dynamics must be the fixed point of the update rules \eqref{eq: ER-dynamic-close-form} and \eqref{eq: PR-dynamic-close-form} respectively.
By direct calculations, we notice that the fixed points of the two dynamics are exactly the same, which are the critical points of the Lagrangian of $\Phi_0$.
This motivates the following equilibrium definition.
\begin{definition}
    A point $\bbs^\star \in \Delta$ is called the \emph{two-sided trial-offer market equilibrium} (TSTOME) if it is a fixed point of the update \eqref{eq: ER-dynamic-close-form} or \eqref{eq: PR-dynamic-close-form}. Specifically, it satisfies
    \[
     s_j^{\star} = \frac{v_j (s_j^{\star}\zeta_j(s_j^{\star}))^{r}}{\sum_{i}v_i (s_i^{\star}\zeta_i(s_i^{\star}))^r}.
    \]
\end{definition}

\subsubsection*{\textbf{Convex Potential Function Cases}}
If $\Phi_0$ is a convex function, then it admits a (essentially) unique local minimizer, which corresponds to a TSTOME.
We explore conditions that guarantee $\Phi_0$ is convex.
Since the cost function $c_j$ is convex and $\zeta_j = (c')^{-1}$, $\zeta_j$ is an increasing function.
By direct computation of the Hessian of $\Phi_0$, we derive a sufficient and necessary condition that guarantees convexity of $\Phi_0$ at a given point.

\begin{lemma}
$\Phi_0$ is $(1-r)$-Bregman-smooth for if $\zeta_j$ is increasing for all $j\in \mathcal{C}$.
\label{lem: Bregman-smooth-phi0}
\end{lemma}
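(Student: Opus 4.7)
\textbf{Proof plan for \Cref{lem: Bregman-smooth-phi0}.} The plan is to reduce the Bregman-smoothness claim to a second-order condition on $\Phi_0$ and verify it by a direct Hessian computation. The key observation is the following standard equivalence: when $\Phi_0$ and $h$ are twice continuously differentiable on $\mathrm{rint}(\Delta)$, $\Phi_0$ is $L$-Bregman-smooth with respect to $d_h$ if and only if $Lh - \Phi_0$ is convex on $\mathrm{rint}(\Delta)$. This follows by integrating the Hessian inequality twice along the segment from $\bby$ to $\bbx$, which yields the required inequality
\[
\Phi_0(\bbx) - \Phi_0(\bby) - \inner{\nabla \Phi_0(\bby)}{\bbx - \bby} \;\le\; L\cdot d_h(\bbx, \bby).
\]
So I would state this equivalence and reduce the claim to showing $\nabla^2\!\left[(1-r)h - \Phi_0\right] \succeq 0$ on $\mathrm{rint}(\Delta)$, where $h(\bbs) = \sum_j (s_j\log s_j - s_j)$ generates the KL divergence.

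Next I would compute the two Hessians. Since $\Phi_0$ is a separable sum over coordinates $j$, both $\nabla^2 \Phi_0$ and $\nabla^2 h$ are diagonal matrices. For $h$ one immediately has $[\nabla^2 h(\bbs)]_{jj} = 1/s_j$. For $\Phi_0$, differentiating
\[
\partial_j \Phi_0(\bbs) \;=\; -\log v_j - r\log \zeta_j(s_j) - (r-1)(\log s_j + 1)
\]
once more gives
\[
[\nabla^2 \Phi_0(\bbs)]_{jj} \;=\; -\,r\cdot\frac{\zeta_j'(s_j)}{\zeta_j(s_j)} \;+\; \frac{1-r}{s_j}.
\]
Subtracting these, the $\frac{1-r}{s_j}$ terms cancel and one is left with the diagonal matrix with entries $\tfrac{r\,\zeta_j'(s_j)}{\zeta_j(s_j)}$.

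The final step is to note that, by assumption, $\zeta_j$ is increasing, so $\zeta_j'(s_j) \ge 0$; since $\zeta_j(s_j) > 0$ on $\mathrm{rint}(\Delta)$ (because $\zeta_j = (c_j')^{-1}$ maps into the positive quality range) and $r \ge 0$, every diagonal entry is nonnegative. Hence $\nabla^2\!\left[(1-r)h - \Phi_0\right] \succeq 0$, which by the equivalence above gives $(1-r)$-Bregman-smoothness.

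The computation is routine; I expect the only mildly delicate point to be the justification of the second-order equivalence on the relative interior of $\Delta$ (so that one can work with the unconstrained Hessian computed in $\bbs$-coordinates rather than in an interior chart of $\Delta$), and noting that the argument yields the inequality up to the boundary by standard continuity/extension, consistent with how Bregman-smoothness is used in \Cref{thm: mirror-descent-convergence-convex,thm: mirror-descent-convergence-non-convex-detailed}.
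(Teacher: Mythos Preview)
Your argument is correct but proceeds differently from the paper. The paper gives a direct first-order computation: it writes out
\[
\Phi_0(\bbx) - \Phi_0(\bby) - \inner{\nabla \Phi_0(\bby)}{\bbx-\bby}
= \sum_j \Big[r\!\int_{x_j}^{y_j}\!\log\zeta_j(z)\,\mathrm{d}z + r(x_j-y_j)\log\zeta_j(y_j)\Big] + (1-r)\sum_j x_j\log\frac{x_j}{y_j},
\]
and then uses only the monotonicity of $\log\zeta_j$ to bound the bracketed term by $0$, leaving exactly $(1-r)\,\mathrm{KL}(\bbx,\bby)$. Your route instead reduces to the second-order characterization $\nabla^2[(1-r)h-\Phi_0]\succeq 0$ and checks the diagonal entries $r\,\zeta_j'(s_j)/\zeta_j(s_j)\ge 0$. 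The paper's argument is slightly more elementary and, strictly speaking, needs only that each $\zeta_j$ is increasing (no derivative of $\zeta_j$ appears); your Hessian computation tacitly assumes $\zeta_j$ is differentiable. On the other hand, your approach is cleaner to execute and dovetails nicely with the Hessian computation already used in \Cref{lem: convexity-condition-Phi-0}, so the two lemmas become two halves of the same calculation.
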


\begin{lemma}\label{lem: convexity-condition-Phi-0}
    $\Phi_0(\bbs)$ is convex at the point $\bbs$ if and only if 
    \begin{equation}
        rs_j\cdot\zeta_j'(s_j) + (r-1)\zeta_j(s_j) \leq 0, ~~~~~~~\forall j \in \calC. \label{eq: convexity-condition-phi-0}
    \end{equation} 
\end{lemma}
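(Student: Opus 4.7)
The plan is to prove the lemma by direct computation of the Hessian of $\Phi_0$ at $\bbs$. The key structural observation is that each summand in the definition of $\Phi_0$ depends on only a single coordinate $s_j$, so the Hessian is diagonal; this reduces the positive semidefiniteness condition to a one-dimensional inequality for each coordinate, matching the form of the claim.

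First I would differentiate $\Phi_0$ term by term. Applying the fundamental theorem of calculus to the integral $\int_0^{s_j} \log \zeta_j(z)\,\mathrm{d}z$ gives
\[
\frac{\partial \Phi_0}{\partial s_j}(\bbs) \;=\; -\log v_j \;-\; r\log \zeta_j(s_j) \;-\; (r-1)\bigl(1 + \log s_j\bigr).
\]
Differentiating once more, and observing that all mixed partials vanish by separability of $\Phi_0$ across the coordinates $s_j$, the Hessian $\nabla^2 \Phi_0(\bbs)$ is diagonal with entries
\[
\frac{\partial^2 \Phi_0}{\partial s_j^2}(\bbs) \;=\; -\,\frac{r\,\zeta_j'(s_j)}{\zeta_j(s_j)} \;-\; \frac{r-1}{s_j} \;=\; -\,\frac{r\,s_j\,\zeta_j'(s_j) \;+\; (r-1)\,\zeta_j(s_j)}{s_j\,\zeta_j(s_j)}.
\]

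Next I would observe that on $\mathrm{rint}(\Delta)$ we have $s_j>0$, and since $\zeta_j=(c_j')^{-1}$ is a non-negative increasing function with $\zeta_j(0)=0$, we also have $\zeta_j(s_j)>0$ whenever $s_j$ lies in the interior; hence the denominator $s_j\zeta_j(s_j)$ is strictly positive and the sign of each diagonal entry is governed by the numerator alone. Convexity of $\Phi_0$ at $\bbs$ is equivalent to $\nabla^2\Phi_0(\bbs)\succeq 0$, which for a diagonal matrix holds if and only if every diagonal entry is non-negative. Requiring the numerator above to be non-positive for every $j\in\calC$ and rearranging gives exactly
\[
r\,s_j\,\zeta_j'(s_j) \;+\; (r-1)\,\zeta_j(s_j) \;\leq\; 0, \qquad \forall\, j \in \calC,
\]
which proves both directions of the equivalence.

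There is no substantial obstacle: the argument is a careful but routine differentiation, coupled with the observation that separability of $\Phi_0$ trivialises the off-diagonal entries. The only place where care is required is sign bookkeeping, since $\Phi_0$ carries an outer minus sign and the coefficient $(r-1)$ is negative for the regime of interest $r<1$; both must be tracked consistently in order to end with the stated inequality rather than its reverse.
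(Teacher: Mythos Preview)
Your proposal is correct and follows exactly the approach the paper indicates: the paper does not give a detailed proof but simply states that the condition is obtained ``by direct computation of the Hessian of $\Phi_0$,'' which is precisely what you do. Your observation that separability of $\Phi_0$ makes the Hessian diagonal, together with the sign analysis of the denominator $s_j\zeta_j(s_j)>0$, cleanly yields the stated coordinatewise condition in both directions.
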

The convexity condition \Cref{eq: convexity-condition-phi-0} can be interpreted as follows. Consider the function $\eta_j(x) = (r-1)\log(x) + r\log(\zeta_j(x))$. Then $\eta_j'(x) =\frac{rx\cdot\zeta_j'(x) + (r-1)\zeta_j(x)}{x\zeta_j(x)} $. The sign of this value indicates different directions of monotonicity of $\eta_j(x)$. Suppose $\eta_j'(x) \leq 0$, for all $x\in (0,1)$, for some $x_1 \geq x_2 \in (0,1) $, one will have
\begin{equation}
\eta_j(x_1) - \eta_j(x_2)  = \log\left(\frac{x_1}{x_2}\right)^{r-1} + \log\left(\frac{\zeta_j(x_1)}{\zeta_j(x_2)}\right)^r = \log \left(\frac{x_1}{x_2}\right)^{r-1}\left(\frac{\zeta_j(x_1)}{\zeta_j(x_2)}\right)^r \leq 0.
\label{eq: concave-condition}
\end{equation}
This is equivalent to
\[
\frac{\zeta_j(x_1)}{\zeta_j(x_2)} \leq \left(\frac{x_1}{x_2}\right)^{\frac{1-r}{r}},
\]
which means that to guarantee the convexity of $\Phi_0$ (hence the \emph{global} stability of the dynamic), it requires the (multiplicative) growth rates of the response functions $\zeta_j$ not being too large. Equivalently, content creators ought not be motivated to drastically alter quality level in response to a slight increase in income.

\begin{theorem}[Global Convergence under convexity]
    Suppose that $r< 1$, and the condition \eqref{eq: convexity-condition-phi-0} is satisfied for all $\bbs \in \Delta$, then the ER dynamic satisfies 
    \begin{equation}
    \Phi_0\left(\bbs^T\right) - \Phi_0^\star  \leq  \frac{(1-r)\cdot\mathrm{KL}(\bbs^\star, \bbs^0)}{ T}, \label{eq: convex-convergence-rate-ER}
    \end{equation}
    where $\Phi^\star$ is the minimum value of $\Phi_0$. And the PR dynamic satisfies
    \begin{equation}
       \Phi_0\left(\bbs^T\right) - \Phi_0^\star  \leq  \frac{\mathrm{KL}(\bbs^\star, \bbs^0)}{T}.\label{eq: convex-convergence-rate-PR}
     \end{equation}
    Moreover, if \Cref{eq: convexity-condition-phi-0} is satisfied strictly, both the ER and PR dynamics converge to the TSTOME which minimises $\Phi_0$ globally. \label{thm: dynamic-converge-convex}
\end{theorem}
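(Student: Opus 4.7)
The strategy is to combine the local convexity characterisation from \Cref{lem: convexity-condition-Phi-0} with the smoothness statement from \Cref{lem: Bregman-smooth-phi0} and the mirror-descent equivalence from \Cref{thm: dynamic-equiv-to-MD}, so that the two rate bounds follow directly from the standard convex-case guarantee \Cref{thm: mirror-descent-convergence-convex}. Under the hypothesis that \eqref{eq: convexity-condition-phi-0} holds for every $\bbs \in \Delta$, \Cref{lem: convexity-condition-Phi-0} implies that $\Phi_0$ is convex on $\Delta$, while \Cref{lem: Bregman-smooth-phi0} gives that $\Phi_0$ is $(1-r)$-Bregman-smooth with respect to the KL divergence. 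Hence $\Phi_0$ is $(1-r)$-Bregman-convex and \Cref{thm: mirror-descent-convergence-convex} applies with $L = 1-r$.

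Next I would simply plug in the learning rates provided by \Cref{thm: dynamic-equiv-to-MD}. For the ER dynamic, the equivalent mirror-descent learning rate is $\eta = 1/(1-r) = 1/L$, which is admissible; substituting into the bound $d_h(\bbs^\star,\bbs^0)/(\eta T)$ produces $(1-r)\,\mathrm{KL}(\bbs^\star,\bbs^0)/T$, which is exactly \eqref{eq: convex-convergence-rate-ER}. For the PR dynamic, the equivalent learning rate is $\eta = 1$; since $r < 1$ we have $1 \le 1/(1-r) = 1/L$, so the step-size condition is satisfied, and the bound simplifies to $\mathrm{KL}(\bbs^\star,\bbs^0)/T$, matching \eqref{eq: convex-convergence-rate-PR}. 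Both rate inequalities are thus immediate consequences of the reduction to Bregman-convex mirror descent.

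For the final claim under strict satisfaction of \eqref{eq: convexity-condition-phi-0}, I would argue that $\Phi_0$ is strictly convex on the relative interior of $\Delta$, so that it has an essentially unique global minimiser $\bbs^\star$. By \Cref{thm: dynamic-equiv-to-MD} the fixed points of the ER/PR updates coincide with the critical points of $\Phi_0$ over $\Delta$ in the sense of KKT (i.e.\ $\nabla_j \Phi_0(\bbs)$ is constant on $\support(\bbs)$), and under strict convexity these critical points reduce to the unique global minimiser; hence $\bbs^\star$ is the unique TSTOME. To upgrade function-value convergence to iterate convergence I would use compactness of $\Delta$: any accumulation point of $\{\bbs^t\}$ achieves $\Phi_0^\star$ by continuity of $\Phi_0$ together with the rate bound just proved, and by strict convexity the only such point is $\bbs^\star$, so the whole sequence converges.

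The step I expect to be the main obstacle is the last one, namely passing from the function-value guarantee of \Cref{thm: mirror-descent-convergence-convex} to iterate convergence of $\bbs^t$ itself, because the bound alone does not rule out oscillations among distinct minimisers. The compactness-plus-strict-convexity argument sketched above handles this cleanly, but it relies on $\Phi_0$ being continuous and finite on all of $\Delta$ so that boundary accumulation points are admissible candidate limits; since $r < 1$ makes the entropy-like terms in \eqref{eq: potential-function-0} continuous up to the boundary, this technical point is satisfied, and no further estimate beyond what the Bregman-smoothness framework already provides is needed.
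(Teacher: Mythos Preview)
Your proposal is correct and follows essentially the same route as the paper: combine \Cref{lem: Bregman-smooth-phi0} and \Cref{lem: convexity-condition-Phi-0} to obtain $(1-r)$-Bregman-convexity of $\Phi_0$, invoke \Cref{thm: dynamic-equiv-to-MD} for the learning rates, and then apply \Cref{thm: mirror-descent-convergence-convex}. Your treatment is in fact more detailed than the paper's (you explicitly verify the step-size condition $\eta \le 1/L$ for both dynamics and spell out the compactness/strict-convexity argument for iterate convergence), whereas the paper simply asserts that convergence ``follows from the continuity of $\Phi_0$''.
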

\begin{proof}
    By \Cref{lem: Bregman-smooth-phi0} and \Cref{lem: convexity-condition-Phi-0}, the potential function $\Phi_0$ is $(1-r)$-Bregman-convex. By \Cref{thm: mirror-descent-convergence-convex} and \Cref{thm: dynamic-equiv-to-MD}, the convergence results \Cref{eq: convex-convergence-rate-ER,eq: convex-convergence-rate-PR} follow. If \Cref{eq: convexity-condition-phi-0} is satisified strictly, by \Cref{lem: convexity-condition-Phi-0}, $\Phi_0$ is strictly convex, the minimiser is unique, and the convergence follows from the continuity of $\Phi_0$.
\end{proof}

\subsubsection*{\textbf{Non-convex Potential Function Cases}} If $\Phi_0$ is not convex, global convergence is not possible in general. We can show the following local convergence.

\begin{theorem}[Local Convergence]
    Assume that $r< 1$. Suppose the interior strict local minimiser $\bbs^\sharp$ of $\Phi_0$ admits a $(\gamma, \kappa)$-smooth neighbourhood (see \Cref{def: smooth-neighbourhood}) 
    $\Omega\left(\bbs^\sharp\right)$ such that $\frac{1}{1-r} \leq \frac{2\gamma}{\kappa^2}$, then the ER dynamic with initialisation $\bbs^0 \in \Omega\left(\bbs^\sharp\right)$ satisfies
    \begin{equation}
     0\leq \Phi_0\left(\bbs^T\right) - \Phi_0\left(\bbs^\sharp\right)  \leq  \frac{(1-r)\cdot\mathrm{KL}(\bbs^\sharp, \bbs^0)}{ T}, \label{eq: nonconvex-convergence-rate-ER}
    \end{equation}
    and $\lim_{T\rightarrow\infty} \bbs^T = \bbs^\sharp$. Suppose the interior strict local minimiser $\bbs^\flat$ admits a $(\gamma, \kappa)$-smooth neighbourhood $\Omega\left(\bbs^\flat\right)$ such that $1 \leq \frac{2\gamma}{\kappa^2}$, then the PR dynamic with initialisation $\bbs^0 \in \Omega\left(\bbs^\flat\right)$ satisfies
    \begin{equation}
     0\leq \Phi_0\left(\bbs^T\right) - \Phi_0\left(\bbs^\flat\right)  \leq  \frac{\mathrm{KL}(\bbs^\flat, \bbs^0)}{ T}, \label{eq: nonconvex-convergence-rate-PR}
    \end{equation}
    and $\lim_{T\rightarrow\infty} \bbs^T = \bbs^\flat$. \label{thm: local-convergence-dynamic}
\end{theorem}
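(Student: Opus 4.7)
The plan is to reduce this statement directly to the abstract mirror descent local convergence result, Theorem thm: mirror-descent-convergence-non-convex-detailed, using the equivalence established in Theorem thm: dynamic-equiv-to-MD. Both items we need to prove (ER and PR) are proved in parallel: the only differences lie in the effective step size ($\eta=\frac{1}{1-r}$ vs.\ $\eta=1$) and the target local minimiser ($\bbs^\sharp$ vs.\ $\bbs^\flat$). So my proof will be structured as a single verification that the hypotheses of the abstract local convergence theorem are satisfied, done twice.

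First I would check the smoothness hypothesis. Lemma lem: Bregman-smooth-phi0 tells us that $\Phi_0$ is $(1-r)$-Bregman-smooth on all of $\Delta$; twice differentiability of $\Phi_0$ (using that $\zeta_j=(c_j')^{-1}$ is $C^1$ under the assumptions of Lemma lem: response-of-producers) and the assumption that $\Phi_0$ has finitely many local minimisers in $\mathrm{rint}\Delta$ can be taken as standing background hypotheses, and together with the $(\gamma,\kappa)$-smoothness of the neighbourhood they supply every structural assumption of Theorem thm: mirror-descent-convergence-non-convex-detailed. The step-size condition in that theorem is $\eta\le\min\{\tfrac{2\gamma}{\kappa^2},\tfrac{1}{L}\}$ with $L=1-r$. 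For the ER case, Theorem thm: dynamic-equiv-to-MD gives $\eta=\tfrac{1}{1-r}=\tfrac{1}{L}$, and the hypothesis $\tfrac{1}{1-r}\le\tfrac{2\gamma}{\kappa^2}$ is assumed directly; for the PR case $\eta=1\le\tfrac{1}{1-r}$ (using $r\ge 0$) and $1\le\tfrac{2\gamma}{\kappa^2}$ is again assumed. So both step-size conditions hold.

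Applying Theorem thm: mirror-descent-convergence-non-convex-detailed then yields, for ER,
\begin{equation*}
\Phi_0(\bbs^T)-\Phi_0(\bbs^\sharp)\le\frac{\mathrm{KL}(\bbs^\sharp,\bbs^0)}{\eta T}=\frac{(1-r)\,\mathrm{KL}(\bbs^\sharp,\bbs^0)}{T},
\end{equation*}
which is exactly equation eq: nonconvex-convergence-rate-ER, and analogously with $\eta=1$ for PR we recover equation eq: nonconvex-convergence-rate-PR. The lower bound $0\le \Phi_0(\bbs^T)-\Phi_0(\bbs^\sharp)$ follows from $\bbs^\sharp$ being a local minimiser together with the fact (inherent in the proof of Theorem thm: mirror-descent-convergence-non-convex-detailed) that the iterates remain in $\Omega$, so comparing function values to the local minimum is valid.

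The one piece that the abstract theorem does not directly give is the claim $\lim_{T\to\infty}\bbs^T=\bbs^\sharp$ (resp.\ $\bbs^\flat$): Theorem thm: mirror-descent-convergence-non-convex-detailed only controls function values. To promote this to iterate convergence I would use the second inequality of Lemma lem: local-convexity-around-strict-minimiser, $\inner{\nabla\Phi_0(\bbs^T)}{\bbs^T-\bbs^\sharp}\ge\gamma\|\bbs^T-\bbs^\sharp\|_2^2$, combined with Bregman smoothness: these together imply $\Phi_0(\bbs^T)-\Phi_0(\bbs^\sharp)\ge c\,\|\bbs^T-\bbs^\sharp\|_2^2$ for a positive constant $c$ depending only on $(\gamma,\kappa,L)$, so the $O(1/T)$ bound on function values forces $\|\bbs^T-\bbs^\sharp\|_2\to 0$. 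Verifying this inequality and the invariance $\bbs^T\in\Omega$ for all $T$ is the main technical step; the latter is essentially a Bregman-divergence descent argument (each mirror step decreases $d_h(\bbs^\sharp,\cdot)$ by the local convexity/smoothness, hence never leaves a small enough sublevel neighbourhood of $\bbs^\sharp$), and this is where I expect the most care to be needed, since the argument has to be set up so that the iterates never escape $\Omega$ in any single step under the stated step-size bound.
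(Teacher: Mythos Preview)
Your reduction is exactly how the paper proceeds: apply \Cref{thm: dynamic-equiv-to-MD} to identify the dynamics as mirror descent with $\eta=\tfrac{1}{1-r}$ (ER) or $\eta=1$ (PR), invoke \Cref{lem: Bregman-smooth-phi0} for $L=1-r$, check the step-size condition $\eta\le\min\{\tfrac{2\gamma}{\kappa^2},\tfrac{1}{L}\}$, and read off the function-value bound from \Cref{thm: mirror-descent-convergence-non-convex-detailed}. The invariance $\bbs^T\in\Omega$ is indeed the key technical step and is handled in the paper by \Cref{lem: no-escape-from-the-local-region}, via the KL-descent argument you anticipate.

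The one place you diverge is iterate convergence. The paper's argument (at the end of the proof of \Cref{thm: mirror-descent-convergence-non-convex-detailed}) is simpler: the iterates stay in a bounded neighbourhood where $\bbs^\sharp$ is the \emph{unique} strict local minimiser, and $\Phi_0(\bbs^T)\to\Phi_0(\bbs^\sharp)$, so any accumulation point must equal $\bbs^\sharp$. Your proposed quantitative route does work, but the justification you give is slightly off: Bregman smoothness only yields upper bounds, so combining it with \eqref{eq-lem: local-convexity-line-2} gives $\Phi_0(\bbs^T)-\Phi_0(\bbs^\sharp)\ge \gamma\|\bbs^T-\bbs^\sharp\|_2^2 - L\cdot\mathrm{KL}(\bbs^\sharp,\bbs^T)$, and the KL term can swamp the first unless you further shrink the neighbourhood. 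The cleaner source of the quadratic lower bound is the regularity assumption itself (positive definite reduced Hessian), which gives $\Phi_0(\bbs)-\Phi_0(\bbs^\sharp)\ge c\|\bbs-\bbs^\sharp\|_2^2$ directly by Taylor expansion; but the paper's compactness-plus-uniqueness argument sidesteps this entirely.
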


The above theorem indicates that, in the cases that the potential function has various local minimizers, the limit point of the dynamic depends on its initialisation and the corresponding learning rate. Also, since $1 < \frac{1}{1-r}$, the local minimum attracting the ER dynamic is also attracting the PR dynamic. However, with a larger learning rate, the ER dynamic is able to escape the attractive basins. In the context of two-sided attention markets, if the content creators are reacting to the market change slower, the market dynamic may escape from a TSTOME near the initialisation, and converge to other TSTOMEs.

If $\Phi_0$ is concave, the minimising dynamic may converge to the boundary quickly. We present some special cases that the limit of the market dynamic is identifiable.
\begin{prop}
    For the both of the ER and PR dynamic, for some $i,j\in \calC$, suppose $rs_j\cdot\zeta_j'(s_j) + (r-1)\zeta_j(s_j) > 0$. And also $s_j^0 \geq s_i^0$, $\zeta_j(x) > \zeta_i(x)$, $v_j > v_i$, then it holds that 
    \[
     \lim_{t\rightarrow \infty} s_i^t = 0~. \label{prop: converge-to-boundary-dominance-prop}
    \] 
\end{prop}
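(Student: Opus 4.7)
The plan is to show that $T_t := \log\bigl(s_j^t / s_i^t\bigr)$ grows at least linearly in $t$; since $s_j^t \leq 1$ on the simplex, this forces $s_i^t \leq e^{-T_t} \to 0$. The strategy, which works in parallel for ER and PR, is to produce a uniform per-step drift $T_{t+1} - T_t \geq c$ with $c > 0$, where the value of $c$ differs only by a factor of $1/(1-r)$ between the two updates.

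First, I would verify by induction that $s_j^t \geq s_i^t$ for every $t$: the base case is the hypothesis $s_j^0 \geq s_i^0$, and the inductive step follows by plugging $\zeta_j(s_j^t) \geq \zeta_j(s_i^t) > \zeta_i(s_i^t)$ (using monotonicity of $\zeta_j = (c_j')^{-1}$ from convexity of $c_j$, together with the pointwise dominance $\zeta_j > \zeta_i$) and $v_j > v_i$ into the closed-form updates of \Cref{lem: dynamics-formulation}. Next, I would rewrite the non-convexity assumption $rs_j\,\zeta_j'(s_j) + (r-1)\zeta_j(s_j) > 0$---read as holding throughout the relevant range $s_j \in (0,1)$---as the log-elasticity bound $d\log\zeta_j / d\log s > (1-r)/r$. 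Integrating this bound between $s_i^t$ and $s_j^t$ yields
\[
\frac{\zeta_j(s_j^t)}{\zeta_j(s_i^t)} \;\geq\; \left(\frac{s_j^t}{s_i^t}\right)^{(1-r)/r} \quad\text{whenever } s_j^t \geq s_i^t,
\]
and combining with $\zeta_j > \zeta_i$ gives the central inequality $\log\bigl(\zeta_j(s_j^t)/\zeta_i(s_i^t)\bigr) \geq \tfrac{1-r}{r}\, T_t$.

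Plugging this into the logarithm of the PR update \Cref{eq: PR-dynamic-close-form} yields
\[
T_{t+1} \;=\; \log(v_j/v_i) + r T_t + r\log\frac{\zeta_j(s_j^t)}{\zeta_i(s_i^t)} \;\geq\; \log(v_j/v_i) + T_t,
\]
because $r + r\cdot\tfrac{1-r}{r} = 1$. The same substitution into the ER update \Cref{eq: ER-dynamic-close-form} gives $T_{t+1} \geq \tfrac{1}{1-r}\log(v_j/v_i) + T_t$. In both cases $T_t$ grows at least linearly, so $s_i^t \to 0$.

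The main obstacle is precisely engineering this cancellation. A naive attempt that uses only monotonicity of $\zeta_j$ and $\zeta_j > \zeta_i$ retains a damping factor $r<1$ (resp.\ $r/(1-r)$) multiplying $T_t$ on the right-hand side, which drives $T_t$ to a finite fixed point rather than to infinity---consistent with the existence of interior TSTOMEs. The elasticity exponent $(1-r)/r$ supplied by the non-convexity hypothesis is exactly what is needed to complete the multinomial-logit damping back to $1$ and expose the visibility-gap drift $\log(v_j/v_i) > 0$ as the per-step increment; any weakening of the hypothesis would demand a substantially more delicate argument and, in view of \Cref{thm: dynamic-converge-convex}, would not in general support the conclusion.
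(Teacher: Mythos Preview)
Your proposal is correct and follows essentially the same route as the paper: form the ratio $s_j^t/s_i^t$, use the ``concavity'' hypothesis on $\zeta_j$ to obtain the elasticity bound $\zeta_j(s_j^t)/\zeta_j(s_i^t) \ge (s_j^t/s_i^t)^{(1-r)/r}$, combine with $\zeta_j>\zeta_i$ and $v_j>v_i$, and observe that the damping exponent cancels to $1$, leaving a geometric drift by $v_j/v_i$ (resp.\ $(v_j/v_i)^{1/(1-r)}$ for ER). Your logarithmic bookkeeping via $T_t$ is a cosmetic reformulation of the paper's multiplicative recursion, and your explicit induction that $s_j^t\ge s_i^t$ for all $t$ makes precise a step the paper uses without stating.
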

\Cref{prop: converge-to-boundary-dominance-prop} states that when the potential function is concave, if one content creator is ``dominated'' by another creator in the sense that it has worse initialisation, larger cost of production and lower visibility, then the market share of this content creator will converge to $0$. Fixing the response functions $\zeta_j$, there exists sufficiently large $r$ such the potential function is concave. Hence, a stronger social influence signal implies the compression of weak content creators. 
A direct corollary of \Cref{prop: converge-to-boundary-dominance-prop} is: if a creator dominates all of the other creators, then they will become the monopoly of the market.
\begin{corollary}
    For both the ER and the PR dynamic, for some $j\in \calC$, suppose $rs_j\cdot\zeta_j'(s_j) + (r-1)\zeta_j(s_j) > 0$. And also $s_j^0 \geq s_i^0$, $\zeta_j(x) > \zeta_i(x)$, $v_j > v_i$ for every $i \in \calC \neq j$, then it holds that 
    \[
     s_j^t \rightarrow 1,~~ \text{as }t\rightarrow\infty. \label{cor: converge-to-boundary-dominance-prop}
    \] 
\end{corollary}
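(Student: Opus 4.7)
The plan is to reduce the corollary directly to Proposition~\ref{prop: converge-to-boundary-dominance-prop} by applying it pairwise, and then invoke the simplex constraint to transfer information about the vanishing coordinates back to $s_j^t$. Since the hypotheses of the corollary are stated uniformly across all $i \in \calC \setminus \{j\}$, essentially no new analytical work is required beyond bookkeeping.

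The first step is to fix an arbitrary $i \in \calC$ with $i \neq j$ and check that the quadruple of hypotheses needed by Proposition~\ref{prop: converge-to-boundary-dominance-prop} for the pair $(i,j)$ holds. The concavity-type condition $r s_j \zeta_j'(s_j) + (r-1)\zeta_j(s_j) > 0$ is assumed on the dominating creator $j$ and is independent of $i$, so it carries over at no cost. The three comparison conditions $s_j^0 \geq s_i^0$, $\zeta_j(x) > \zeta_i(x)$ for all relevant $x$, and $v_j > v_i$ are by assumption satisfied for every such $i$. Hence Proposition~\ref{prop: converge-to-boundary-dominance-prop} yields $\lim_{t\to\infty} s_i^t = 0$ for each individual $i \neq j$.

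The second step is to pass from the pointwise limits $s_i^t \to 0$ to the claimed limit $s_j^t \to 1$. Because $\bbs^t \in \Delta$ at every epoch (a basic property of both the ER and PR updates, which produce probability vectors), we have the identity
\begin{equation*}
s_j^t \;=\; 1 \;-\; \sum_{i \in \calC,\, i \neq j} s_i^t.
\end{equation*}
Since $\calC$ is finite, we may exchange the finite sum with the limit, giving $\lim_{t\to\infty} s_j^t = 1 - \sum_{i \neq j} 0 = 1$, which is the desired conclusion.

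I do not anticipate a genuine obstacle here: the corollary is phrased precisely so that the dominance hypothesis factors into $|\calC|-1$ separate pairwise dominance hypotheses, each of which triggers Proposition~\ref{prop: converge-to-boundary-dominance-prop}. The only thing worth highlighting in the write-up is the implicit use of finiteness of $\calC$ when passing the limit through the sum; if one wanted a version of this result for a continuum of creators, this step would have to be revisited (e.g.\ by invoking dominated convergence), but in the present discrete setting it is immediate.
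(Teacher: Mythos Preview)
Your proposal is correct and matches the paper's approach exactly: the paper presents this as a ``direct corollary'' of Proposition~\ref{prop: converge-to-boundary-dominance-prop} without further proof, so the intended argument is precisely the pairwise application you describe followed by the simplex identity $s_j^t = 1 - \sum_{i\neq j} s_i^t$.
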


\subsection{Dynamics for a Set of Recommendation Strategies} 
\label{subsec: result-recommendation-policy}

The results in the previous two subsections can be generalized to settings where the RS employ popularity-ranking, quality-ranking, or a mixture of them. 
In these cases, we define $\bbs^t$ as follows:
\begin{equation}
s_j^t = \frac{v_j^{t+1}(\phi_j^t)^r}{\sum_i v_i^{t+1}(\phi_i^t)^r}. \label{eq: variable-transformation-ranking}
\end{equation}
We summarize the update rules of $\bbs^{t+1}$, the potential functions and the equivalences to mirror descent in a table below.
ER and PR are the popularity updates. ``$\equiv$ MD'' is an acronym for ``equivalent to mirror descent with KL divergence''.
$\eta$ refers to the learning rate. The update rule of $\bbs^{t+1}$ for ER and mixed ranking is
\begin{equation}
s_j^{t+1} ~=~ \frac{(\mu_j)^{\frac{1}{1-r}}\left(\zeta_j(s_j^{t})\right)^{\alpha+ \frac{r+\beta}{1-r}}\left(\zeta_j(s_j^{t-1})\right)^{\frac{\alpha r}{1-r}}(s_j^t)^{\frac{\beta}{1-r}}}{\sum_i (\mu_i)^{\frac{1}{1-r}}\left(\zeta_i(s_i^{t})\right)^{\alpha+ \frac{r+\beta}{1-r}}\left(\zeta_i(s_i^{t-1})\right)^{\frac{\alpha r}{1-r}}(s_i^t)^{\frac{\beta}{1-r}}}~,\label{eq:s-update-ER}
\end{equation}
while the update rule of $\bbs^{t+1}$ for PR and mixed ranking is
\begin{equation}
s_j^{t+1} ~=~ \frac{\mu_j \left(\zeta_j(s_j^t)\right)^{r+\alpha+\beta} (s_j^t)^{r+\beta}}{\sum_i \mu_i \left(\zeta_i(s_i^t)\right)^{r+\alpha+\beta} (s_i^t)^{r+\beta}}~.\label{eq:s-update-PR}
\end{equation}

\medskip

\noindent\hspace*{-0.15in}
\begin{tabular}{|l|l||c|c|c|}
\hline
\multicolumn{2}{|l||}{visibility} & \multirow{2}{*}{\textbf{popular ranking, $\calV_{\mathrm{pop}}$}} & \multirow{2}{*}{\textbf{quality ranking, $\calV_{\mathrm{qual}}$}} & \multirow{2}{*}{\textbf{mixed ranking, $\calV_{\mathrm{mix}}$}} \\
\multicolumn{2}{|l||}{updates} & & & \\
\hline
\multicolumn{2}{|l||}{} & \multirow{8}{*}{{\small $\begin{aligned} &\Phi_{\mathrm{pop}}(\bbs) =\\
&-\sum_{j} \Big[s_j \log \mu_j \\ &~~~+(r+\beta) \int_{0}^{s_j} \log \zeta_j(z) ~\mathrm{d}z\\ &~~~+(r+\beta-1)s_j\log s_j\Big]  \end{aligned}$}}
& \multirow{8}{*}{{\small $\begin{aligned} &\Phi_{\mathrm{qual}}(\bbs) =\\
&-\sum_{j} \Big[s_j \log \mu_j \\ &~~~+(r+\alpha) \int_{0}^{s_j} \log \zeta_j(z) ~\mathrm{d}z\\ &~~~+(r-1)s_j\log s_j\Big]  \end{aligned}$}}
& \multirow{8}{*}{{\small $\begin{aligned} &\Phi_{\mathrm{mix}}(\bbs) =\\
&-\sum_{j} \Big[s_j \log \mu_j \\ &~~~+(r+\alpha+\beta) \int_{0}^{s_j} \log \zeta_j(z) ~\mathrm{d}z\\ &~~~+(r+\beta-1)s_j\log s_j\Big]  \end{aligned}$}}\\
\multicolumn{2}{|l||}{} & & & \\
\multicolumn{2}{|l||}{} & & & \\
\multicolumn{2}{|l||}{potential} & & & \\
\multicolumn{2}{|l||}{function} & & & \\
\multicolumn{2}{|l||}{} & & & \\
\multicolumn{2}{|l||}{} & & & \\
\multicolumn{2}{|l||}{} & & & \\
\hline
\hline
\multirow{2}{*}{ER} & $\bbs$-update  & subs.~$\alpha=0$ into \eqref{eq:s-update-ER} & subs.~$\beta=0$ into \eqref{eq:s-update-ER} & see \Cref{eq:s-update-ER} \\
\cline{2-5}
& $\equiv$ MD? & YES, $\eta = \frac{1}{1-r}$ & \multicolumn{2}{|c|}{similar to MD with momentum; see remark (c) below} \\
\hline
\hline
\multirow{2}{*}{PR} & $\bbs$-update  & subs.~$\alpha=0$ into \eqref{eq:s-update-PR} & subs.~$\beta=0$ into \eqref{eq:s-update-PR} & see \Cref{eq:s-update-PR}  \\
\cline{2-5}
& $\equiv$ MD? & \multicolumn{3}{|c|}{YES, $\eta = 1$} \\
\hline
\end{tabular}

\medskip

We give five remarks about the three cases in the above table.

\noindent (a) When compared to $\Phi_0$, $\Phi_{\mathrm{pop}}$ is in exactly the same form after replacing $r$ with $r + \beta$. Additionally, the ER/PR dynamics retain the same learning rates as in the inactive platform $\calV_0$. By implementing the popularity ranking with difference choices of $\beta$, the platform can adjust the strength of the social signal without altering the core nature of the dynamics.

\noindent (b) Recall that for the constant policy case, the condition for $\Phi_0$ to be convex is $rs_j\cdot \zeta_j'(s_j) + (r-1)\zeta_j(s_j)\le 0$.
Then the condition for $\Phi_{\mathrm{pop}}$ to be convex is $(r+\beta)s_j\cdot \zeta_j'(s_j) + (r+\beta-1)\zeta_j(s_j) \le 0$, which is harder to satisfy,
since $rs_j\cdot \zeta_j'(s_j) + (r-1)\zeta_j(s_j) \leq (r+\beta)s_j\cdot \zeta_j'(s_j) + (r+\beta-1)\zeta_j(s_j)$.
Hence, whenever $\beta >0 $, the popularity ranking policy weakens the potential function's Bregman-smoothness or convexity, making it more difficult for the market dynamics to converge to its minimum. This aligns with the intuition that strong social signals drive polarization and unpredictability in market dynamics. Conversely, the platform may also use its power to weaken the social signal by setting $\beta <0$, thereby enhancing the stability of the dynamics.

\noindent (c) We cannot cast the $\bbs$-update for ER and quality/mixed rankings as mirror descent. However, this update still has some flavor analogous to \emph{mirror descent with momentum}.
An alternative form of the mirror descent update \Cref{eq: mirror-descent-update-rule} is
\begin{align*}
     \widetilde{\bbx^{t+1}} &= (\nabla h)^{-1}\left( \nabla h(\bbx^t) - \eta\cdot \nabla f(\bbx_t)\right), \\
     \bbx^{t+1} &= \argmin_{\bbx \in X} d_h\left(\bbx, \widetilde{\bbx^{t+1}}\right).
\end{align*}
Since $d_h$ is the KL-divergence, $(\nabla h)_j (\cdot) = \log(\cdot)$ and  $(\nabla h)_j^{-1} (\cdot) = \exp(\cdot)$ respectively. The update rule in \Cref{eq:s-update-ER} can be rewritten as
\begin{align*}
    \widetilde{\bbs^{t+1}} &= (\nabla h)^{-1}\left( \nabla h(\bbs^t) - \frac{\theta}{1-r}\cdot \nabla \Phi_{\mathrm{mix}}(\bbs^t)  - \frac{1-\theta}{1-r}\cdot \nabla \Phi_{\mathrm{mix}}(\bbs^{t-1})\right),\\
    \bbs^{t+1} &= \argmin_{\bbs \in \Delta} \mathrm{KL}\left(\bbs, \widetilde{\bbs^{t+1}}\right).
\end{align*}
where $\theta = \frac{r+\beta + \alpha - (r+\beta)\alpha}{r+\beta + \alpha }$. Compared with the standard mirror descent update, which is equivalent to gradient descent in the dual space, the above ER update can be seen as gradient descent with momentum in the dual space. Hence, ER and PR are, again, the minimising processes against the same potential function. 

\noindent (d) Performing the same transformation on $\Phi_{\mathrm{qual}}$ as \Cref{eq: interpretation-of-potential-0}, we get %
\begin{equation}
\Phi_{\mathrm{qual}}\left(\bbs^t\right) = \underbrace{\mathrm{KL}(\bbs^t, \boldsymbol{\mu})}_{\substack{\text{Attention alignment} \\ \text{to visibility}}} +  r\cdot\underbrace{H(\bbs^t)}_{\text{Entropy}}+ (r+\alpha)\cdot\left(\underbrace{\sum_j\int_{0}^{q_j^{t+1}} \frac{c'(u)}{u} ~\mathrm{d}u}_{\text{Cost of production}} - \underbrace{\sum_j s_j^t\log\left(q_j^{t+1}\right)}_{\text{Expected $\log$-quality}}\right). \label{eq: interpretation-of-potential-qual}
\end{equation}
With $\alpha >0$, by using quality ranking, compared to $\Phi_0$, the objective $\Phi_{\mathrm{qual}}$ puts more emphasis on the last two terms, which correspond to the market efficiency i.e. ``input cost of production - output qualities''. In contrast with popularity, which is equivalent to increasing $r$ in $\Phi_0$ globally, the quality ranking potential does not increase the weight on the entropy term. Hence, intuitively, the quality ranking policy is a more desirable recommendation policy such that it improves the market efficient without deteriorating the inequality of the attention distribution. 

\hide{Moreover, with the same techniques as \Cref{lem: Bregman-smooth-phi0,lem: convexity-condition-Phi-0}, we get the conditions for $\Phi_{\mathrm{qual}}$ to be smooth or convex.
\begin{lemma}
With $\zeta_j$ is increasing for all $j\in \calC$,
\begin{itemize}
    \item  $\Phi_{\mathrm{qual}}$ is $(1-r)$-Bregman-smooth.
    \item  $\Phi_{\mathrm{qual}}(\bbs)$ is convex at $\bbs$ if $~~~~(r+\alpha)\cdot s_j\cdot\zeta_j'(s_j) + (r-1)\zeta_j(s_j) \leq 0, ~~~~~~~\forall j \in \calC.$ 
\end{itemize}
\end{lemma}
Compared to \(\Phi_0\), \(\Phi_{\mathrm{qual}}\) is at least as Bregman-smooth. However, similar to \(\Phi_{\mathrm{pop}}\), since  
\[
(r+\alpha)\cdot s_j \cdot \zeta_j'(s_j) + (r-1) \zeta_j(s_j) \geq r \cdot s_j \cdot \zeta_j'(s_j) + (r-1) \zeta_j(s_j),
\]  
\(\Phi_{\mathrm{qual}}\) is more difficult to be convex. Nevertheless, compared to \(\Phi_{\mathrm{pop}}\), for the same strength of recommendations (\(\alpha = \beta\)), \(\Phi_{\mathrm{qual}}\) has a better convexity landscape because  
\[
(r+\alpha) \cdot s_j \cdot \zeta_j'(s_j) + (r-1) \zeta_j(s_j) \leq (r+\alpha) \cdot s_j \cdot \zeta_j'(s_j) + (r+\alpha-1) \zeta_j(s_j).
\]
Therefore, while the popularity ranking policy makes the market dynamics harder to converge, it is more conducive to stability because it preserves Bregman-smoothness and only slightly affects convexity.

For the PR dynamic, since it can still be formulated as the mirror descent update, \Cref{thm: local-convergence-dynamic} still holds. The rigorous analysis of ER dynamic is left as future work since it is beyond our analysis framework.

\subsubsection*{\textbf{Mixed Ranking $\calV_{\mathrm{mix}}$}} With the same variable $s_j^t$ defined in \cref{eq: variable-transformation-ranking}, the dynamics can be reformulated as follows.

\begin{lemma}
     The ER dynamic with mixed ranking update can be reformulated as
     \[
     s_j^{t+1} = \frac{\mu_j^{\frac{1}{1-r}}\left(\zeta_j(s_j^{t})\right)^{\alpha+ \frac{r+\beta}{1-r}}\left(\zeta_j(s_j^{t-1})\right)^{\frac{\alpha r}{1-r}}(s_j^t)^{\frac{\beta}{1-r}}}{\sum_i \mu_i^{\frac{1}{1-r}}\left(\zeta_i(s_i^{t})\right)^{\alpha+ \frac{r+\beta}{1-r}}\left(\zeta_i(s_i^{t-1})\right)^{\frac{\alpha r}{1-r}}(s_i^t)^{\frac{\beta}{1-r}}},
     \]
     And the PR dynamic with mixed ranking update can be reformulated as
     \[
     s_j^{t+1} = \frac{\mu_j \left(\zeta_j(s_j^t)\right)^{r+\alpha+\beta} (s_j^t)^{r+\beta}}{\sum_i \mu_i \left(\zeta_i(s_i^t)\right)^{r+\alpha+\beta} (s_i^t)^{r+\beta}}.
     \] \label{lem: dynamic-formulation-mix}
\end{lemma}
Similar to the other ranking mechanisms, the PR dynamic can still be cast as mirror descent of a potential function.
\begin{theorem}
     Define the potential function $\Phi_{\mathrm{mix}}$ as
\begin{equation}
\Phi_{\mathrm{mix}}(\bbs) := -\left(\sum_{j} s_j \log \mu_j + (r+\alpha + \beta) \cdot\int_{0}^{s_j} \log \zeta_j(z) ~\mathrm{d}z + (r+ \beta-1)s_j\log s_j  \right). \label{eq: potential-function-0}
\end{equation}
The PR dynamic stated in \Cref{lem: dynamic-formulation-quality-ranking} is equivalent to mirror descent of $\Phi_{\mathrm{qual}}$ with KL divergence on $\Delta$. The corresponding learning rate $\eta =1$.
\end{theorem}
For ER dynamic, it can be written as
\begin{align*}
    \widetilde{\bbs^{t+1}} &= (\nabla h)^{-1}\left( \nabla h(\bbs^t) - \frac{\theta}{1-r}\cdot \nabla \Phi_{\mathrm{qual}}(\bbs^t)  - \frac{1-\theta}{1-r}\cdot \nabla \Phi_{\mathrm{qual}}(\bbs^{t-1})\right),\\
    \bbs^{t+1} &= \argmin_{\bbs \in \Delta} \mathrm{KL}\left(\bbs, \widetilde{\bbs^{t+1}}\right),
\end{align*}
where $\theta = \frac{r+\beta + \alpha - (r+\beta)\alpha}{r+\beta + \alpha }$. By examining the potential function, we notice that $\Phi_{\mathrm{mix}}$ can be regarded as a combination of the transformations caused by both ranking strategies. Starting with $\Phi_0$, we first replace $r$ by $r+\beta$, then increase the coefficient of the second term by $\alpha$. Hence, implications of $\calV_{\mathrm{mix}}$ can exactly be decoupled as the joint effects of applying both $\calV_{\mathrm{qual}}$ and $\calV_{\mathrm{pop}}$.}

\noindent (e) By implementing the mixed ranking mechanism with appropriate parameters $\mu_j,\alpha,\beta$, we can recover any desired learning rates of mirror descent w.r.t.~the customised potential function $\Phi_{\mathrm{cust}}^{a,b, \boldsymbol{\sigma}}$. 
\begin{theorem}
    Define the customised potential as
    \[
    \Phi_{\mathrm{cust}}^{a,b, \boldsymbol{\sigma}}(\bbs) := -\left(\sum_{j} s_j \log \sigma_j + a \cdot\int_{0}^{s_j} \log \zeta_j(z) ~\mathrm{d}z + b\cdot s_j\log s_j  \right),
    \]
    where $a,b \in \rr$ are arbitrary constants. Then, with the social influence factor $r\in(0,1)$, by setting $\alpha = \eta a -\eta b + \eta -1$, $\beta = \eta b - \eta + 1 - r$ and $\mu_j = \sigma_j^\eta$, the PR update with mixed ranking strategy is equivalent to the mirror descent update with learning rate $\eta$. \label{thm: customised-potential-function}
\end{theorem}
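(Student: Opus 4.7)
The plan is a direct termwise identification. Both the PR mixed-ranking update \eqref{eq:s-update-PR} and the entropic mirror-descent update \eqref{eq: mirror-descent-KL} on $\Delta$ are Gibbs-type maps: each normalises a $j$-wise product of powers of the coordinates times a scalar prefactor. Equality of two such maps on the simplex is equivalent to equality of their unnormalised numerators up to a single $j$-independent multiplicative constant, so the entire argument reduces to matching a handful of exponents.

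First I would compute the gradient of the customised potential. Because $\Phi_{\mathrm{cust}}^{a,b,\boldsymbol{\sigma}}$ decomposes as a sum of per-coordinate terms, differentiation is immediate:
\[
\partial_j \Phi_{\mathrm{cust}}^{a,b,\boldsymbol{\sigma}}(\bbs) \;=\; -\log \sigma_j \;-\; a\,\log \zeta_j(s_j) \;-\; b\,\log s_j \;-\; b,
\]
the trailing $-b$ coming from $\partial_j(s_j\log s_j) = \log s_j + 1$. Substituting into \eqref{eq: mirror-descent-KL} with stepsize $\eta$, the unnormalised MD numerator becomes $e^{\eta b}\cdot\sigma_j^\eta\cdot \zeta_j(s_j^t)^{\eta a}\cdot(s_j^t)^{\eta b + 1}$; the $j$-independent factor $e^{\eta b}$ disappears under normalisation. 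Meanwhile the PR mixed-ranking numerator from \eqref{eq:s-update-PR} is $\mu_j\cdot\zeta_j(s_j^t)^{r+\alpha+\beta}\cdot(s_j^t)^{r+\beta}$.

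Next I would match these two numerators termwise. This yields three scalar relations---one identifying the constant prefactor ($\mu_j$ with $\sigma_j^\eta$, up to a global scalar absorbed by simplex normalisation), one equating the exponent of $\zeta_j(s_j^t)$, and one equating the exponent of $s_j^t$. These relations form a well-posed linear system in $(\mu_j, \alpha, \beta)$ for any prescribed $(\sigma_j, a, b, \eta)$, whose unique solution is the substitution asserted in the theorem. A sanity check against the specialisation $a=r$, $b=r-1$, $\sigma_j=v_j$ --- which is exactly the decomposition of $\Phi_0$ --- must then reproduce the constant-policy parameters of \Cref{thm: dynamic-equiv-to-MD}, confirming internal consistency.

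The main obstacle is essentially bookkeeping: one has to track carefully that the prefactor $s_j^t$ multiplying the exponential in \eqref{eq: mirror-descent-KL} raises the effective exponent of $s_j^t$ by one, and that the additive $b$-constant from $\partial_j(s_j\log s_j)$ contributes only a $j$-independent term that washes out in the normalisation. Beyond this there is no analytic machinery --- no convexity argument, no Bregman-smoothness bound, no convergence analysis --- because the claim is a pointwise identity between two iteration maps, not a statement about their asymptotic behaviour. The hypothesis $r\in(0,1)$ plays no role in the algebra itself and only ensures that the underlying dynamics of \Cref{sec:model} are well-defined.
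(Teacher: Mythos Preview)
Your approach is exactly the one the paper uses for all of its mirror-descent equivalences (see the proofs of \Cref{thm: dynamic-equiv-to-MD} and \Cref{thm: dynamic-equiv-to-MD-pop}): compute the coordinate-wise gradient of the potential, plug into the KL mirror-descent formula \eqref{eq: mirror-descent-KL}, and match the resulting Gibbs numerator against the dynamics rule \eqref{eq:s-update-PR}. The paper does not include a separate written proof of this particular theorem; your proposal is precisely the intended argument.

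One caveat worth flagging: carrying your matching through literally gives $r+\beta=\eta b+1$ and $r+\alpha+\beta=\eta a$, hence $\beta=\eta b+1-r$ and $\alpha=\eta a-\eta b-1$. These differ from the formulas printed in the theorem by an additive $\pm\eta$ (note that $\alpha+\beta$ and hence the $\zeta_j$-exponent agree either way). Your proposed sanity check against $\Phi_0$ with $a=r$, $b=r-1$, $\sigma_j=v_j$, $\eta=1$ confirms this: the correct solution is $\alpha=\beta=0$, $\mu_j=v_j$ (the constant policy), whereas the theorem's stated formulas would give $\alpha=1$, $\beta=-1$. So your method is right, but be prepared for the exponent bookkeeping to disagree with the displayed parameter values --- this appears to be a typo in the statement rather than a flaw in your argument.
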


\Cref{thm: customised-potential-function} indicates that 
the platform can control both the landscape of the potential function and the speed of market dynamics. Given some desired market equilibrium that the platform would like to reach, it is possible for the platform to reverse-engineer a mixed ranking strategy with appropriate parameters to control the behaviors and outcomes of market dynamics.%

\section{Empirical Results}
\begin{figure*}[bt]
  \centering

  \centering
  \includegraphics[width=1.\textwidth]{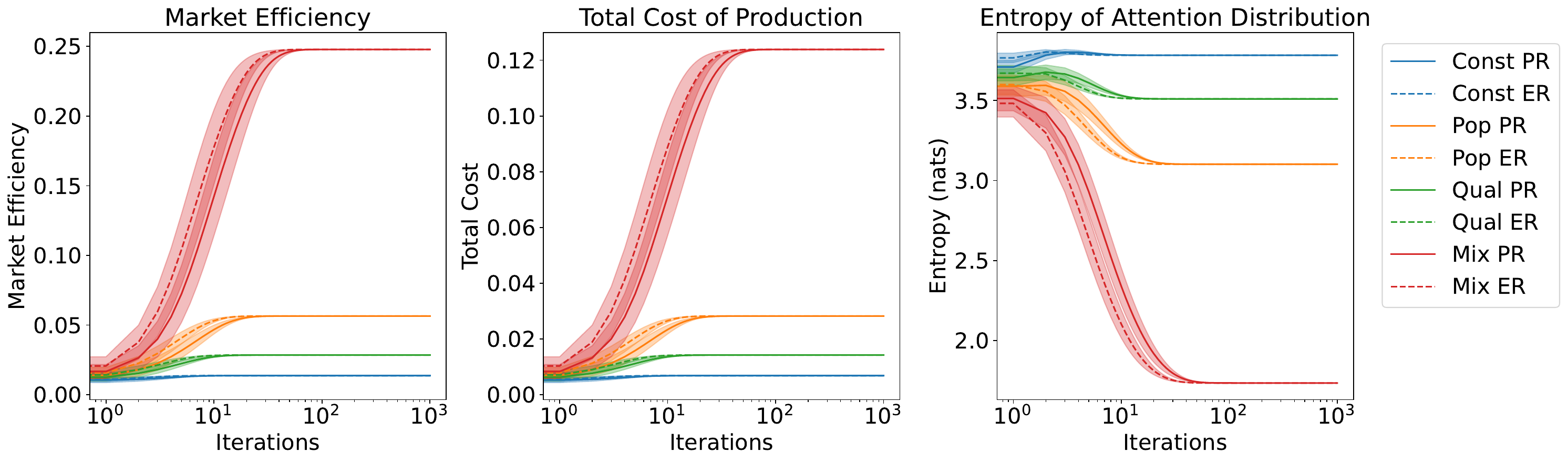}%

    \caption{Demonstration of simulation results with $r = 0.3, \alpha = \beta =0.1$. Each simulation is run for 1000 epochs. Each line is the average value of 100 runs. The shaded area indicates the standard deviations of the metrics across 100 runs. \emph{Left}: Market efficiency -- higher values indicates higher expected quality against the attention distribution $\bbs^t$, \emph{Middle}: Total production cost across all creators, \emph{Right}: Shannon Entropy of the attention distribution $\bbs^t$ -- higher values indicate a more uniform distribution. An entropy of zero means $s_j^t = 1$ for some $j$ and $s_i^t = 0$ for all $i \neq j$.}
    \label{fig:empiricalobs}
\end{figure*}

We simulate an instance of two-sided market with 50 producers under all four ranking strategies presented in \Cref{subsec: rs-policy} w.r.t. PR/ER dynamics. The potential function of each producer is set to be $c_j(q_j) = p_j \cdot q_j^2$ with $p_j$ sampled uniformly from $[0.5,5]$. We set $v_j = \mu_j = \frac{1}{50}$ for all $j\in \calC$. We select 100 initialisations $\bbs^0$ uniformly at random and report the average values of these 100 runs of the following three metrics:
\begin{itemize}
    \item \textbf{Market efficiency ($\sum_{j\in \calC} s_j^t q_j^t$)}: The expected quality under the attention distribution, equivalently the probability that a user at epoch  successfully purchases an item.
    \item \textbf{Total cost of production ($\sum_{j \in \calC} c_j(q_j^t)$)}: The aggregate input from all producers, indicating how strongly the market dynamics incentivize creator effort.
    \item \textbf{Shannon entropy ($-\sum_{j\in \calC} s_j^t\log s_j^t$)}: A measure of polarization in the attention distribution.
\end{itemize}

The results for $r=0.3$, $\alpha = \beta = 0.1$ are shown in \Cref{fig:empiricalobs}, while those for 
$r=0.2,0.4,0.5$ appear in \Cref{app-sec: additional-empirical}. In all cases, the three metrics converge after about $10^2$ iterations. Relative to their initial values, both the market efficiency and total production cost increase significantly, whereas the Shannon entropy decreases. This indicates that the dynamics enhance market efficiency and incentivize production but also polarize the attention distribution, consistent with the theoretical insights from the market’s potential functions (cf. \Cref{eq: interpretation-of-potential-0}).

From the left and middle panels of \Cref{fig:empiricalobs}, the mixed ranking strategy achieves the highest market efficiency, followed by the popularity and quality ranking strategies, while the constant ranking strategy performs the worst. This contrasts with generic trial-offer markets studied in \cite{van2016aligning,zhu2023stability}, where the quality ranking strategy outperforms all alternatives. The findings support our theoretical prediction that all ranking strategies enhance market efficiency yet polarize attention distribution. Since the mixed ranking strategy combines both quality and popularity factors, it achieves the greatest efficiency and lowest entropy.

As shown in the right panel of \Cref{fig:empiricalobs}, the Shannon entropy remains positive across all ranking strategies, indicating convergence to equilibria in which multiple producers hold positive market shares. Moreover, the standard deviations of these metrics converge to zero, demonstrating that different initializations lead to the same equilibrium under each ranking strategy—a reflection of dynamic stability. Finally, all ER dynamics converge faster than PR dynamics, affirming our theoretical result that ER corresponds to larger learning rates.

\section{Conclusion}
\label{sec:conclusion}
Our work provides an optimisation perspective and algorithmic validation of the famous invisible hand insight in economics --- that self-reinforcing markets can evolve towards desirable outcomes --- for two-sided attention markets. We introduce its first potential function, on which mirror descent corresponds to both user and creator updates. This potential function offers a meaningful interpretation as a combination of expected log-utility, cost of production, entropy, and the alignment between attention and visibility.
This novel approach paves the way for a deeper understanding of complex dynamics in online attention markets. Our findings suggest that the optimisation perspective could be extended to further explain and enhance these mechanisms, offering new insights into their efficiency and impact.

Future work can include incorporating personalised recommendations which can be represented as multi-dimensional quality values; incorporating a richer set of creator reward schemes and platform incentives; quantifying the fairness of reward allocation under two-sided dynamics.

\bibliographystyle{abbrvnat}
\bibliography{refs,marketdynamics}
\appendix

\section{Summary of Update Rules of Users, Recommender Systems and Content Creators}\label{app:update-summary}

In \Cref{sec:model}, we gave elaborate discussions of different update rules of users, recommender systems (RS) and content creators. We summarize the key notations and formulae here. Recall that $\bphi$ denotes popularity vector, $\bbv$ denotes visibility vector, $\bbq$ denotes quality vector, and $\bbc$ denotes the vector of cost functions of different content creators.

The general form of dynamical systems of two-sided attention markets is
\[
\begin{aligned}
\bphi^{t} &= \calP(\bbv^{t}, \bbq^{t}, \bphi^{t-1}) && \text{(users' popularity update)}\\
\bbv^{t+1} &= \calV(\bbq^{t}, \bphi^{t}, \bbv^{t}) && \text{(RS's visibility update)}\\
\bbq^{t+1} &= \calQ(\bbv^{t+1},\bphi^{t}, \bbc) && \text{(content creators' quality update)}
\end{aligned}
\]

\subsubsection*{\textbf{Popularity Update}} We examine the following two rules:
\begin{itemize}
\item Equilibrium Response (ER) Dynamic (\Cref{def:ER}), where
\[
\left[\calP_{\mathrm{ER}}(\bbv^{t}, \bbq^{t}, \bphi^{t-1})\right]_j := \frac{(q_j^t v_j^t)^{\frac{1}{1-r}}}{\sum_i (q_i^t v_i^t)^{\frac{1}{1-r}}}~.
\]
\item Proportional Response (PR) Dynamic (\Cref{def:PR}), where
\[\left[\calP_{\mathrm{PR}}(\bbv^{t}, \bbq^{t}, \bphi^{t-1})\right]_j := \frac{q_j^t v_j^t (\phi_j^{t-1})^r }{\sum_i q_i^t v_i^t (\phi_i^{t-1})^r }~.\]
\end{itemize}

\subsubsection*{\textbf{Visibility Update}} We examine the following two groups of rules for visibility update:
\begin{itemize}
\item Constant: $\calV_{0}(\bbq^{t}, \bphi^t, \bbv^t) := \bbv$, where each $v_j > 0$, and $\sum_j v_j = 1$.
\item Popularity-Quality mixed ranking: This includes a range of ranking mechanisms, specified by the parameters $\alpha$, $\beta$ and $(\mu_1,\mu_2,\ldots,\mu_{|C|})$ as below:
\[
\left[\calV_{\mathrm{mix}}(\bbq^{t}, \bphi^t, \bbv^t)\right]_j := \frac{\mu_j (q_j^t)^\alpha (\phi_j^t)^\beta}{\sum_i \mu_i (q_i^t)^\alpha (\phi_i^t)^\beta}~.
\]
In particular,
\begin{itemize}
\item When $\beta=1$ and $\alpha=0$, this is called popularity ranking.
\item When $\beta=0$ and $\alpha=1$, this is called quality ranking.
\end{itemize}
\end{itemize}

\subsubsection*{\textbf{Quality Update}} We examine best-response update. We need the assumptions that for every $j\in \calC$, $c_j$ is strictly convex and continuously differentiable, $c_j'(0) = 0$ and $c_j'(1)\ge 1$. Then the update rule can be explicitly written as
\[
\left[\calQ_{\mathrm{BR}}(\bbv^{t+1},\bphi^{t}, \bbc)\right]_j := \zeta_j\left(\frac{v_j^{t+1} (\phi_j^t)^r}{\sum_{i=1}^{|\creators|} v_i^{t+1} (\phi_i^t)^r}\right)~,
\]
where $\zeta_j := (c_j')^{-1}$.

\newpage
\section{Additional Empirical Results} \label{app-sec: additional-empirical}

\begin{figure*}[h!]
  \centering

  \centering
  \includegraphics[width=1.\textwidth]{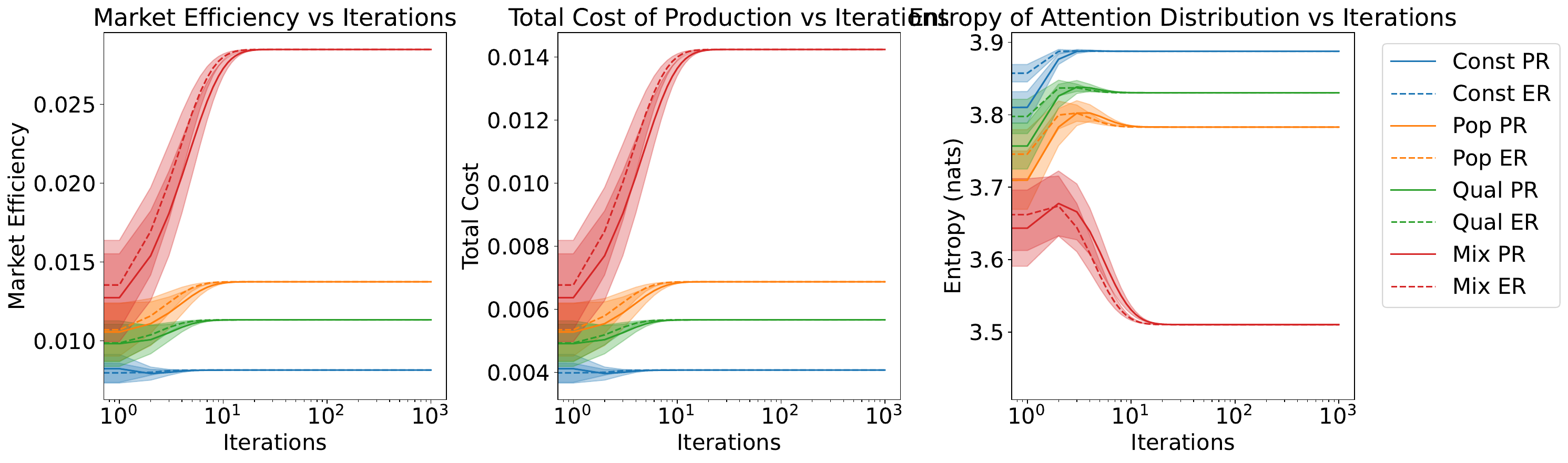}%

    \caption{Demonstration of simulation results with $r = 0.2, \alpha = \beta =0.1$. }

\end{figure*}

\begin{figure*}[h]
  \centering

  \centering
  \includegraphics[width=1.\textwidth]{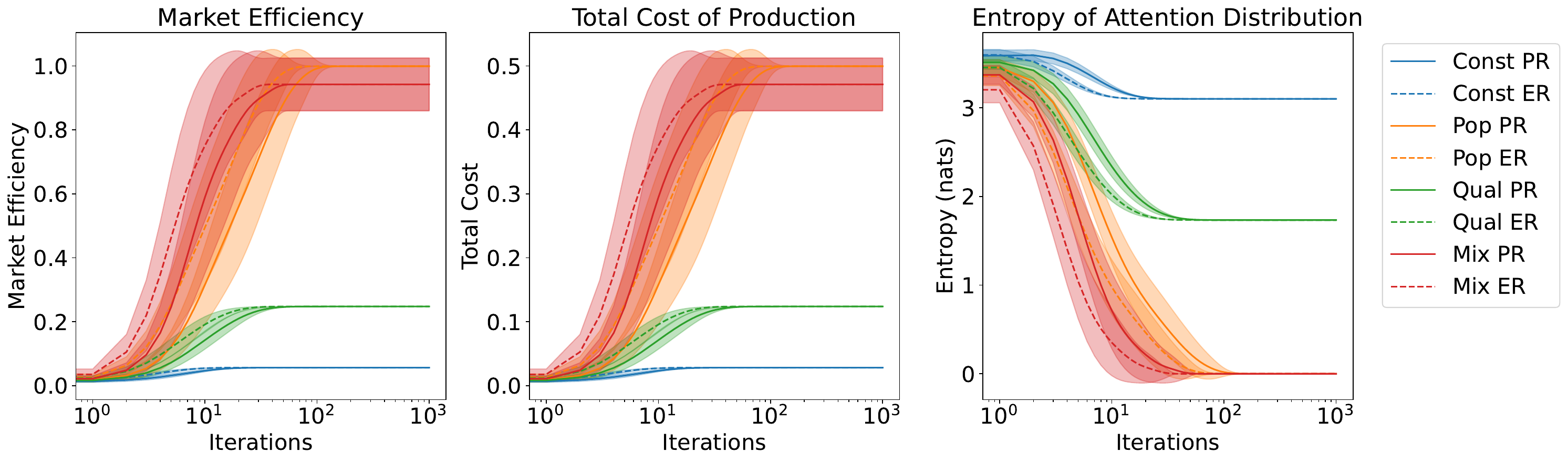}%

    \caption{Demonstration of simulation results with $r = 0.4, \alpha = \beta =0.1$. }
  
\end{figure*}

\begin{figure*}[h]
  \centering

  \centering
  \includegraphics[width=1.\textwidth]{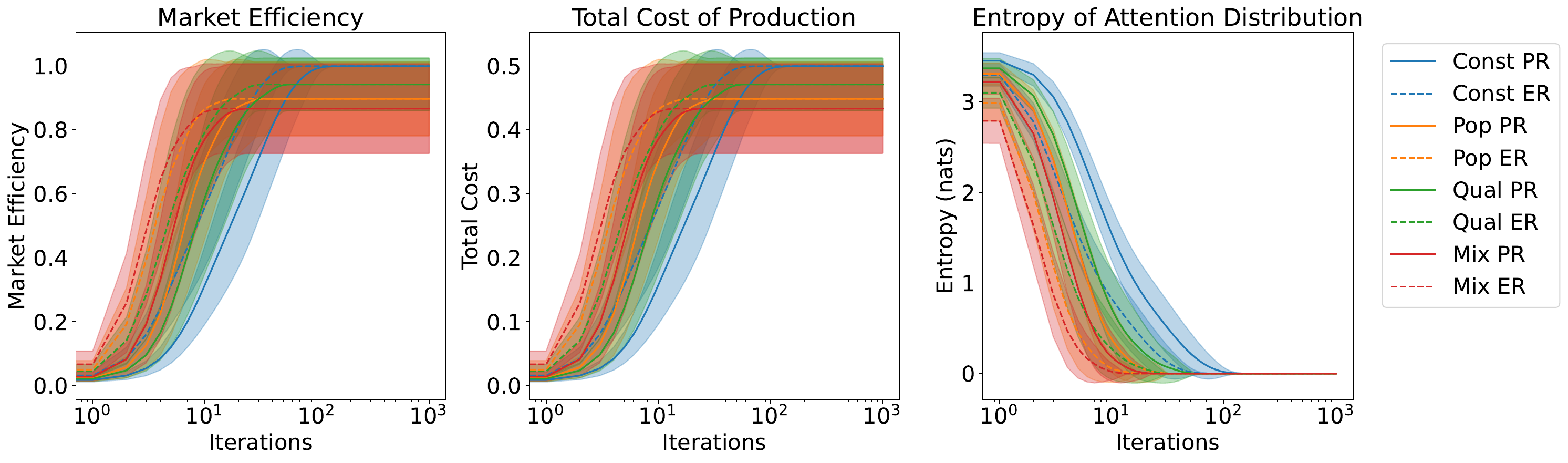}%

    \caption{Demonstration of simulation results with $r = 0.5, \alpha = \beta =0.1$. }
    
\end{figure*}

\section{Proofs for Section \ref{subsec: mirror-descent-results}}

\subsection{Proof of Theorem \ref{thm: mirror-descent-convergence-convex}}

\begin{lemma}[\cite{chen1993convergence}]
    If $\bbx^+$ is the optimal point for the optimization problem:
    \begin{align*}
        &\min_{\bbx}~~~~~~~~~~~~~~~~ g(\bbx) + d_h(\bbx,\bby), \\
        &\text{subject to     }\bbx\in C,
    \end{align*}
    where $g$ is a convex function and $C$ is a compact convex set. Then,
    \[
    g(\bbx) + d(\bbx,\bby) \geq g(\bbx^+) + d(\bbx^+, \bby) + d(\bbx, \bbx^+).
    \] \label{app-lem: difference-bregman-opt}
\end{lemma}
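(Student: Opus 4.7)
The plan is to reduce everything to the well-known facts about strongly convex functions on unconstrained Euclidean space by using the change of variables $\widehat{\bbx}\in\rr^{n-1}$ introduced just before the lemma. Since $\bbx^\star\in\mathrm{rint}\,\Delta$, the map $\widehat{\bbx}\mapsto(\widehat{\bbx},\,1-\sum_j\widehat{x}_j)$ is a smooth bijection between a neighborhood of $\widehat{\bbx}^\star$ in $\rr^{n-1}$ and a neighborhood of $\bbx^\star$ in the affine hull of $\Delta$. The regularity assumption $\nabla^2\widehat{f}(\widehat{\bbx}^\star)\succ 0$, together with continuity of $\nabla^2\widehat{f}$, yields constants $\gamma'>0$ and an open ball $B$ around $\widehat{\bbx}^\star$ on which $\nabla^2\widehat{f}\succeq \gamma' I$. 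I would take $\Omega'$ to be the preimage of $B$ under the change of variables, intersected with $\Omega$, and further shrunk via the finite-minimizer hypothesis so that no other critical points of $\widehat{f}$ lie inside; this guarantees local strong convexity of $\widehat{f}$ there.

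On $B$, three standard facts then hold: (a) $\widehat{f}(\widehat{\bbx}^\star)\geq\widehat{f}(\widehat{\bby})+\inner{\nabla\widehat{f}(\widehat{\bby})}{\widehat{\bbx}^\star-\widehat{\bby}}$, (b) $\inner{\nabla\widehat{f}(\widehat{\bby})-\nabla\widehat{f}(\widehat{\bbx}^\star)}{\widehat{\bby}-\widehat{\bbx}^\star}\geq\gamma'\|\widehat{\bby}-\widehat{\bbx}^\star\|_2^2$, and (c) $\|\nabla\widehat{f}(\widehat{\bby})-\nabla\widehat{f}(\widehat{\bbx}^\star)\|_2\geq\gamma'\|\widehat{\bby}-\widehat{\bbx}^\star\|_2$. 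Because $\bbx^\star$ is an interior minimizer on $\Delta$, the KKT condition gives $\nabla f(\bbx^\star)=\lambda\mathbf{1}$ for some scalar $\lambda$, and therefore $\nabla\widehat{f}(\widehat{\bbx}^\star)=\mathbf{0}$, which simplifies both (b) and (c).

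To push (a)--(c) back to $f$, I would use two ingredients. First, the chain rule gives $\nabla_i\widehat{f}(\widehat{\bby})=\nabla_i f(\bby)-\nabla_n f(\bby)$ for $i=1,\ldots,n-1$; since $\bbx^\star-\bby$ has coordinates summing to zero, a direct calculation yields $\inner{\nabla f(\bby)}{\bbx^\star-\bby}=\inner{\nabla\widehat{f}(\widehat{\bby})}{\widehat{\bbx}^\star-\widehat{\bby}}$, and $f(\bbx^\star)-f(\bby)=\widehat{f}(\widehat{\bbx}^\star)-\widehat{f}(\widehat{\bby})$. This immediately converts (a) into \eqref{eq-lem: local-convexity-line-1}. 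Second, the norm equivalence $\|\bby-\bbx^\star\|_2^2\leq n\,\|\widehat{\bby}-\widehat{\bbx}^\star\|_2^2$, obtained by writing the dropped coordinate as $-\sum_j(\widehat{y}_j-\widehat{x}^\star_j)$ and applying Cauchy--Schwarz, converts (b) into \eqref{eq-lem: local-convexity-line-2} with constant $\gamma'/n$. For \eqref{eq-lem: local-convexity-line-3}, I would similarly upper bound $\|\nabla\widehat{f}(\widehat{\bby})-\nabla\widehat{f}(\widehat{\bbx}^\star)\|_2$ by a constant multiple (order $\sqrt{n}$) of $\|\nabla f(\bby)-\nabla f(\bbx^\star)\|_2$ using the same chain-rule identity, then combine with (c) and the above norm equivalence to extract a lower bound proportional to $\|\bby-\bbx^\star\|_2$. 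Taking $\gamma$ to be the minimum of the constants obtained in the three inequalities closes the argument.

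The main obstacle I expect is bookkeeping the constants through the non-isometric change of variables in \eqref{eq-lem: local-convexity-line-3}, since the reduced gradient $\nabla\widehat{f}$ only sees $n-1$ coordinates and could a priori be much smaller than $\nabla f(\bby)-\nabla f(\bbx^\star)$ is large. The saving grace is that $\nabla f(\bbx^\star)$ is parallel to $\mathbf{1}$, so the $\mathbf{1}$-component of $\nabla f(\bby)-\nabla f(\bbx^\star)$ is controlled by the Lipschitz constant of $\nabla f$ times $\|\bby-\bbx^\star\|_2$; combining this control with (c) gives a clean lower bound in terms of $\|\bby-\bbx^\star\|_2$. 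The finite-minimizer hypothesis plays only the cameo role of ensuring $\Omega'$ can be chosen so that $\widehat{\bbx}^\star$ is isolated, preventing degenerate situations in which the local strong convexity constant $\gamma'$ shrinks to zero along a sequence of nearby minimizers.
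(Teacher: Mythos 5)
Your proposal does not address the statement at hand. The lemma to be proved is the Chen--Teboulle three-point inequality for Bregman divergences: if $\bbx^+$ minimises $g(\bbx)+d_h(\bbx,\bby)$ over a compact convex set $C$ with $g$ convex, then $g(\bbx)+d_h(\bbx,\bby)\geq g(\bbx^+)+d_h(\bbx^+,\bby)+d_h(\bbx,\bbx^+)$ for every $\bbx\in C$. What you have written instead is a proof sketch for \Cref{lem: local-convexity-around-strict-minimiser} --- the local-convexity estimates \eqref{eq-lem: local-convexity-line-1}--\eqref{eq-lem: local-convexity-line-3} around a regular strict local minimiser, via the reduced variable $\widehat{\bbx}$ and local strong convexity of $\widehat{f}$. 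None of the objects in the target statement ($g$, $\bbx^+$, the generic set $C$, the comparison of the three Bregman terms) appear in your argument, and none of the hypotheses you invoke (interior regular strict local minimiser, finitely many minimisers, twice differentiability of $f$) are hypotheses of this lemma. So as a proof of the stated result there is not a partial gap but a complete mismatch.

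For the record, the intended argument is short and entirely different in character: by first-order optimality of $\bbx^+$ over $C$, one has $\inner{\nabla g(\bbx^+)+\nabla h(\bbx^+)-\nabla h(\bby)}{\bbx-\bbx^+}\geq 0$ for all $\bbx\in C$; combining this with the convexity inequality $g(\bbx)\geq g(\bbx^+)+\inner{\nabla g(\bbx^+)}{\bbx-\bbx^+}$ and the three-point identity $d_h(\bbx,\bby)=d_h(\bbx,\bbx^+)+d_h(\bbx^+,\bby)+\inner{\nabla h(\bbx^+)-\nabla h(\bby)}{\bbx-\bbx^+}$ gives the claim directly. (The paper itself cites this lemma to \cite{chen1993convergence} and does not reprove it.) If your intention was to prove the local-convexity lemma, your sketch is broadly aligned with the paper's approach there --- the paper restricts $f$ to line segments through $\bbx^\star$ and uses $e''(t)>\gamma$ rather than an explicit Hessian bound in the reduced coordinates --- but that is a different statement and should be matched to the correct label.
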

\begin{lemma}
    Suppose $\eta \leq \frac{1}{L}$, and the function $f$ is $L$-Bregman smooth. Then, consider the update \cref{eq: mirror-descent-update-rule}, we have
    \[
    f(\bbx^{t+1}) \leq f(\bbx^t).
    \] \label{app-lem: mirror-descent-always-decrease}
\end{lemma}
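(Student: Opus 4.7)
The plan is to combine the $L$-Bregman-smoothness bound on $f$ anchored at $\bbx^t$ with the first-order optimality condition of the mirror descent subproblem, where the latter is captured exactly by Lemma \ref{app-lem: difference-bregman-opt}. The idea is that smoothness lets me upper-bound $f(\bbx^{t+1}) - f(\bbx^t)$ by a linear term plus $L\cdot d_h(\bbx^{t+1},\bbx^t)$, and the optimality condition gives a matching \emph{negative} bound on the linear term that, once $\eta \le 1/L$, dominates the remaining $L\cdot d_h$.

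First I would write the smoothness bound with $\bby=\bbx^t$ and $\bbx=\bbx^{t+1}$, obtaining
\[
f(\bbx^{t+1}) \;\le\; f(\bbx^t) + \inner{\nabla f(\bbx^t)}{\bbx^{t+1}-\bbx^t} + L\cdot d_h(\bbx^{t+1},\bbx^t).
\]
Next I would apply Lemma \ref{app-lem: difference-bregman-opt} to the mirror descent subproblem with the convex (indeed linear) choice $g(\bbx) = \eta\inner{\nabla f(\bbx^t)}{\bbx-\bbx^t}$, feasible set $C = X$, reference point $\bby=\bbx^t$, and optimal point $\bbx^+ = \bbx^{t+1}$. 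Evaluating the resulting inequality at $\bbx=\bbx^t$ — so that $g(\bbx^t)=0$ and $d_h(\bbx^t,\bbx^t)=0$ — produces an upper bound on $\eta\inner{\nabla f(\bbx^t)}{\bbx^{t+1}-\bbx^t}$ in terms of $-d_h(\bbx^{t+1},\bbx^t) - d_h(\bbx^t,\bbx^{t+1})$.

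Substituting this into the smoothness bound and collecting terms would yield
\[
f(\bbx^{t+1}) - f(\bbx^t) \;\le\; \left(L - \tfrac{1}{\eta}\right) d_h(\bbx^{t+1},\bbx^t) \;-\; \tfrac{1}{\eta}\, d_h(\bbx^t,\bbx^{t+1}).
\]
The assumption $\eta \le 1/L$ makes the first coefficient non-positive, and both Bregman divergences are non-negative by convexity of $h$, so the right-hand side is non-positive and the conclusion follows.

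There is essentially no hard obstacle; the only care point is to retain \emph{both} divergence terms produced by Lemma \ref{app-lem: difference-bregman-opt} — keeping the $d_h(\bbx^t,\bbx^{t+1})$ term is harmless but dropping the $d_h(\bbx^{t+1},\bbx^t)$ term would break the cancellation with $L\cdot d_h(\bbx^{t+1},\bbx^t)$ coming from smoothness. The conceptual step is recognising that Lemma \ref{app-lem: difference-bregman-opt} is precisely the three-point identity that encodes optimality of the proximal/mirror update.
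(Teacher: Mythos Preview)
Your proof is correct and follows essentially the same two-step skeleton as the paper: apply $L$-Bregman-smoothness at $\bbx^t$, then exploit optimality of $\bbx^{t+1}$ by comparing against the feasible point $\bbx^t$. The only difference is that the paper does not invoke Lemma~\ref{app-lem: difference-bregman-opt}; it simply replaces $L$ by $\tfrac{1}{\eta}$ (using $L\le \tfrac{1}{\eta}$ and $d_h\ge 0$) so that the right-hand side becomes exactly the mirror-descent surrogate, and then uses the bare definition of $\argmin$ to bound the surrogate at $\bbx^{t+1}$ by its value at $\bbx^t$, which is zero. Your route through the three-point inequality yields the sharper intermediate bound $(L-\tfrac{1}{\eta})\,d_h(\bbx^{t+1},\bbx^t)-\tfrac{1}{\eta}\,d_h(\bbx^t,\bbx^{t+1})$, but that extra strength is not needed here, so the paper's argument is slightly more elementary.
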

\begin{proof}
    By definition of Bregman smoothness, we have
    \[
     f(\bbx^{t+1}) \leq  f(\bbx^{t}) + \inner{\nabla f(\bbx^{t})}{\bbx^{t+1} - \bbx^{t} } + L \cdot d_h(\bbx^{t+1}, \bbx^{t}).
    \]
    Since $L \leq \frac{1}{\eta}$, by nonnegativity of the Bregman divergence, we have
    \begin{equation}
    f(\bbx^{t+1}) \leq  f(\bbx^{t}) + \inner{\nabla f(\bbx^{t})}{\bbx^{t+1} - \bbx^{t} } + \frac{1}{\eta} \cdot d_h(\bbx^{t+1}, \bbx^{t}). \label{app-eq: bregman-strong-convex-learning-rate}
    \end{equation}
    Then, by \cref{eq: mirror-descent-update-rule}, recall that
    \[
    \bbx^{t+1} = \argmin_{\bbx \in C} \left\{ \inner{\nabla f(\bbx^t)}{\bbx - \bbx^{t}} + \frac{1}{\eta} \cdot d_h(\bbx, \bbx^t)\right\},
    \]
    where the minimised objective is exactly the last two terms. Hence, 
    \[
    f(\bbx^{t+1}) \leq  f(\bbx^{t}) + \inner{\nabla f(\bbx^{t})}{\bbx^{t} - \bbx^{t} } + \frac{1}{\eta} \cdot d_h(\bbx^{t}, \bbx^{t}) = f(\bbx^{t}).
    \]
    
\end{proof}

\begin{proof}[Proof of \cref{thm: mirror-descent-convergence-convex}]
We note that, 
\begin{align*}
f(\bbx^{t+1})  &\leq f(\bbx^{t}) + \inner{\nabla f(\bbx^t)}{\bbx^{t+1} - \bbx^{t}} + \frac{1}{\eta}\cdot d_h(\bbx^{t+1},\bbx^{t}) \\
& \leq f(\bbx^{t}) + \inner{\nabla f(\bbx^t)}{\bbx^{\star} - \bbx^{t}} + \frac{1}{\eta}\cdot\left(d_h(\bbx^{\star},\bbx^{t}) - d_h(\bbx^{\star},\bbx^{t+1})\right) \\ 
& \leq f(\bbx^\star) + \frac{1}{\eta}\cdot\left(d_h(\bbx^{\star},\bbx^{t}) - d_h(\bbx^{\star},\bbx^{t+1})\right),
\end{align*}
where the first line follows from \cref{app-eq: bregman-strong-convex-learning-rate}, the second line follows from \cref{app-lem: difference-bregman-opt}, and the last line follows from the fact that $f$ is convex. Summing LHS and RHS from $t=0$ to $t = T-1$, we have
\[
T\cdot\left(f(\bbx^T) - f(\bbx^\star)\right) \leq  \sum_{t=0}^T f(\bbx^t) - f(\bbx^\star) \leq \frac{1}{\eta} \cdot d_h(\bbx^\star, \bbx^0) - d_h(\bbx^\star, \bbx^{T}),
\]
where the first inequality follows from \cref{app-lem: mirror-descent-always-decrease} and the second inequality is exactly the identity above after summed up. Note that $d_h$ is always positive, we could get the result by dividing both sides by $T$.
\end{proof}

\subsection{Proof of Theorem \ref{thm: mirror-descent-convergence-non-convex-detailed}}
Since the function $f$ is $L$-Bregman smooth. We must have
\begin{equation}
f(\bbx) - f(\bby) - \inner{\nabla f(\bby)}{\bbx - \bby} \leq L\cdot\mathrm{KL}(\bbx,\bby). \label{eq: RHS-bregman-convex}
\end{equation}
Hence, it is true that, for $\frac{1}{\eta} \geq L$, we have
\begin{align*}
f(\bbx^{t+1})  &\leq f(\bbx^{t}) + \inner{\nabla f(\bbx^t)}{\bbx^{t+1} - \bbx^{t}} + \frac{1}{\eta}\mathrm{KL}(\bbx^{t+1},\bbx^{t}) \\
&\leq f(\bbx^{t}) + \inner{\nabla f(\bbx^t)}{\bbx^{t} - \bbx^{t}} + \frac{1}{\eta}\mathrm{KL}(\bbx^{t},\bbx^{t}) =  f(\bbx^{t}),
\end{align*}
where the first inequality follows from \cref{eq: RHS-bregman-convex} and positivity of KL divergence, the second inequality follows from the optimality of $\bbx^{t+1}$ given by the mirror descent update rule. Now, consider the following lemma.
\boxlem{For some open ball $\mathcal{B}_{k}(x^\star)$ with $k > 0$, if $\bbx^\star$ is a local minimiser on $\mathcal{B}_{k}(x^\star) \cap \Delta$. Then, there exists $k' > 0$  such that $\bbx^\star$ is a local minimiser of the set $\Omega := \{\bbz \in \Delta:~ \mathrm{KL}(\bbx^\star, \bbz) < k'\}$.}
\begin{proof}
    For any point $\bbz \in \Omega$, set $k' = \frac{k^2}{2}$, we have
    \[
    \Vert \bbx^\star - \bbz \Vert_2 \leq  \Vert \bbx^\star - \bbz \Vert_1 \leq \sqrt{2\mathrm{KL}(\bbx^\star, \bbz)} \leq \sqrt{2 \cdot \frac{k^2}{2}}  = k,
    \]
    where the second inequality follows from Pinsker inequality. Hence, we have shown that $\Omega \subset \mathcal{B}_{k}(x^\star) \cap \Delta$. Since $k' > 0$, by continuity of KL divergence, we could also observe that $\Omega$ is open with respect to the subspace topology on $\Delta$.
\end{proof}

Next, we point out that, for any interior strict local minimiser $x^\star$, there exists a neighbourhood $\Omega$ of $x^\star$ such that $f(\bbx^\star) \geq  f(\bby) + \inner{\nabla f(\bby)}{\bby - \bbx^\star}$. Note that this result holds if one replace $f$ with any smooth function $f$.
\boxlem{
For some open set $\Omega \subset \rr^n$, suppose $\bbx^\star\in \mathrm{rint}\Delta$ is a strict regular local minimiser for some smooth function $f$ over $\Omega \cap \Delta$. Suppose $f$ has finitely many local minimisers in $\text{rint}\Delta$, there exists a open set $\Omega' \subset \Omega$ such that,
\begin{align}
f(\bbx^\star) -  f(\bby) - \inner{\nabla f(\bby)}{\bbx^\star -\bby} \geq  0~, \\
\inner{\nabla f(\bby)}{\bby - \bbx^\star} -\gamma \Vert \bby - \bbx^\star \Vert_2^2 \geq 0~, \\
\left\Vert \nabla f(\bby) - \nabla f(\bbx^\star)  \right\Vert_2 -\gamma\left\Vert \bby - \bbx \right\Vert_2\geq  0~.
\end{align}
for any $\bby \in \Omega' \cap \Delta$ and some $\gamma > 0$. \label{lem: local-convexity-around-strict-minimiser}
}
\begin{proof}
    Given any $\bbz \in \Omega$ close enough to $\bbx^\star$, consider the function $e(t) = f(\bbx^\star + t(\bbz - \bbx^\star))$.  By convexity of $\Delta$, the point $\bbz_t = \bbx^\star + t(\bbz - \bbx^\star) \in \Delta$ for any $t\in [0,1]$. And also, since $\bbz$ is close enough to $\bbx^\star$, $\bbz_t \in \Omega$ for $t\in [0,1+\alpha]$ when $\alpha>0$ is small enough. Also, since $\sum_{j} z_j - x^\star_j = 0$ and $\bbx^\star \in \text{rint}\Delta$, the point $\bbz_t \in \Omega$ for some $t \geq -\alpha$ when $\alpha >0$ and small enough. Therefore, since $\bbx^\star$ is the strict local minimiser, we shall also have
    \[
      \argmin_{t \in [-\alpha, 1+ \alpha]} e(t) = 0,~~~ e'(0) = 0, ~~~ e''(0) > 0.
    \]
    Hence, by continuity of $e''$, it is true that $e''(t) >\gamma >0$ whenever $|t|$ and $\gamma$ is small enough. Note that this $\gamma$ is invariant of the chosen direction by the regularity of the minimiser.
    Once $\bbz$ is chosen to be close enough to $\bbx^\star$, we can say that $e''(t) > \gamma$ for $t \in [0,1]$. Therefore, we must have
    \[
    f(\bbx^\star) = e(0) \geq e(t) + e'(t)(0-t) = f(\bbz_t) + \inner{\nabla f(\bbz_t)}{\bbx^\star - \bbz_t},
    \]
    as desired. For the second property, we know that
    \[
    \inner{\nabla f(\bbz_t)}{\bbz_t - \bbx^\star}=t\cdot(e'(t) -e'(0) )= t\cdot\int_{0}^t e''(t) ~\mathrm{d}t \geq t^2 \gamma = \frac{\Vert \bbz_t - \bbx^\star\Vert_2^2}{\Vert \bbz - \bbx^\star\Vert_2^2}\cdot \gamma.
    \]
    Since one can always choose $\bbz$ such that $\Vert \bbz - \bbx^\star\Vert_2^2 \leq c $ uniformly for some $c>0$, we have the second property. For the third property, we notice that, for small enough $|t|$
    \begin{align*}
    \left\Vert \nabla f(\bbz_t) - \nabla f(\bbx^\star)  \right\Vert \cdot \left\Vert  \bbz_t - \bbz^\star  \right\Vert &\geq \inner{\nabla f(\bbz_t) - \nabla f(\bbx^\star)}{\bbz_t - \bbx^\star }\\
    &= t\cdot(e'(t) - e'(0)) \geq  \gamma t^2 = \gamma \cdot \frac{\Vert \bbz_t - \bbx^\star \Vert^2}{\Vert \bby - \bbx^\star \Vert^2},
    \end{align*}
    where the first inequality is the Cauchy-Schwarz inequality, the second inequality follows from the fact that $e''(t) > \gamma$ whenever $t$ is small enough. Factoring out the term $\Vert \bbz_t - \bbz^\star \Vert_2$ and noting that $\Vert \bby - \bbx^\star\Vert_2$ could chosen to be bounded will yield the result. 
\end{proof}
Next, we show that, whenever the dynamic is near the strict local minimiser, it will not escape from this local region.
\boxlem{
Let $\bbx^\star$ be an interior local minimiser of a smooth function $f$ on $\Delta$. Under the PR dynamic, suppose $\bbx^{t}$ is close enough to $\bbx^\star$, and $\eta \leq \frac{2\gamma}{\kappa^2}$ then,
\[
\mathrm{KL}(\bbx^\star, \bbx^{t+1}) \leq \mathrm{KL}(\bbx^\star, \bbx^{t}) .
\]
\label{lem: no-escape-from-the-local-region}
}
\begin{proof}
    For notational simplicity, we write $g_j^t = \frac{\partial f}{\partial x_j}\Big|_{x_j = x_j^t}$ and  $g_j^\star = \frac{\partial f}{\partial x_j}\Big|_{x_j = x_j^\star}$. 
    Consider the gradient at the interior local minimiser, by KKT theorem, one must have
    \[
    g_j^\star  - \lambda = 0,
    \]
    for some $\lambda$. Next, we note that
    \begin{align*}
    \mathrm{KL}(\bbx^\star,  \bbx^{t+1}) - \mathrm{KL}(\bbx^\star,  \bbx^{t}) &=  \sum_{j}x_j^\star\log\left(\frac{\sum_i x_i^t \exp(-\eta g_i^t)}{x_j^t \exp(-\eta g_j^t)} \cdot \frac{x_j^t}{x_j^\star}\right)\\
    &= \sum_j \eta x_j^\star g_j^t + \log\left(\sum_i x_i^t \exp(-\eta g_i^t)\right) \\
    &= \sum_j \eta x_j^\star g_j^t -\eta\lambda + \log\left(\sum_i x_i^t \exp(-\eta g_i^t)\right) + \eta\lambda \\
    \end{align*}
    By continuity of $\nabla f$, in this local region, there exists some $\kappa>0$ such that 
    \[
    \left\Vert \nabla f(\bbx^t) - \nabla f(\bbx^t) \right\Vert_2 \leq \kappa\cdot \Vert \bbx^t - \bbx^\star \Vert_2.
    \]
    With this, let's look at the first term of the above difference of KL divergence, we have
    \begin{align*}
        \eta \cdot \sum_j  x_j^\star g_j^t &= \eta \cdot \sum_j  (x_j^\star - x_j^t + x_j^t)  g_j^t \\
        &= \eta \cdot \sum_j  x_j^t  g_j^t + \eta \cdot \sum_j g_j^t(x_j^\star - x_j^t)\\
        &= \eta \cdot \sum_j  x_j^t  g_j^t + \eta \cdot \inner{\nabla f(\bbx^t)}{\bbx^\star - \bbx^t} \\
        &\leq \eta \cdot \sum_j  x_j^t  g_j^t - \eta\gamma\cdot \Vert \bbx^\star - \bbx^t \Vert_2^2. \\
        &\leq \eta \cdot \sum_j  x_j^t  g_j^t - \eta\gamma \cdot\frac{1}{\kappa^2}\cdot \Vert \nabla f(\bbx^\star) - f(\bbx^t)\Vert_2^2.
    \end{align*}
    Then, it is clear that 
    \begin{align*}
    &\mathrm{KL}(\bbx^\star,  \bbx^{t+1}) - \mathrm{KL}(\bbx^\star,  \bbx^{t})\\ 
    &= \eta \cdot \sum_j  x_j^\star g_j^t - \log\left(\sum_i x_i^t \exp(- \eta g_i^t)\right) \\
    &\leq  \eta \cdot \sum_j  x_j^t  g_j^t - \gamma \cdot\frac{\eta}{\kappa^2}\cdot \Vert \nabla f(\bbx^\star) - \nabla f(\bbx^t)\Vert_2^2 + \log\left(\sum_i x_i^t \exp(- \eta g_i^t)\right) \\
    &= \eta \cdot \sum_j  x_j^t  g_j^t - \eta\lambda - \gamma \cdot\frac{\eta}{\kappa^2}\cdot \Vert \nabla f(\bbx^\star) - \nabla f(\bbx^t)\Vert_2^2 + \log\left(\sum_i x_i^t \exp(- \eta g_i^t)\right) + \eta\lambda \\
    &= \eta \cdot \sum_j  x_j^t  g_j^t - \sum_jx_j^t\eta\lambda - \gamma \cdot\frac{\eta}{\kappa^2}\cdot \Vert \nabla f(\bbx^\star) - \nabla f(\bbx^t)\Vert_2^2 + \log\left(\sum_i x_i^t \exp(- \eta g_i^t)\right) + \log\left(\exp \left(\eta\lambda \right)\right)\\
    &= \eta \cdot \sum_j  x_j^t(g_j^t - \eta\lambda) - \gamma \cdot\frac{\eta}{\kappa^2}\cdot \Vert \nabla f(\bbx^\star) - \nabla f(\bbx^t)\Vert_2^2 + \log\left(\sum_i x_i^t \exp(\eta\lambda- \eta g_i^t)\right) \\
    &= \eta \cdot \sum_j  x_j^t(g_j^t - g_j^\star) - \gamma \cdot\frac{\eta}{\kappa^2}\cdot \Vert \nabla f(\bbx^\star) - \nabla f(\bbx^t)\Vert_2^2 + \log\left(\sum_i x_i^t \exp(\eta g_i^\star- \eta g_i^t)\right),
    \end{align*}
    where the last line utilises the first order optimality condition as discussed above.
    Set $y_j = g_i^\star - g_i^t$, and consider the function
    \[
    \Gamma(\bby) = \eta\sum_j  x_j^t(-y_j) - \gamma \cdot\frac{\eta}{\kappa^2}\cdot \Vert \bby\Vert_2^2 + \log\left(\sum_i x_i^t \exp(\eta y_i)\right).
    \]
    It suffices to show that $\Gamma(\bby) < 0$ when $\bby$ is near $\mathbf{0}$. It is clear that $\Gamma(\mathbf{0}) = \mathbf{0}$, consider the partial derivatives
    \[
    \frac{\partial \Gamma}{\partial y_j} = \eta x_j^t - \frac{2\eta \gamma}{\kappa^2}y_j + \frac{\eta x_j^t\exp(\eta y_j)}{\sum_i x_i^t \exp(\eta y_i)}.
    \]
    Again $\nabla \Gamma(\mathbf{0}) = \mathbf{0}$. For the second derivative, we have
    \[ 
    \frac{\partial^2 \Gamma}{\partial y_j^2}\Bigg|_{y_j = 0} = \frac{-2\eta \gamma}{\kappa^2} + \eta^2 x_j^t -\eta^2 (x_j^t)^2, ~~\frac{\partial^2 \Gamma}{\partial y_iy_j}\Bigg|_{y_j=0,y_i = 0} = -\eta^2 x_i^t x_j^t.
    \]
    Therefore, the Hessian could be written as
    \[
    \nabla^2\Gamma(\bby)\Big|_{\bby = \mathbf{0}} = -\eta \bbx^t \left(\bbx^t\right)^T + \mathrm{diag}\left(\frac{-2\eta \gamma}{\kappa^2} + \eta^2 x_j^t\right).
    \]
    Note that the only nonzero of $ -\eta \bbx^t \left(\bbx^t\right)^T $ is $-\eta \Vert \bbx^t \Vert_2^2$ which is negative. Hence, to make $\nabla^2\Gamma(\bby)\Big|_{\bby = \mathbf{0}}$ negative definite, it suffices to let $\frac{-2\eta \gamma}{\kappa^2} + \eta^2 x_j^t < 0$ for every $x_j^t$. Since $x_j^t \leq 1$, we just need $\frac{-2\eta \gamma}{\kappa^2} + \eta^2  <0$, which is exactly the requirement that $\eta < \frac{2\gamma}{\kappa^2}$. With this, we know that $\bby = 0$ is a strict local minimiser of $\Gamma(\bby)$, which indicates that $\Gamma(\bby) < 0$ whenever $\bby$ is close enough to $0$. This completes the proof.
\end{proof}

With the above, we are ready to show that, if initialised close enough to a local minimiser, the dynamic will converge to the local minimiser at the same rate. Suppose $\bbx^{t} \in \Omega' \cap \Delta$ indicated by \cref{lem: local-convexity-around-strict-minimiser} and \cref{lem: no-escape-from-the-local-region}, we shall have
\begin{align*}
f(\bbx^{t+1})  &\leq f(\bbx^{t}) + \inner{\nabla f(\bbx^t)}{\bbx^{t+1} - \bbx^{t}} + \frac{1}{\eta}\cdot\mathrm{KL}(\bbx^{t+1},\bbx^{t}) \\
& \leq f(\bbx^{t}) + \inner{\nabla f(\bbx^t)}{\bbx^{\star} - \bbx^{t}} + \frac{1}{\eta}\cdot\left(\mathrm{KL}(\bbx^{\star},\bbx^{t}) - \mathrm{KL}(\bbx^{\star},\bbx^{t+1})\right) \\ 
& \leq f(\bbx^\star) + \frac{1}{\eta}\cdot\left(\mathrm{KL}(\bbx^{\star},\bbx^{t}) - \mathrm{KL}(\bbx^{\star},\bbx^{t+1})\right),
\end{align*}
where the first step follows from \cref{eq: RHS-bregman-convex}, the second step follows from \cref{app-lem: difference-bregman-opt}, the third step follows from the local convexity indicated by \cref{lem: local-convexity-around-strict-minimiser}. Therefore, we are able to see that
\[
\frac{1}{\eta}\cdot\left(\mathrm{KL}(\bbx^{\star},\bbx^{t}) - \mathrm{KL}(\bbx^{\star},\bbx^{t+1}) \right)\geq f(\bbx^{t+1}) - f(\bbx^\star).
\]
Since, $\mathrm{KL}(\bbx^{\star},\bbx^{t}) - \mathrm{KL}(\bbx^{\star},\bbx^{t+1}) \geq  0$, by taking the sum from $t=0$ to $t=T-1$, one has
\[
\mathrm{KL}(\bbx^{\star},\bbx^{0}) - \mathrm{KL}(\bbx^{\star},\bbx^{T}) 
 = \sum_{t=0}^{T-1}\mathrm{KL}(\bbx^{\star},\bbx^{t}) - \mathrm{KL}(\bbx^{\star},\bbx^{t+1})
 \geq \eta T\cdot(f(\bbx^{T}) -f(\bbx^{\star}) ).
\]
Then, we could conclude that,
\[
  f\left(\bbx^{T}\right) -f\left(\bbx^{\star}\right)\leq \frac{\mathrm{KL}(\bbx^{\star},\bbx^{0})}{\eta T}.
\]
For local convergence, since $\bbx^\star$ is the unique strict local minimiser in $\Omega'$, one must have $\bbx^T \rightarrow \bbx^\star$ as $T\rightarrow \infty$.

\section{Proofs for Section \ref{subsec: results-equiv-to-MD}}
\subsection{Proof of Lemma \ref{lem: dynamics-formulation} }
\boxlem{[Proportional response dynamic]
Set $s_j^t = \frac{v_j(\phi_j^t)^r}{\sum_{i\in \mathcal{I}}v_i(\phi_i^t)^r}$ as the trial probability of item $j$. Let $\zeta_j = (c'_j)^{-1}$, the above dynamics admit the following update rules:
\[
\phi_{j}^{t+1} = \frac{s_j^t \zeta_j(s_j^t)}{\sum_{i\in \mathcal{I}}s_i^t\zeta_i(s_i^t)},~~~s_j^{t+1} = \frac{v_j (s_j^t\zeta_j(s_j^t))^r}{\sum_{i\in \calI}v_i (s_i^t\zeta_i(s_i^t))^r}.
\]
}
\begin{proof}
    We notice that,
    \[
    \phi^{t+1}_j = \frac{q_j^{t+1} v_j (\phi_j^t)^r}{\sum_{i}q_i^{t+1} v_i (\phi_i^t)^r} = \frac{ \zeta_j(s_j^t)v_j (\phi_j^t)^r}{\sum_{i}\zeta_i(s_i^t) v_i (\phi_i^t)^r} = \frac{ \zeta_j(s_j^t)\frac{v_j (\phi_j^t)^r}{\sum_k v_k (\phi_k^t)^r}}{\sum_{i}\zeta_i(s_i^t)  \frac{v_i (\phi_i^t)^r}{\sum_k v_k (\phi_k^t)^r}} = \frac{\zeta_j(s_j^t)s_j^t}{\sum_i \zeta_i(s_i^t)s_i^t},
    \]
    which is the LHS identity. For the RHS identity, we have
    \[
     s_j^{t+1} = \frac{v_j (\phi^{t+1}_j)^r}{\sum_{i}v_i (\phi^{t+1}_i)^r} = \frac{v_j \left(\frac{\zeta_j(s_j^t)s_j^t}{\sum_k \zeta_k(s_k^t)s_k^t}\right)^r}{\sum_i v_i \left(\frac{\zeta_j(s_i^t)s_i^t}{\sum_i \zeta_k(s_k^t)s_k^t}\right)^r} = \frac{v_j (s_j^t\zeta_j(s_j^t))^r}{\sum_{i}v_i (s_i^t\zeta_i(s_i^t))^r}.
    \]
\end{proof}

\boxlem{[Equilibrium response dynamic] Assume $r<1$, for the equilibrium response dynamic, we have
\[
 s_j^{t+1} = \frac{v_j^{\frac{1}{1-r}}\left(\zeta_j(s_j^t)\right)^{\frac{r}{1-r}}}{\sum_{i}v_i^{\frac{1}{1-r}}\left(\zeta_i(s_i^t)\right)^{\frac{r}{1-r}}}.
\]}
\begin{proof}
    This could shown by direct calculation. We have
    \[
    s_j^{t+1} = \frac{v_j(\phi_j^{t+1})^r}{\sum_i v_i(\phi_i^{t+1})^r} = \frac{v_j\left(\frac{(q_j^{t}v_j)^{\frac{1}{1-r}}}{\sum_k (q_k^{t} v_k)^{\frac{1}{1-r}}}\right)^r}{\sum_i v_i \left(\frac{(q_i^{t}v_i)^{\frac{1}{1-r}}}{\sum_k (q_k^{t} v_k)^{\frac{1}{1-r}}}\right)^r} = \frac{v_j^{\frac{1}{1-r}} \left(\zeta_j(s_j^t)\right)^{\frac{r}{1-r}}}{\sum_i v_i^{\frac{1}{1-r}} \left(\zeta_i(s_i^t)\right)^{\frac{r}{1-r}}}.
    \]
\end{proof}

\subsection{Proof of Theorem \ref{thm: dynamic-equiv-to-MD}}
\begin{proof}
    The form of mirror descent with KL divergence is
    \[
    s^{t+1}_j = \frac{s_j^t \exp (-\eta g_j^t)}{\sum_i s_i^t \exp (-\eta g_i^t)}.
    \]
    Hence, plugging in the gradient above, set $\eta = 1$, we have
    \[
    s_j^{t+1}= \frac{s_j^t \exp\left(\log(v_j[\zeta(s_j^t)]^r(s_j^t)^{r-1})\right)}{\sum_i s_i^t \exp\left(\log(v_i[\zeta(s_i^t)]^r(s_i^t)^{r-1})\right)} =  \frac{v_j (s_j^t\zeta_j(s_j^t))^r}{\sum_{i\in \calI}v_i (s_i^t\zeta_i(s_i^t))^r},
    \]
    as desired. Similarly, for $\eta = \frac{1}{1-r}$, we have
    \[
    s_j^{t+1}= \frac{s_j^t \exp\left(\frac{1}{1-r}\log(v_j[\zeta(s_j^t)]^r(s_j^t)^{r-1})\right)}{\sum_i s_i^t \exp\left(\frac{1}{1-r}\log(v_i[\zeta(s_i^t)]^r(s_i^t)^{r-1})\right)}
     = \frac{v_j^{\frac{1}{1-r}} \left(\zeta_j(s_j^t) \right)^{\frac{r}{1-r}}}{\sum_i v_i^{\frac{1}{1-r}} \left(\zeta_i(s_i^t) \right)^{\frac{r}{1-r}}}.
    \]
\end{proof}

\subsection{Transformation of $\Phi_0$} \label{app-sec: transformation-of-potential-function}
\begin{proof}
    For the integral in the second term of \cref{eq: potential-function-0}, consider the substitution $t = c_j'(z)$, we can write it as
    \[
    \int_{0}^{s_j} \log \zeta_j(z) ~\mathrm{d}z = \int_{\zeta_j(0)}^{\zeta_j(s_j)} \log(t) c_j''(t)~\mathrm{d}t.
    \]
    Then, using integration by parts, we have
    \[
    \int_{\zeta_j(0)}^{\zeta_j(s_j)} \log(t) c_j''(t)~\mathrm{d}t= s_j\log(\zeta_j(s_j)) - \int_{\zeta_j(0)}^{\zeta_j(s_j)} \frac{c_j'(u)}{u} ~\mathrm{d}u = s_j\log(\zeta_j(s_j)) - \int_{0}^{\zeta_j(s_j)} \frac{c_j'(u)}{u} ~\mathrm{d}u
    \]
    Replacing $s_j$ by $s_j^t$ and $\zeta_j(s_j^t)$ by $q_j^{t+1}$, we have
    \[
    \int_{0}^{s_j} \log \zeta_j(z) ~\mathrm{d}z = s_j^t\log(q_j^{t+1}) - \int_{0}^{q_j^{t+1}} \frac{c_j'(u)}{u} ~\mathrm{d}u.
    \]
\end{proof}

\section{Proofs for Section \ref{subsec: result-convergence-two-sided-dynamic}}
\subsection{Proof of Lemma \ref{lem: Bregman-smooth-phi0}}
\begin{proof} 
    Note that,
    \[
    \Phi(\bbx) - \Phi(\bby) = \sum_j (y_j - x_j)\log v_j + r\int_{x_j}^{y_j} \log \zeta_j(z) ~\mathrm{d}z + (1-r)x_j\log x_j - (1-r)y_j \log y_j + (r-1)(x_j - y_j).
    \]
    And also, 
    \[
    \inner{\nabla \Phi(\bby)}{\bbx - \bby} = \sum_{j}  (y_j - x_j)\log v_j- r(x_j - y_j)\log \zeta_j(y_j) - (r-1)\log(y_j)(x_j - y_j).
    \]
    Hence,
   \begin{align*}
   & \Phi(\bbx) - \Phi(\bby) - \inner{\nabla \Phi(\bby)}{\bbx - \bby} \\
    &= \sum_{j} r\int_{x_j}^{y_j} \log \zeta_j(z) ~\mathrm{d}z + r(x_j - y_j)\log \zeta_j(y_j) +  (1-r)x_j\log x_j - (1-r)x_j\log y_j\\
    &\leq \sum_j (1-r)x_j\log x_j - (1-r)x_j\log y_j = (1-r)\mathrm{KL}(\bbx,\bby),
    \end{align*}
    where the inequality follows from the fact that $\zeta_j$ is increasing. 
\end{proof}

\subsection{Proof of Proposition \ref{prop: converge-to-boundary-dominance-prop}}
We prove some special cases that the dynamic would converge to the boundary when some $\zeta_j$ is concave.
\boxlem{For the PR dynamic, suppose $rx\cdot\zeta_j'(x) + (r-1)\zeta_j(x) > 0$, $s_j^0 \geq s_i^0$, $\zeta_j(x) > \zeta_i(x)$, $v_j > v_i$, then it holds that 
\[
 s_i^t \rightarrow 0,~~ \text{as }t\rightarrow\infty.
\]}
\begin{proof}
    Under the PR dynamic, we observe that 
    \[
    \frac{s_j^{t+1}}{s_i^{t+1}} = \frac{\frac{v_j (s_j^t\zeta_j(s_j^t))^r}{\sum_{k\in \calI}v_k (s_k^t\zeta_i(s_k^t))^r}}{\frac{v_i (s_i^t\zeta_i(s_i^t))^r}{\sum_{k\in \calI}v_k (s_k^t\zeta_i(s_k^t))^r}}=  \frac{v_j}{v_i} \cdot \left(\frac{s_j^t}{s_i^t}\right)^r \cdot \left(\frac{\zeta_j(s_j^t)}{\zeta_i(s_i^t)}\right)^{r}.
    \]
    As discussed in \cref{eq: concave-condition}, the concavity condition suggests, for any $x_1 \geq x_2$, we have $\left(\frac{\zeta_j(x_1)}{\zeta_j(x_2)}\right)^r \geq \left(\frac{x_1}{x_2}\right)^{1-r} $. Hence, utilising the fact that $\zeta_j(x) > \zeta_i(x)$, we have
    \[
    \frac{s_j^{t+1}}{s_i^{t+1}} = \frac{v_j}{v_i} \cdot \left(\frac{s_j^t}{s_i^t}\right)^r \cdot \left(\frac{\zeta_j(s_j^t)}{\zeta_i(s_i^t)}\right)^{r} \geq \frac{v_j}{v_i} \cdot \left(\frac{s_j^t}{s_i^t}\right)^r \cdot \left(\frac{\zeta_j(s_j^t)}{\zeta_j(s_i^t)}\right)^{r} \geq \frac{v_j}{v_i} \cdot \frac{s_j^t}{s_i^t}.
    \]
\end{proof}
    
\boxlem{For the equilibrium response dynamic, suppose $rx\cdot\zeta_j'(x) + (r-1)\zeta_j(x) > 0$, $s_j^0 \geq s_i^0$, $\zeta_j(x) > \zeta_i(x)$, $v_j > v_i$, then it holds that 
\[
 s_i^t \rightarrow 0,~~ \text{as }t\rightarrow\infty.
\]}
\begin{proof}
    Again, we have
    \[
     \frac{s_j^{t+1}}{s_i^{t+1}} = \frac{ \frac{v_j^{\frac{1}{1-r}}\left(\zeta_j(s_j^t)\right)^{\frac{r}{1-r}}}{\sum_{k}v_k^{\frac{1}{1-r}}\left(\zeta_i(s_k^t)\right)^{\frac{r}{1-r}}}}{ \frac{v_i^{\frac{1}{1-r}}\left(\zeta_i(s_i^t)\right)^{\frac{r}{1-r}}}{\sum_{k}v_k^{\frac{1}{1-r}}\left(\zeta_k(s_k^t)\right)^{\frac{r}{1-r}}}}
     =\left(\frac{v_j}{v_i}\right)^{\frac{1}{1-r}}\cdot \left(\frac{\zeta_j(s_j^t)}{\zeta_i(s_i^t)}\right)^{\frac{r}{1-r}}.
    \]
    Then, by the concavity condition and the fact that $\zeta_j(x) > \zeta_i(x)$, we have
    \[
    \frac{s_j^{t+1}}{s_i^{t+1}} = \left(\frac{v_j}{v_i}\right)^{\frac{1}{1-r}}\cdot \left(\frac{\zeta_j(s_j^t)}{\zeta_i(s_i^t)}\right)^{\frac{r}{1-r}} \geq \left(\frac{v_j}{v_i}\right)^{\frac{1}{1-r}}\cdot \left(\frac{\zeta_j(s_j^t)}{\zeta_j(s_i^t)}\right)^{\frac{r}{1-r}} \geq \left(\frac{v_j}{v_i}\right)^{\frac{1}{1-r}}\cdot \frac{s_j^t}{s_i^t}.
    \]
\end{proof}

\section{Proofs for Section \ref{subsec: result-recommendation-policy}}
\subsection{Popularity Ranking}
\begin{lemma}[dynamics with popularity ranking]
 The update rule for the popularity ranking strategy $\calV_{\mathrm{pop}}$ of ER dynamic can be written as \[
  s_j^{t+1} = \frac{\left(\zeta_j(s_j^t)\right)^{\frac{r+\beta}{1-r}} \left( s_j^t\right)^{\frac{\beta}{1-r}} \cdot(\mu_j)^{\frac{1}{1-r}}}{\sum_i \left(\zeta_i(s_i^t)\right)^{\frac{r+\beta}{1-r}} \left( s_i^t\right)^{\frac{\beta}{1-r}} \cdot(\mu_i)^{\frac{1}{1-r}}}, ~~~~~~\forall j \in \calC.
 \] 
 The update rule for mixed ranking strategy $\calV_{\mathrm{mix}}$ of PR dynamic can be written as
    \[
    s_j^{t+1} = \frac{\mu_j \left(s_j^t \zeta_j(s_j^t)\right)^{r+\beta}}{\sum_i \mu_i \left(s_i^t \zeta_i(s_i^t)\right)^{r+\beta}}, ~~~~~~\forall j \in \calC.
    \] \label{lem: dynamic-formulation-pop-rec}

\end{lemma}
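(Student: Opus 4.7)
The plan is to derive both update formulas by successive substitution, exploiting the fact that the popularity-ranking rule combines cleanly with the trial probability. Under $\calV_{\mathrm{pop}}$, we have $v_j^{t+1} \propto \mu_j(\phi_j^t)^\beta$ and $v_j^{t+2} \propto \mu_j(\phi_j^{t+1})^\beta$, so the defining relation \Cref{eq: variable-transformation-ranking} collapses to
\[
s_j^t ~=~ \frac{v_j^{t+1}(\phi_j^t)^r}{\sum_i v_i^{t+1}(\phi_i^t)^r} ~=~ \frac{\mu_j (\phi_j^t)^{r+\beta}}{\sum_i \mu_i (\phi_i^t)^{r+\beta}},
\]
and analogously $s_j^{t+1} \propto \mu_j(\phi_j^{t+1})^{r+\beta}$. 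Combined with $q_j^{t+1} = \zeta_j(s_j^t)$ from \Cref{lem: response-of-producers}, it suffices to express $\phi_j^{t+1}$ in terms of these already-derived quantities.

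For the PR case this is nearly immediate. Plugging the PR popularity update into $\phi_j^{t+1}$ and using $v_j^{t+1}(\phi_j^t)^r \propto s_j^t$ yields $\phi_j^{t+1} \propto q_j^{t+1} v_j^{t+1}(\phi_j^t)^r \propto s_j^t\, \zeta_j(s_j^t)$. Raising to the power $r+\beta$ and absorbing the $j$-independent normaliser into the numerator and denominator gives the claimed PR formula.

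For the ER case, the popularity update gives $\phi_j^{t+1} \propto (q_j^{t+1} v_j^{t+1})^{1/(1-r)}$, hence
\[
s_j^{t+1} ~\propto~ \mu_j \left(\zeta_j(s_j^t)\, v_j^{t+1}\right)^{(r+\beta)/(1-r)}.
\]
To eliminate $v_j^{t+1}$, I will observe that $v_j^{t+1} \propto \mu_j(\phi_j^t)^\beta$ and $s_j^t \propto \mu_j(\phi_j^t)^{r+\beta}$ together imply $v_j^{t+1} \propto \mu_j^{r/(r+\beta)}(s_j^t)^{\beta/(r+\beta)}$ (after absorbing $j$-independent constants). Raising to the power $(r+\beta)/(1-r)$ reduces the $s_j^t$ exponent to $\beta/(1-r)$, the $\zeta_j(s_j^t)$ exponent to $(r+\beta)/(1-r)$, and the combined $\mu_j$ exponent to $1+\tfrac{r}{1-r}=\tfrac{1}{1-r}$, matching the target formula.

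The main delicate step is the exponent bookkeeping in the ER case: the seemingly awkward intermediate power $r/(r+\beta)$ introduced when inverting the relation between $v_j^{t+1}$ and $s_j^t$ must cancel cleanly against the outer exponent $(r+\beta)/(1-r)$. The only place this could fail is the identity $1+\tfrac{r}{1-r} = \tfrac{1}{1-r}$, which holds, so both formulas follow with no further subtlety.
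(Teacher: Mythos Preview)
Your proposal is correct and follows essentially the same route as the paper: both reduce $s_j^t$ to $\mu_j(\phi_j^t)^{r+\beta}$ under popularity ranking, then eliminate $v_j^{t+1}$ (equivalently $\phi_j^t$) via the identity $v_j^{t+1} \propto \mu_j^{r/(r+\beta)}(s_j^t)^{\beta/(r+\beta)}$, with the same exponent-cancellation $(r+\beta)/(1-r)\cdot r/(r+\beta)=r/(1-r)$ driving the ER case. The only difference is presentational: you use $\propto$ throughout and go directly to $s_j^{t+1}$, whereas the paper first writes out $\phi_j^{t+1}$ as an explicit normalized fraction before substituting.
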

\begin{proof}
   For the ER dynamic: First, we note that the trial probability is now given by
   \[
    s_j^{t} = \frac{v_j^{t+1} (\phi_j^t)^r}{\sum_i v_i^{t+1} (\phi_i^t)^r} = \frac{\mu_j (\phi_j^t)^{r+\beta}}{\sum_i \mu_i (\phi_i^t)^{r+\beta}}.
   \]
    Given the update rule for the ER dynamic, under mixed recommendation policy, we have
\begin{align*}
   \phi^{t+1}_j &=  \frac{(q_j^{t+1} v_j^{t+1})^{\frac{1}{1-r}}}{\sum_{i =1}^{|\calI|} (q_i^{t+1} v_i^{t+1})^{\frac{1}{1-r}}} =  \frac{\left(\zeta_j(s_j^t) \mu_j(\phi_j^t)^\beta \right)^{\frac{1}{1-r}}}{\sum_{i =1}^{|\calI|} (q_i^t v_i^t)^{\frac{1}{1-r}}}
  = \frac{\left( \left(\zeta_j(s_j^t)\right) \mu_j(\phi_j^t)^\beta \right)^{\frac{1}{1-r}}}{\sum_{i =1}^{|\calI|} (q_i^t v_i^t)^{\frac{1}{1-r}}} \\
  &= \frac{\left(\zeta_j(s_j^t)\right)^{\frac{1}{1-r}} \left( \mu_j(\phi_j^t)^\beta \right)^{\frac{1}{1-r}}}{\sum_{i =1}^{|\calI|} (q_i^t v_i^t)^{\frac{1}{1-r}}} 
  = \frac{\left(\zeta_j(s_j^t)\right)^{\frac{1}{1-r}} \left( \mu_j(\phi_j^t)^{r+\beta} \right)^{\frac{1}{1-r}\cdot\frac{\beta}{r+\beta}} \cdot(\mu_j)^{\frac{r}{r+ \beta} \cdot\frac{1}{1-r}}}{\sum_{i =1}^{|\calI|} (q_i^t v_i^t)^{\frac{1}{1-r}}}\\
  & = \frac{\left(\zeta_j(s_j^t)\right)^{\frac{1}{1-r}} \left( \mu_j(\phi_j^t)^{r+\beta} \right)^{\frac{1}{1-r}\cdot\frac{\beta}{r+\beta}} \cdot(\mu_j)^{\frac{r}{r+ \beta} \cdot\frac{1}{1-r}}}{\sum_i\left(\zeta_i(s_i^t)\right)^{\frac{1}{1-r}} \left( \mu_i(\phi_i^t)^{r+\beta} \right)^{\frac{1}{1-r}\cdot\frac{\beta}{r+\beta}} \cdot(\mu_i)^{\frac{r}{r+ \beta} \cdot\frac{1}{1-r}}}
  = \frac{\left(\zeta_j(s_j^t)\right)^{\frac{1}{1-r}} \left( s_j^t\right)^{\frac{1}{1-r}\cdot\frac{\beta}{r+\beta}} \cdot(\mu_j)^{\frac{r}{r+ \beta} \cdot\frac{1}{1-r}}}{\sum_i\left(\zeta_i(s_i^t)\right)^{\frac{1}{1-r}} \left( s_i^t\right)^{\frac{1}{1-r}\cdot\frac{\beta}{r+\beta}} \cdot(\mu_i)^{\frac{r}{r+ \beta} \cdot\frac{1}{1-r}}}.
\end{align*}
Next, we have
\begin{align*}
    s_j^{t+1} = \frac{\mu_j (\phi_j^{t+1})^{r+\beta}}{\sum_i \mu_i (\phi_i^{t+1})^{r+\beta}} 
    =  \frac{\mu_j\left(\zeta_j(s_j^t)\right)^{\frac{r+\beta}{1-r}} \left( s_j^t\right)^{\frac{\beta}{1-r}} \cdot(\mu_j)^{\frac{r}{1-r}}}{\sum_i \mu_i\left(\zeta_i(s_i^t)\right)^{\frac{r+\beta}{1-r}} \left( s_i^t\right)^{\frac{\beta}{1-r}} \cdot(\mu_i)^{\frac{r}{1-r}}}
    = \frac{\left(\zeta_j(s_j^t)\right)^{\frac{r+\beta}{1-r}} \left( s_j^t\right)^{\frac{\beta}{1-r}} \cdot(\mu_j)^{\frac{1}{1-r}}}{\sum_i \left(\zeta_i(s_i^t)\right)^{\frac{r+\beta}{1-r}} \left( s_i^t\right)^{\frac{\beta}{1-r}} \cdot(\mu_i)^{\frac{1r}{1-r}}}.
\end{align*}

For the PR dynamic: 
    Firstly, we note that
    \[
    \phi^{t+1}_j = \frac{q_j^{t+1} v_j^{t+1} (\phi_j^{t})^r }{\sum_{i=1}^{|\calI|} q_i^{t+1} v_i^{t+1} (\phi_i^{t})^r } = \frac{\zeta_j(s_j^t)s_j^t}{\sum_i \zeta_i(s_i^t)s_i^t}.
    \] 
    Then, we have
    \[
    s_j^{t+1} = \frac{v_j^{t+2}(\phi_j^{t+1})^r}{\sum_i v_i^{t+2}(\phi_i^{t+1})^r} = \frac{\mu_j (\phi_j^{t+1})^{r+\beta}}{\sum_i \mu_i (\phi_i^{t+1})^{r+\beta}} =\frac{\mu_j \left(s_j^t \zeta_j(s_j^t)\right)^{r+\beta}}{\sum_i \mu_i \left(s_i^t \zeta_i(s_i^t)\right)^{r+\beta}}.
    \]
\end{proof}

We demonstrate one instance of the equivalence to mirror descent here. For other dynamics mentioned in \Cref{subsec: result-recommendation-policy}, the technique is the same.
\begin{theorem}
Define the potential function $\Phi_{\mathrm{pop}}$ as
\begin{equation}
\Phi_{\mathrm{pop}}(\bbs) := -\left(\sum_{j} s_j \log \mu_j + (r+\beta) \cdot\int_{0}^{s_j} \log \zeta_j(z) ~\mathrm{d}z + (r+\beta-1)s_j\log s_j  \right). 
\end{equation}
    \begin{itemize}
    \item The ER dynamic stated in \Cref{lem: dynamic-formulation-pop-rec} is equivalent to mirror descent of $\Phi_{\mathrm{pop}}$ with KL divergence on $\Delta$. The corresponding learning rate $\eta =\frac{1}{1-r}$,
    \item The PR dynamics stated in \Cref{lem: dynamic-formulation-pop-rec} is equivalent to mirror descent of $\Phi_{\mathrm{pop}}$ with KL divergence on $\Delta$. The corresponding learning rate $\eta =1$.

\end{itemize}
\label{thm: dynamic-equiv-to-MD-pop}
\end{theorem}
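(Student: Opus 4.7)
The plan is to prove the theorem by direct verification, exactly mirroring the structure of the proof of \Cref{thm: dynamic-equiv-to-MD}. I compute the gradient of $\Phi_{\mathrm{pop}}$ coordinate-wise, substitute it into the closed-form KL mirror descent update on the simplex given by \Cref{eq: mirror-descent-KL}, and check that the two choices $\eta = 1$ and $\eta = \frac{1}{1-r}$ recover the PR and ER update formulae in \Cref{lem: dynamic-formulation-pop-rec}.

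The first step is a routine differentiation. By the Leibniz rule, $\frac{\partial}{\partial s_j} \int_0^{s_j} \log \zeta_j(z)\, \mathrm{d}z = \log \zeta_j(s_j)$, so
\[
\frac{\partial \Phi_{\mathrm{pop}}}{\partial s_j}(\bbs) \;=\; -\log \mu_j \;-\; (r+\beta)\log \zeta_j(s_j) \;-\; (r+\beta-1)\bigl(\log s_j + 1\bigr).
\]
The additive constant $-(r+\beta-1)$ is independent of $j$ and will factor out of the normalization in the next step.

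The second step substitutes this gradient into \Cref{eq: mirror-descent-KL}. Writing the update as $s_j^{t+1} \propto s_j^t \exp\bigl(-\eta\, \nabla_j \Phi_{\mathrm{pop}}(\bbs^t)\bigr)$ and absorbing the $j$-independent factor $e^{-\eta(r+\beta-1)}$ into the normalizer gives
\[
s_j^{t+1} \;=\; \frac{\mu_j^{\eta}\, \bigl[\zeta_j(s_j^t)\bigr]^{\eta(r+\beta)}\, (s_j^t)^{1 + \eta(r+\beta-1)}}{\sum_i \mu_i^{\eta}\, \bigl[\zeta_i(s_i^t)\bigr]^{\eta(r+\beta)}\, (s_i^t)^{1 + \eta(r+\beta-1)}}.
\]
Setting $\eta = 1$, the exponent of $s_j^t$ collapses to $1 + (r+\beta-1) = r+\beta$, which combined with the $[\zeta_j(s_j^t)]^{r+\beta}$ factor reproduces $\mu_j (s_j^t \zeta_j(s_j^t))^{r+\beta}$ in both numerator and denominator, matching the PR update in \Cref{lem: dynamic-formulation-pop-rec}. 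Setting $\eta = \frac{1}{1-r}$, the exponent of $s_j^t$ becomes $1 + \frac{r+\beta-1}{1-r} = \frac{\beta}{1-r}$, while the exponent of $\zeta_j(s_j^t)$ becomes $\frac{r+\beta}{1-r}$ and that of $\mu_j$ becomes $\frac{1}{1-r}$, matching the ER update.

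There is no real technical obstacle: the proof is essentially a bookkeeping exercise. The only subtle point worth flagging is that, although the potential function uses $r + \beta$ (reflecting the effective social signal after popularity ranking), the ER learning rate remains $\frac{1}{1-r}$ rather than $\frac{1}{1-(r+\beta)}$. This is because the learning rate is determined by how much the user dynamics contract per epoch in the original popularity coordinates, and the exponent $r$ in the trial probability is unaffected by the ranking strategy — the $\beta$-contribution enters only through the visibility reweighting inside $\bbs^t$. Verifying that the algebra lines up with this particular $\eta$ is the crux of the calculation.
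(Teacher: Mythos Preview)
Your proof is correct and follows essentially the same approach as the paper: compute the coordinate-wise gradient of $\Phi_{\mathrm{pop}}$, substitute it into the KL mirror-descent update \eqref{eq: mirror-descent-KL}, and verify the exponents match the PR and ER formulae of \Cref{lem: dynamic-formulation-pop-rec} for $\eta=1$ and $\eta=\frac{1}{1-r}$ respectively. Your presentation is in fact slightly cleaner, as you write the general $\eta$-update once and then specialise, and you correctly keep $\mu_j$ throughout (the paper's displayed gradient has a stray $v_j$).
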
 
\begin{proof}
    The gradient of mirror descent is 
    \[
    g_j^t = \frac{\partial}{\partial s_j}\Phi_{\mathrm{pop}}(\bbs^t) = \log \left( v_j \left(\zeta_j(s_j^t)\right)^{r+\beta} (s_j^t)^{r+\beta -1}  \right) - (r+\beta -1)
    \]
    The form of mirror descent with KL divergence is
    \[
    s^{t+1}_j = \frac{s_j^t \exp (-\eta g_j^t)}{\sum_i s_i^t \exp (-\eta g_i^t)}.
    \]
    Hence, plugging in the gradient above, set $\eta = 1$, we have
    \[
    s_j^{t+1}= \frac{s_j^t \exp\left(\log(v_j[\zeta(s_j^t)]^{r +\beta}(s_j^t)^{r+\beta-1})\right)}{\sum_i s_i^t \exp\left(\log(v_i[\zeta(s_i^t)]^{r+\beta}(s_i^t)^{r+\beta-1})\right)} =  \frac{v_j (s_j^t\zeta_j(s_j^t))^{r+\beta}}{\sum_{i\in \calI}v_i (s_i^t\zeta_i(s_i^t))^{r+\beta}},
    \]
    as desired. Similarly, for $\eta = \frac{1}{1-r}$, we have
    \[
    s_j^{t+1}= \frac{s_j^t \exp\left(\frac{1}{1-r}\log(v_j[\zeta(s_j^t)]^{r +\beta}(s_j^t)^{r+\beta-1})\right)}{\sum_i s_i^t \exp\left(\frac{1}{1-r}\log(v_i[\zeta(s_i^t)]^{r+\beta}(s_i^t)^{r+\beta-1})\right)}
     = \frac{\left(\zeta_j(s_j^t)\right)^{\frac{r+\beta}{1-r}} \left( s_j^t\right)^{\frac{\beta}{1-r}} \cdot(\mu_j)^{\frac{1}{1-r}}}{\sum_i \left(\zeta_i(s_i^t)\right)^{\frac{r+\beta}{1-r}} \left( s_i^t\right)^{\frac{\beta}{1-r}} \cdot(\mu_i)^{\frac{1}{1-r}}}.
    \]
\end{proof}

\subsection{Quality Ranking}
Set the variable as $s_j^t = \frac{v_j^{t+1}(\phi_j^t)^r}{\sum_i v_i^{t+1}(\phi_i^t)^r}.$
\begin{lemma}
   The PR dynamic with popularity ranking update could be reformulated as
   \[
   s_j^{t+1} = \frac{\mu_j \cdot \left(\zeta_j(s_j^t)\right)^{r+\alpha} \cdot (s_j^t)^r}{\sum_i \mu_i \cdot \left(\zeta_i(s_i^t)\right)^{r+\alpha} \cdot (s_i^t)^r}.
   \] And the ER dynamic with popularity ranking could be written as
   \[
   s_j^{t+1} = \frac{\mu_j^{\frac{1}{1-r}}\cdot \left(\zeta_j(s_j^t)\right)^{\alpha + \frac{r}{1-r}}\cdot\left(\zeta_j(s_j^{t-1})\right)^{r\alpha}}{\sum_i \mu_i^{\frac{1}{1-r}}\cdot \left(\zeta_i(s_i^t)\right)^{\alpha + \frac{r}{1-r}}\cdot\left(\zeta_i(s_i^{t-1})\right)^{r\alpha}}.
   \]
\end{lemma}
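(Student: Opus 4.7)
The plan is to derive both formulas by direct substitution of the quality-ranking policy $v_j^{t+1}\propto\mu_j(q_j^t)^\alpha$ into the appropriate popularity-update operator ($\calP_{\mathrm{PR}}$ or $\calP_{\mathrm{ER}}$), and then rewrite everything in terms of $s_j^t=v_j^{t+1}(\phi_j^t)^r/\sum_i v_i^{t+1}(\phi_i^t)^r$ using $q_j^{t+1}=\zeta_j(s_j^t)$ from \Cref{lem: response-of-producers}. All equalities will be up to the common normalisation constant that makes $\bbs^{t+1}\in\Delta$, which I suppress as ``$\propto$'' in the sketch below.

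For the \emph{PR} case, starting from $\phi_j^{t+1}\propto q_j^{t+1}v_j^{t+1}(\phi_j^t)^r=\zeta_j(s_j^t)\cdot v_j^{t+1}(\phi_j^t)^r$ and using the definition of $s_j^t$, I get $\phi_j^{t+1}\propto \zeta_j(s_j^t)\,s_j^t$, exactly as in the constant-visibility derivation of \Cref{lem: dynamics-formulation}. Plugging into $s_j^{t+1}\propto v_j^{t+2}(\phi_j^{t+1})^r$ and substituting $v_j^{t+2}\propto\mu_j(q_j^{t+1})^\alpha=\mu_j(\zeta_j(s_j^t))^\alpha$ (this is where quality ranking enters), I obtain
\[
s_j^{t+1}\propto \mu_j\,(\zeta_j(s_j^t))^\alpha\cdot(\zeta_j(s_j^t)\,s_j^t)^r=\mu_j\,(\zeta_j(s_j^t))^{r+\alpha}\,(s_j^t)^r,
\]
which, after normalisation over $i\in\calC$, is the claimed PR formula.

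The \emph{ER} case is the technically delicate one, because the quality-ranking visibility at epoch $t+1$ is driven by the previous-epoch quality $q_j^t=\zeta_j(s_j^{t-1})$, which introduces a one-step lag. I would track the chain in two stages. First, from $\phi_j^{t+1}\propto(q_j^{t+1}v_j^{t+1})^{1/(1-r)}$, substitute $q_j^{t+1}=\zeta_j(s_j^t)$ and $v_j^{t+1}\propto\mu_j(\zeta_j(s_j^{t-1}))^\alpha$ to get
\[
\phi_j^{t+1}\propto \mu_j^{1/(1-r)}\,(\zeta_j(s_j^t))^{1/(1-r)}\,(\zeta_j(s_j^{t-1}))^{\alpha/(1-r)}.
\]
Second, combine with $v_j^{t+2}\propto\mu_j(\zeta_j(s_j^t))^\alpha$ inside $s_j^{t+1}\propto v_j^{t+2}(\phi_j^{t+1})^r$; collecting powers of $\mu_j$ as $1+r/(1-r)=1/(1-r)$, powers of $\zeta_j(s_j^t)$ as $\alpha+r/(1-r)$, and powers of $\zeta_j(s_j^{t-1})$ as $\alpha r/(1-r)$ delivers the advertised ER formula (up to normalisation).

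The main obstacle is purely bookkeeping: keeping the $t$-vs-$(t-1)$ indices straight because quality ranking couples $\bbv$ and $\bbq$ across epochs, unlike popularity ranking where both live at the same epoch. Beyond this, the two derivations are routine substitutions, so no non-trivial analytic step is required; the formulas are just identities in the dynamical system. I expect the same template (with an extra $(s_j^t)^\beta$ factor from the popularity component of $v^{t+2}$, and an extra $(s_j^{t-1})^\beta$ factor from $v^{t+1}$) to recover the mixed-ranking formulas \eqref{eq:s-update-ER} and \eqref{eq:s-update-PR} that were stated without proof in the main text, indicating this lemma is really the special case $\beta=0$ of a single unified derivation.
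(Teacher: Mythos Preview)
Your proposal is correct and follows essentially the same direct-substitution approach as the paper's proof: first rewrite $\phi_j^{t+1}$ in terms of $s_j^t$ (and, for ER, $s_j^{t-1}$ via $v_j^{t+1}\propto\mu_j(q_j^t)^\alpha$), then plug into $s_j^{t+1}\propto v_j^{t+2}(\phi_j^{t+1})^r$ with $v_j^{t+2}\propto\mu_j(\zeta_j(s_j^t))^\alpha$ and collect exponents. Note that both your computation and the paper's own proof give exponent $\alpha r/(1-r)$ on $\zeta_j(s_j^{t-1})$, not the $r\alpha$ printed in the lemma statement --- the latter is a typo in the paper, as you can verify by setting $\beta=0$ in \eqref{eq:s-update-ER}.
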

\begin{proof}
    Firstly, we note that
    \[
    \phi^{t+1}_j = \frac{q_j^{t+1} v_j^{t+1} (\phi_j^{t})^r }{\sum q_i^{t+1} v_i^{t+1} (\phi_i^{t})^r } = \frac{\zeta_j(s_j^t)s_j^t}{\sum_i \zeta_i(s_i^t)s_i^t}.
    \]
    Then, we have
    \[
    s_j^{t+1} = \frac{v_j^{t+2}(\phi_j^{t+1})^r}{\sum_i v_i^{t+2}(\phi_i^{t+1})^r} = \frac{\mu_j \left(\zeta_j(s_j^t)\right)^\alpha \left(\zeta_j(s_j^t)\cdot s_j^t\right)^r}{\sum_i \mu_i \left(\zeta_i(s_i^t)\right)^\alpha \left(\zeta_i(s_i^t)\cdot s_i^t\right)^r}= \frac{\mu_j \cdot \left(\zeta_j(s_j^t)\right)^{r+\alpha} \cdot (s_j^t)^r}{\sum_i \mu_i \cdot \left(\zeta_i(s_i^t)\right)^{r+\alpha} \cdot (s_i^t)^r}.
    \]
    For the ER dynamic, we have
    \[
    s_j^{t+1} = \frac{v_j^{t+2}(\phi_j^{t+1})^r}{\sum_i v_i^{t+2}(\phi_i^{t+1})^r} = \frac{v_j^{t+2} \cdot (v_j^{t+1})^{\frac{r}{1-r}} \cdot (q_j^{t+1})^{\frac{r}{1-r}}}{\sum_i v_i^{t+2} \cdot (v_i^{t+1})^{\frac{r}{1-r}} \cdot (q_i^{t+1})^{\frac{r}{1-r}}}
    = \frac{\mu_j (q_j^{t+1})^\alpha \cdot \mu_j^{\frac{r}{1-r}}(q_j^t)^{\frac{r\alpha}{1-r}} \cdot (q_j^{t+1})^{\frac{r}{1-r}}}{\sum_i \mu_i (q_i^{t+1})^\alpha \cdot \mu_i^{\frac{r}{1-r}}(q_i^t)^{\frac{r\alpha}{1-r}} \cdot (q_i^{t+1})^{\frac{r}{1-r}}}.
    \]
    Reformulating the RHS and noticing that $q_j^{t+1} = \zeta_j(s_j^{t})$ will yield the result.
\end{proof}
\subsection{Mixed ranking}
\begin{lemma}
     The ER dynamic with mixed ranking update could be reformulated as
     \[
     \frac{\mu_j^{\frac{1}{1-r}}\left(\zeta_j(s_j^{t})\right)^{\alpha+ \frac{r+\beta}{1-r}}\left(\zeta_j(s_j^{t-1})\right)^{\frac{\alpha r}{1-r}}(s_j^t)^{\frac{\beta}{1-r}}}{\sum_i \mu_i^{\frac{1}{1-r}}\left(\zeta_i(s_i^{t})\right)^{\alpha+ \frac{r+\beta}{1-r}}\left(\zeta_i(s_i^{t-1})\right)^{\frac{\alpha r}{1-r}}(s_i^t)^{\frac{\beta}{1-r}}},
     \]
     And the PR dynamic with mixed ranking update could be reformulated as
     \[
     s_j^{t+1} = \frac{\mu_j \left(\zeta_j(s_j^t)\right)^{r+\alpha+\beta} (s_j^t)^{r+\beta}}{\sum_i \mu_i \left(\zeta_i(s_i^t)\right)^{r+\alpha+\beta} (s_i^t)^{r+\beta}}.
     \]
\end{lemma}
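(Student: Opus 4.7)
The plan is to verify both closed-form updates by direct substitution, starting from the definition $s_j^{t+1} = \frac{v_j^{t+2}(\phi_j^{t+1})^r}{\sum_i v_i^{t+2}(\phi_i^{t+1})^r}$ and pulling the mixed-ranking formula $v_j^{t+2}\propto \mu_j(q_j^{t+1})^\alpha(\phi_j^{t+1})^\beta$ into the numerator. Combining these, up to the normalisation, both dynamics share the preliminary form
\[
s_j^{t+1} \;\propto\; \mu_j\,(q_j^{t+1})^\alpha\,(\phi_j^{t+1})^{r+\beta}.
\]
I then invoke \Cref{lem: response-of-producers} to replace $q_j^{t+1}=\zeta_j(s_j^t)$, so the task reduces to expressing $(\phi_j^{t+1})^{r+\beta}$ appropriately using the user-update rule.

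For the PR case the simplification is short: the PR update yields $\phi_j^{t+1}\propto \zeta_j(s_j^t)\,s_j^t$ (exactly as in the popularity-ranking derivation), so $(\phi_j^{t+1})^{r+\beta}\propto (\zeta_j(s_j^t))^{r+\beta}(s_j^t)^{r+\beta}$, and collecting powers of $\zeta_j(s_j^t)$ immediately gives the stated formula $s_j^{t+1}\propto \mu_j(\zeta_j(s_j^t))^{r+\alpha+\beta}(s_j^t)^{r+\beta}$.

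The ER case is the delicate one and is the main obstacle. The ER rule gives $\phi_j^{t+1}\propto (q_j^{t+1}v_j^{t+1})^{1/(1-r)}$, hence
\[
s_j^{t+1} \;\propto\; \mu_j\,(q_j^{t+1})^{\alpha+\frac{r+\beta}{1-r}}(v_j^{t+1})^{\frac{r+\beta}{1-r}}.
\]
To match the target formula (whose only variables are $\mu_j$, $\zeta_j(s_j^t)$, $\zeta_j(s_j^{t-1})$ and $s_j^t$) I must eliminate $v_j^{t+1}$ without reintroducing $\phi_j^t$ or $v_j^t$. The trick is to exploit the \emph{two} expressions the model already provides for $v_j^{t+1}$: (i) mixed ranking at the previous epoch, $v_j^{t+1}\propto \mu_j(q_j^t)^\alpha(\phi_j^t)^\beta$, and (ii) the defining identity $s_j^t\propto v_j^{t+1}(\phi_j^t)^r$. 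Multiplying (i) by $(\phi_j^t)^r$ and comparing with (ii) yields the clean identity
\[
s_j^t \;\propto\; \mu_j\,(q_j^t)^\alpha\,(\phi_j^t)^{r+\beta},
\]
so $(\phi_j^t)^{r+\beta}\propto s_j^t/[\mu_j(q_j^t)^\alpha]$.

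Raising (i) to the power $(r+\beta)/(1-r)$ and substituting this identity produces
\[
(v_j^{t+1})^{\frac{r+\beta}{1-r}} \;\propto\; \mu_j^{\frac{r}{1-r}}\,(q_j^t)^{\frac{\alpha r}{1-r}}\,(s_j^t)^{\frac{\beta}{1-r}},
\]
after the cancellation $\mu_j^{(r+\beta)/(1-r)-\beta/(1-r)}=\mu_j^{r/(1-r)}$ and the parallel cancellation on $(q_j^t)$; I will verify this algebra carefully, as it is the crux of the proof. Plugging back and using $q_j^{t+1}=\zeta_j(s_j^t)$, $q_j^t=\zeta_j(s_j^{t-1})$ gives exponent $1+\tfrac{r}{1-r}=\tfrac{1}{1-r}$ on $\mu_j$, matching the claimed formula. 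Finally I restore the normalising denominator by summing the numerator over $i$, which completes the proof.
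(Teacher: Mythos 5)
Your proposal is correct and follows essentially the same route as the paper: both proofs unroll $s_j^{t+1}\propto \mu_j(q_j^{t+1})^\alpha(\phi_j^{t+1})^{r+\beta}$ via the mixed-ranking formula, substitute the ER/PR popularity updates, and eliminate $v_j^{t+1}$ using the epoch-$t$ ranking formula together with the definition of $s_j^t$, finishing with $q_j^{t+1}=\zeta_j(s_j^t)$, $q_j^t=\zeta_j(s_j^{t-1})$. The only (cosmetic) difference is in the ER bookkeeping — the paper splits $(v_j^{t+1})^{\frac{r+\beta}{1-r}}$ into $(v_j^{t+1})^{\frac{\beta}{1-r}}(v_j^{t+1})^{\frac{r}{1-r}}$ and recombines $(v_j^{t+1})^{\frac{\beta}{1-r}}(\phi_j^t)^{\frac{\beta r}{1-r}}=(s_j^t)^{\frac{\beta}{1-r}}$ at the end, whereas you first derive $(\phi_j^t)^{r+\beta}\propto s_j^t/[\mu_j(q_j^t)^\alpha]$ and substitute — and your cancellation $\mu_j^{\frac{r+\beta}{1-r}-\frac{\beta}{1-r}}=\mu_j^{\frac{r}{1-r}}$, $(q_j^t)^{\frac{\alpha(r+\beta)}{1-r}-\frac{\alpha\beta}{1-r}}=(q_j^t)^{\frac{\alpha r}{1-r}}$ checks out.
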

\begin{proof}
    For the ER dynamic, we have
    \begin{align*}
    s_j^{t+1} &= \frac{v_j^{t+2}(\phi_j^{t+1})^{r}}{\sum_i v_i^{t+2}(\phi_i^{t+1})^{r}}= \frac{\mu_j (q_j^{t+1})^{\alpha}(\phi_j^{t+1})^{r+\beta}}{\sum_i \mu_i (q_i^{t+1})^{\alpha}(\phi_i^{t+1})^{r+\beta}} =  \frac{\mu_j (q_j^{t+1})^{\alpha+ \frac{r+\beta}{1-r}}(v_j^{t+1})^{\frac{r+\beta}{1-r}}}{\sum_i \mu_i (q_i^{t+1})^{\alpha+ \frac{r+\beta}{1-r}}(v_i^{t+1})^{\frac{r+\beta}{1-r}}} \\
    &= \frac{\mu_j (q_j^{t+1})^{\alpha+ \frac{r+\beta}{1-r}}(v_j^{t+1})^{\frac{\beta}{1-r}}(v_j^{t+1})^{\frac{r}{1-r}}}{\sum_i \mu_i (q_i^{t+1})^{\alpha+ \frac{r+\beta}{1-r}}(v_i^{t+1})^{\frac{\beta}{1-r}}(v_i^{t+1})^{\frac{r}{1-r}}}
    = \frac{\mu_j (q_j^{t+1})^{\alpha+ \frac{r+\beta}{1-r}}(v_j^{t+1})^{\frac{\beta}{1-r}}\mu_j^{\frac{r}{1-r}}(\phi_j^t)^{\frac{\beta r}{1-r}} (q_j^t)^{\frac{\alpha r}{1-r}}}{\sum_i \mu_i (q_i^{t+1})^{\alpha+ \frac{r+\beta}{1-r}}(v_i^{t+1})^{\frac{\beta}{1-r}}\mu_i^{\frac{r}{1-r}}(\phi_i^t)^{\frac{\beta r}{1-r}} (q_i^t)^{\frac{\alpha r}{1-r}}}\\
    &= \frac{\mu_j^{\frac{1}{1-r}}(q_j^{t+1})^{\alpha+ \frac{r+\beta}{1-r}}(q_j^t)^{\frac{\alpha r}{1-r}}(v_j^{t+1})^{\frac{\beta}{1-r}}(\phi_j^t)^{\frac{\beta r}{1-r}}}{\sum_i \mu_i^{\frac{1}{1-r}}(q_i^{t+1})^{\alpha+ \frac{r+\beta}{1-r}}(q_i^t)^{\frac{\alpha r}{1-r}}(v_i^{t+1})^{\frac{\beta}{1-r}}(\phi_i^t)^{\frac{\beta r}{1-r}}}= \frac{\mu_j^{\frac{1}{1-r}}(q_j^{t+1})^{\alpha+ \frac{r+\beta}{1-r}}(q_j^t)^{\frac{\alpha r}{1-r}}(s_j^t)^{\frac{\beta}{1-r}}}{\sum_i \mu_i^{\frac{1}{1-r}}(q_i^{t+1})^{\alpha+ \frac{r+\beta}{1-r}}(q_i^t)^{\frac{\alpha r}{1-r}}(s_i^t)^{\frac{\beta}{1-r}}}.
    \end{align*}
    For the PR dynamic, again, we apply the identity
     \[
    \phi^{t+1}_j = \frac{q_j^{t+1} v_j^{t+1} (\phi_j^{t})^r }{\sum_i q_i^{t+1} v_i^{t+1} (\phi_i^{t})^r } = \frac{\zeta_j(s_j^t)s_j^t}{\sum_i \zeta_i(s_i^t)s_i^t}.
    \]
    Then, 
    \[
    s_j^{t+1} = \frac{v_j^{t+2}(\phi_j^{t+1})^{r}}{\sum_i v_i^{t+2}(\phi_i^{t+1})^{r}} = \frac{\mu_j (q_j^{t+1})^{\alpha}(\phi_j^{t+1})^{r+\beta}}{\sum_i \mu_i (q_i^{t+1})^{\alpha}(\phi_i^{t+1})^{r+\beta}} = \frac{\mu_j \left(\zeta_j(s_j^t)\right)^{r+\alpha+\beta} (s_j^t)^{r+\beta}}{\sum_i \mu_i \left(\zeta_i(s_i^t)\right)^{r+\alpha+\beta} (s_i^t)^{r+\beta}}.
    \]
\end{proof}

\end{document}